\newtheorem{Assumption}{Assumption}
\newtheorem{Definition}{Definition}
\newtheorem{Lemma}{Lemma}
\newtheorem{Problem}{Problem}
\newtheorem{Remark}{Remark}
\newtheorem{Theorem}{Theorem}
\newtheorem{Corollary}{Corollary}
\newtheorem{Classification}{Classification}
\newcommand{\txtblue}{\textcolor{black}}
\newcommand{\tblue}{\textcolor{black}}
\newcommand{\KO}{~\hfill\IEEEQED}
\newcommand{\bs}{\begin{small}}
\newcommand{\es}{\end{small}}
\begin{document}
\title{Closed-Form Delay-Optimal Power Control \\ for Energy Harvesting Wireless System \\ with Finite Energy Storage}

\author
{\IEEEauthorblockN{Fan Zhang, \emph{Student Member, IEEE}, Vincent K. N. Lau, \emph{Fellow, IEEE}} \\ Department of ECE,  Hong Kong University of Science and Technology, Hong Kong \\ Email: fzhangee@ust.hk, eeknlau@ust.hk}
\maketitle

 \vspace{-1cm}
\begin{abstract}
In this paper, we consider  delay-optimal  power control for an energy harvesting wireless system with finite energy storage.  The wireless system is powered solely by a renewable energy source with bursty data arrivals, and is characterized by a \emph{data queue} and an \emph{energy queue}. We consider a delay-optimal power control problem and formulate  an infinite horizon average cost Markov Decision Process (MDP). To deal with the curse of dimensionality, we introduce a virtual continuous time system and  derive closed-form approximate priority functions for the discrete time MDP at various operating regimes.  Based on the approximation,  we obtain an online power control solution which is adaptive  to the channel state information as well as the data and energy queue state information.   The derived power control solution has a \emph{multi-level water-filling} structure, where the \emph{water level} is determined jointly by the data and energy queue  lengths. We show through simulations that the proposed scheme has significant performance gain compared with various baselines.
\end{abstract}
 \vspace{-0.5cm}
\section{Introduction}
Recently, green communication has received considerable attention since it will play an important role in enhancing energy efficiency and reducing carbon emissions in future wireless networks \cite{greencom1}, \cite{greencom2}.  To support green communication, \emph{energy harvesting} techniques such as solar panels, wind turbines and thermoelectric generators  \cite{renewablesource} have become popular for enabling the transmission nodes to harvest energy from the environment.  While the renewable energy sources may appear to be virtually free and they are random in nature, energy storage is needed to buffer the unstable supply of the renewable energy \cite{huanghenergy}.  In \cite{phyener1} and \cite{phyener2}, the  authors propose transmission policies that minimize the transmission time for a given amount of data in point-to-point and broadcast  energy harvesting networks  with an infinite capacity battery. However, the infinity capacity battery assumption is not realistic in practice.  In \cite{phyener3} and \cite{phyener4}, offline power allocation policies are proposed by solving short-term throughput maximization problems under finite energy storage capacity  in a finite time horizon.  However, the above works \cite{phyener1}--\cite{phyener4} assume that the realizations of the energy arrival processes are known in advance (i.e., non-causal knowledge of future arrivals). Furthermore,  the above  proposed  policies  \cite{phyener1}--\cite{phyener4} are based on the assumption    that there are infinite data backlogs at the transmitters so that  the applications are delay-insensitive. In practice, it is very important to consider bursty data arrivals, bursty energy arrivals  and delay requirements in  designing the power control policy for  delay-sensitive applications. 

In this paper, we are interested in the online power control solution in a wireless system powered by a renewable energy source to support real-time delay-sensitive  applications. The wireless transmitter is powered solely by an energy harvesting storage with limited energy storage capacity. Unlike the previous proposed schemes  \cite{phyener1}--\cite{phyener4}, we consider an online control  policy, in the sense that we only have causal knowledge of the system states. Specifically, to support real-time applications with bursty data arrivals and bursty renewable energy arrivals, it is very important to dynamically  control the transmit power  that is adaptive to the channel state information (CSI), the data queue length (DQSI) and the energy queue length (EQSI). The CSI reveals the {\em transmission opportunities} of the time-varying physical channels. The DQSI reveals the {\em urgency of the data flows} and the EQSI reveals {\em the availability of the renewable energy}. It is highly non-trivial to strike a good balance between these factors.

Online power control  adaptive to the CSI, the DQSI and the EQSI is quite challenging because the associated optimization problem belongs to an infinite-dimensional stochastic optimization problem. There is intense research interest in exploiting renewable energy in communication network designs. In  \cite{lya}, the authors use large deviations theory to find the closed-form expression for the buffer overflow probability and design an energy-efficient scheme for maximizing the decay exponent of this probability. \tblue{In \cite{queuestable}, the authors propose   throughput-optimal control policies (in the stability sense) that are adaptive to the CSI, the DQSI and  the EQSI for a point-to-point energy harvesting network.}  In \cite{stab1} and \cite{stab2}, the authors extend the Lyapunov optimization framework to derive  energy management algorithms,  which can stabilize the data queue for energy harvesting networks with finite energy storage capacity. Note that the buffer overflow probability  and the queue stability are weak forms of delay performance, and it is of great importance to study the control policies that minimize the average delay of the queueing network.  A systematic approach in dealing with the delay-optimal  control is to formulate the problem into an Markov Decision Process (MDP) \cite{mdp1}, \cite{mdp2}. In \cite{submdp}, the authors propose several heuristic event-based adaptive transmission policies on the basis of a finite horizon MDP formulation. These solutions are suboptimal and with no performance guarantee. In \cite{huanghenergy}, the authors consider online power control for  the interference network with a renewable energy supply by solving an infinite horizon average cost MDP. \tblue{The authors in \cite{queuestable} also propose an online delay-optimal power control policy by solving an infinite horizon MDP for energy harvesting networks.}   However, the MDP problems in \cite{huanghenergy} and  \cite{queuestable} are solved using numerical iteration algorithms, such as value iteration or policy iteration algorithms (\emph{Chap. 4 in Vol. 2} of  \cite{mdp2}), which suffer from slow convergence and a lack of insight.  There are some existing works that adopt MDP/POMDP approaches to solve the stochastic resource allocation problems for energy harvesting wireless sensor networks \tblue{\cite{new1}--\cite{ref1}}. In \cite{new1}, the authors consider a simple birth-death model for the energy queue dynamics and obtain a threshold-like  data transmission scheme by maximizing an average data rate reward using the MDP approach. However, the energy queue model  considered in \cite{new1} is a simplified model and the approach therein  cannot be applied in our scenario with general energy queue dynamics. In \cite{new2}--\tblue{\cite{ref2}}, the authors propose  an efficient power control scheme to minimize the average power consumption and the packet error rate. \tblue{In \cite{ref3} and \cite{ref4}, the authors consider on-off control of the sensor to maximize the event detection efficiency or maximize the discounted weighted sum transmitted data. However, the power control actions in \cite{new2}--\cite{ref4} are chosen  from  discrete and finite action spaces.} Hence,  the approaches in \cite{new2}--\tblue{\cite{ref4}} cannot be applied to our scenario where the power control action is chosen from a continuous action space. In \cite{new4} and \cite{new5}, the authors propose a power allocation scheme for an energy harvesting sensor network with finite energy buffer capacity by solving a POMDP problem. However, they consider non-causal control, which means that the realizations of the energy arrival processes are known in advance.  \tblue{In \cite{ref1}, the authors consider general energy queue model with online causal power control schemes. However,  the stochastic MDP/POMDP   problems in \cite{new1}--\cite{ref1} are solved using  numerical value iteration or policy iteration algorithms \cite{mdp2}.} In this paper, we focus on deriving a closed-form delay-optimal online power control solution that is adaptive to the CSI, the DQSI and the EQSI. There are several first order technical challenges associated with the stochastic optimization.

\begin{itemize}
	\item	\textbf{Challenges due to the Queue-Dependent Control:}
	In order to maintain low average delay performance and efficiently use the renewable energy in a finite capacity storage, it is important to dynamically control the transmit power  based on the CSI, the DQSI and the EQSI. As a result, the underlying problem embraces both information theory (to model the physical layer dynamics) and the queueing theory (to model the data  and energy queue dynamics) and is an infinite horizon stochastic optimization \cite{mdp1}, \cite{mdp2}. Such problems are well-known to be very challenging due to the infinite-dimensional optimization (w.r.t.  control policy) and lack of closed-form characterization of the value function in the optimality equation (i.e., the Bellman equation).
	\item	\textbf{Complex Coupling between the Data Queue and the Energy Queue:}
	The service rate of the transmitter in the energy harvesting network depends on the current available energy stored in the energy queue buffer. As such, the dynamics of the data queue and the energy queue are coupled together.  The associated stochastic optimization problem is a multi-dimensional MDP \cite{delaysurvey}. To solve the associated  Bellman equation, numerical brute-force approaches (e.g., value iteration and policy iteration \cite{mdp2}) can be adopted, but they are \txtblue{not practical}  and provide no design insights.  Therefore, it is desirable  to obtain a low complexity and insightful solution for the dynamic power control in the energy harvesting system.
	\item	\textbf{Challenges due to the Finite  Energy Storage   and Non-i.i.d. Energy Arrivals:}
	In practice, the energy storage (or battery) at the transmitter has finite capacity only. The finite renewable energy storage limit   induces a difficult {\em energy availability constraint} (the energy consumption per time slot cannot exceed the available energy  in the storage) in the stochastic optimization problem.   Furthermore, in the previous literature (e.g., \cite{huanghenergy}--\cite{lya}), the bursty energy arrivals are modeled as an i.i.d. process for analytical tractability. \tblue{In \cite{queuestable}, the authors also consider  periodic stationary energy arrivals for designing the power control policies.} In practice, most of the renewable energy arrivals are not i.i.d.. Such non-iid nature will have a huge impact on the dimensioning of the battery capacity.   	
\end{itemize}

In this paper, we  model the delay-optimal  power control problem as an infinite horizon average cost MDP. Specifically,  the stochastic MDP problem is to minimize the average delay  of the transmitter  subject to the energy availability constraint. By exploiting the special structure in our problem, we derive an equivalent Bellman equation to solve the MDP. We then introduce a virtual continuous time system (VCTS) where the evolutions of the  data  and energy queues are characterized by two coupled differential equations with reflections. We show that the priority function of the associated total cost problem in the VCTS is asymptotically optimal to that  of the discrete time MDP problem when the slot duration is sufficiently small. Using the  priority function in the VCTS as an approximation to the optimal priority function, we  derive  online power control solutions and obtain design insights from the structural properties of the priority function under different asymptotic regimes. The power control solution has a {\em multi-level water-filling} structure, where the DQSI and the EQSI determine the water level via the priority function.  Finally, we compare the proposed solution with various baselines  and show that significant performance gain can be achieved.

\begin{figure}[t]
  \centering
  \includegraphics[width=3.5in]{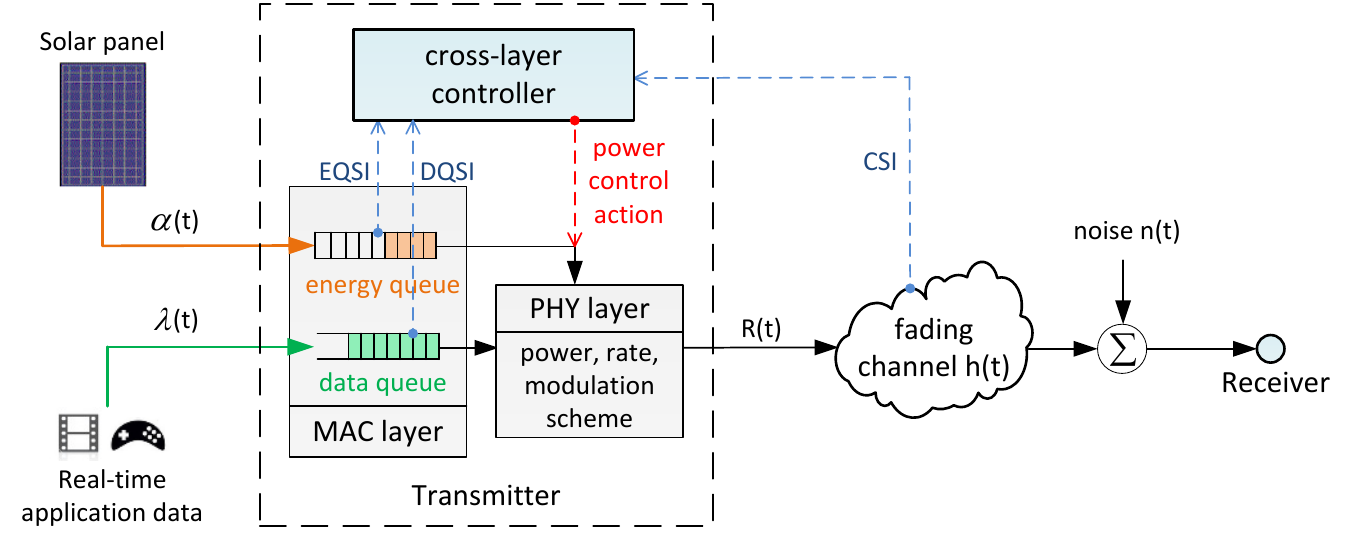}
  \caption{System model of the point-to-point energy harvesting system.}
  \label{topo}\vspace{-0.5cm}
\end{figure}

\section{System Model}
We  consider a point-to-point energy harvesting system with finite energy storage. Fig.~\ref{topo} illustrates the top-level system model, where the  transmitter is powered solely by the  energy harvesting storage with limited energy storage capacity. The transmitter acts as a \emph{cross-layer controller}, which takes the CSI, the DQSI and the EQSI as input and generates power control action as output. In this paper, the time dimension is partitioned into decision slot  \txtblue{indexed} by \txtblue{$n$ ($n=0,1,2,\dots$)} with duration $\tau$.   In the following subsections, we  elaborate on the physical layer model and the bursty data arrival model, as well as the  renewable energy arrival model.

\subsection{Physical Layer Model}
We consider a  point-to-point system as shown  in Fig.~\ref{topo}. The transmitter sends information to the receiver. Let $s$ be the transmitted information symbol and  the received signal  is given by
\begin{align}
	y = h \sqrt{p} s + z
\end{align}
where $h\in \mathbb{C}$ is the complex channel fading coefficient between the transmitter and the receiver, $p$ is the transmit power, and $z\sim \mathcal{CN}\left(0,1 \right)$ is the i.i.d. complex Gaussian additive channel noise. We have the following assumption on the channel model.
\begin{Assumption}	[Channel Model]	\label{CSIassum}	
	$h\left(\txtblue{n}\right)$ remains constant within each decision slot  and is  i.i.d. over  the slots. Specifically, we assume that $h\left(\txtblue{n}\right)$ follows a complex Gaussian distribution with zero mean and unit variance, i.e., $h\left( \txtblue{n}\right) \sim \mathcal{CN}\left(0,1 \right)$.\KO
\end{Assumption}

For given CSI realization $h$ and power control action $p$,  the achievable data rate (bit/s/Hz) for the transmitter-receiver pair is given by 
\begin{equation}		\label{rate1}
	R\left(h, p  \right) =  \log \left( 1+\txtblue{\zeta} p \left|h\right|^2 \right) 
\end{equation}
\txtblue{where $\zeta \in (0, 1]$ is a constant that is determined by the modulation and coding scheme (MCS) used in the system. For example, $\zeta=0.5$ for QAM constellation at BER= 1\% \cite{constellation} and $\zeta = 1$ for capacity-achieving coding (in which (\ref{rate1}) corresponds to the instantaneous mutual information). In this paper, our derived results are based on $\zeta=1$ for  simplicity, which can be easily extended to other MCS cases.}

\subsection{Bursty Data Source Model and Data Queue Dynamics}
As illustrated in Fig.~\ref{topo}, the  transmitter maintains a data queue for the bursty traffic flow towards the  receiver. Let  $\lambda\left(\txtblue{n}\right)\tau$ be the random new data arrival (bits)  at the end of the $\txtblue{n}$-th  decision slot at the transmitter. We have the following assumption on the data arrival process.
\begin{Assumption} [Bursty Data Source Model]	\label{assumeA}
	The data arrival process $\lambda \left(\txtblue{n}\right)$  is i.i.d. over the slots according to a general distribution $\Pr[\lambda]$ with finite average arrival rate $\mathbb{E}\left[\lambda \right]=\overline{\lambda}$.\KO
\end{Assumption}

Let $Q\left( \txtblue{n}\right) \in \mathcal{Q}$  denote the DQSI (bits) at the data queue of the transmitter at the beginning of the $\txtblue{n}$-th slot, where $\mathcal{Q}=[0, \infty)$ is the DQSI state space.  We assume that the transmitter is causal in the sense that  new data arrivals are observed after the  control actions are performed at each decision slot. Hence, the data queue dynamics  is given by
\begin{align}		\label{dataQ}
	Q\left(\txtblue{n}+1 \right) = \left[ Q\left(\txtblue{n}\right) -  R\left(h \left(\txtblue{n} \right), p \left(\txtblue{n}\right)  \right) \tau  \right]^+ + \lambda \left(\txtblue{n}\right) \tau
\end{align}	
where $x^+ \triangleq \max \left\{x, 0 \right\}$. \tblue{Note that $p(n)$ is transmit power of the transmitter  at time slot $n$ and the power  solely comes from the renewable energy source.} We shall define the renewable energy source model in the next subsection.

\subsection{Renewable Energy Source Model and Energy Queue Dynamics}
The power of the transmitter solely comes from the renewable energy source.  Specifically, the transmitter is  capable of harvesting energy from the environment, e.g., using solar panels, wind turbines and thermoelectric generators  \cite{renewablesource}. We assume that  the energy arrival process is block i.i.d. with block size $N$. The block i.i.d. energy arrival model  is used to take into account that the energy arrival process evolves at a different timescale w.r.t. that  of the data arrival process.   Let $\alpha\left(\txtblue{n}\right) \tau$ be the random renewable energy arrival (Joules) at the end of the $\txtblue{n}$-th  decision slot at the transmitter. We have the following assumption on the energy arrival process.

\begin{Assumption} [Block i.i.d. Renewable Energy Source Model]	\label{assumeAr}
	The energy arrival process $\alpha \left(n\right)$   is block  i.i.d. in the sense that $\alpha (\txtblue{n})$ is  constant\footnote{Specifically, $\alpha(\txtblue{n})$ is  constant when $kN\leq t <(k+1)N$ for any given $t$, where $k$ is a  positive integer.} for a block of $N$ slots and is i.i.d. between blocks  according to a general distribution $\Pr[\alpha]$ with finite average energy arrival rate $\mathbb{E}\left[\alpha \right]=\overline{\alpha}$.\KO
\end{Assumption}

Due to the random nature of the renewable energy, there is limited  energy storage capacity at the transmitter  to buffer the renewable energy arrivals.  Let $E\left(\txtblue{n} \right) \in \mathcal{E}$  denote the EQSI (Joules) at the beginning of the $\txtblue{n}$-th slot, where $\mathcal{E}=\left[0, N_E\right]$ is the EQSI state space and $N_E$ denotes the energy queue buffer size (i.e., energy storage capacity in Joules). 
\txtblue{\begin{Remark}	[Discussions  on the Finite Energy Queue Capacity]
	High-capacity renewable energy storage is very expensive \cite{r2ref2} and energy storage is one key cost component in renewable energy systems. As such, it is very important to consider the impact on how the finite renewable energy buffer affects the system performance. The analysis also serves as the first order dimensioning on how large an energy buffer is needed. ~\hfill~\IEEEQED
\end{Remark}}

Note that when the energy buffer is full, i.e., $E(\txtblue{n})=N_E$, additional energy cannot be harvested. Similarly, we assume that the transmitter  is causal so that the renewable energy arrival $E\left(\txtblue{n}\right)$ is observed only after the power actions. Hence, the  energy queue dynamics at the transmitter is given by
\begin{align}		\label{energyQ}
	E\left(\txtblue{n}+1 \right) = \min \big\{ E\left(\txtblue{n}\right) -   p \left(\txtblue{n} \right)   \tau + \alpha\left(\txtblue{n}\right)  \tau, N_E  \big\}
\end{align}	
where the renewable power consumption $p\left(\txtblue{n}\right)$ must  satisfy the following \emph{energy availability constraint}:	
\begin{align}	\label{eneavlcon}
	 p \left(\txtblue{n}  \right)   \tau \leq E(\txtblue{n} )
\end{align}
The energy  availability constraint means that the energy consumption at each time slot  cannot exceed the current available energy in the energy storage. Due to this constraint, the energy queue $E(\txtblue{n})$ in (\ref{energyQ}) will not go below zero (i.e., $E(\txtblue{n})\geq 0$ for all $\txtblue{n}$).  
\begin{Remark} \emph{(Coupling Property of  Data Queue  and Energy  Queue)} 	\label{coup_rem}
	 The   data queue dynamics  in (\ref{dataQ}) and the energy queue dynamics  in (\ref{energyQ}) are coupled together.  Specifically, the service rate $R\left(\txtblue{n} \right)$ in the data queue depends on the power control action $p\left(\txtblue{n} \right)$, which solely comes  from the energy queue buffer.\KO
\end{Remark}

\section{Delay-Optimal Problem Formulation}

In this section, we  formally define  the  power  control policy  and formulate the delay-optimal control problem for the point-to-point energy harvesting system.
\subsection{Power  Control Policy}
For notation convenience, denote $\boldsymbol{\chi}(n)=\left(h(n), Q(n), E(n)\right)$. Let $\mathcal{F}(n)=\sigma\big(\big\{\boldsymbol{\chi}(i):0\leq i\leq n\big\}\big)$ be the minimal $\sigma$-algebra containing the set $\big\{\boldsymbol{\chi}(i):0\leq i\leq n\big\}$, and $\big\{\mathcal{F}(n)\big\}$ be the associated filtration \cite{probabilityfilt}. At the beginning of  the $n$-th slot, the transmitter  determines the  power  control action based on the   following   control policy:
\begin{Definition} [Power  Control Policy]	\label{deff1}
	A   power   control policy for the transmitter  $\Omega$ is $\mathcal{F}(n)$-adapted at each time slot $n$, meaning that  the  power    control action $p(n)$ is adaptive to all the information $\boldsymbol{\chi}(i)$ up to tome $n$ (i.e., $\big\{\boldsymbol{\chi}(i):0\leq i\leq n\big\}$). Furthermore, the power control policy $\Omega$ satisfies the energy availability constraint in (\ref{eneavlcon}), i.e., $p \left(n\right)\tau \leq E(n)$ ($\forall n$).~\hfill\IEEEQED
\end{Definition}

Given a  control policy $\Omega$, the  random process $\left\{\boldsymbol{\chi}\left(n \right)\right\}$ is a controlled Markov chain with the following transition probability:
\begin{align}	
	  &\Pr\big[ \boldsymbol{\chi}\left(n+1 \right) \big| \boldsymbol{\chi}\left(n \right),  \Omega\big(\boldsymbol{\chi}\left(n \right) \big)\big] \notag \\
	 =& \Pr \big[h\left(n+1 \right) \big] \Pr \big[ Q \left(n+1\right) \big| Q\left(n \right), h(n),\Omega\big(\boldsymbol{\chi}\left(n \right) \big) \big] \notag \\
	 &\cdot \Pr \big[ E \left(n+1\right) \big| E\left(n \right),\Omega\big(\boldsymbol{\chi}\left(n \right) \big) \big]\label{trankernel}
\end{align}
where $ \Pr \big[ Q \left(n+1\right) \big| Q\left(n \right), h(n),\Omega\big(\boldsymbol{\chi}\left(n \right) \big) \big]$ is the data queue transition probability which is given by
\begin{align}	
	 &\Pr \big[ Q \left(n+1\right)=Q' \big| Q\left(n \right)=Q, h(n)=h,\Omega\big(\boldsymbol{\chi}\left(n \right) \big)=p \big]\notag \\
	 =&\left\{\begin{aligned}
	 &\Pr\left[\lambda(n)\right],\hspace{1cm}\text{if }Q'= \left[ Q -  R\left(h, p   \right) \tau  \right]^+ +\lambda(n)\tau\\
	 &0, \hspace{2.3cm}\text{otherwise} 
	 \end{aligned}\right.
  \end{align}
and  $ \Pr \big[ E \left(n+1\right) \big| E\left(n \right),\Omega\big(\boldsymbol{\chi}\left(n \right) \big) \big]$ is the energy queue transition probability which is given by
\begin{align}	
	 &\Pr \big[ E \left(n+1\right)=E' \big| E\left(n \right)=E,\Omega\big(\boldsymbol{\chi}\left(n \right) \big)=p \big]\notag \\
	 =&\left\{\begin{aligned}
	 &\Pr\left[\alpha\left(n\right)  \right],\hspace{1cm}\text{if }E'= \min \big\{ E -   p\tau + \alpha\left(n\right)  \tau, N_E  \big\}\\
	 &0, \hspace{2.4cm}\text{otherwise} 
	 \end{aligned}\right.
  \end{align}
Furthermore, we have the following definition on the admissible  control policy:
\begin{Definition}	[{Admissible Control Policy}]	\label{adddtdomain}
	{A policy $\Omega$ is  admissible if the following requirements are satisfied:}
	\begin{itemize}
		\item $\Omega$ is a unichain policy, i.e., the controlled Markov chain $\left\{\boldsymbol{\chi}\left(n \right)\right\}$ under $\Omega$ has a single recurrent class (and possibly some transient states) \cite{mdp2}.
		\item {The queueing system under $\Omega$ is stable in the sense that   $\lim_{n \rightarrow \infty} \mathbb{E}^{\Omega} \big[  Q^2(n)  \big] < \infty$, where $\mathbb{E}^{\Omega} $ means taking expectation w.r.t. the probability measure induced by the control policy $\Omega$. }~\hfill\IEEEQED
	\end{itemize}
\end{Definition}

\subsection{Problem Formulation}
As a result, under {an admissible}  control policy $\Omega$, the average delay cost of the energy harvesting system starting from a given initial  state $\boldsymbol{\chi}\left(0\right)$ is given by
\begin{align}	\label{delay_cost}
	\overline{D}\left(\Omega\right)  = \limsup_{N \rightarrow \infty} \frac{1}{N} \sum_{n=0}^{N-1} \mathbb{E}^{\Omega} \left[\frac{Q\left(n\right)}{\overline{\lambda}} \right]	
\end{align}	

We consider the following delay-optimal power control optimization for the energy harvesting system:
\begin{Problem}	[Delay-Optimal Power Control Optimization]  \label{IHAC_MDP}
	\begin{eqnarray} 
	\underset{\Omega}{\min} 	\ \overline{D}\left(\Omega\right)
\end{eqnarray}
where $\Omega$ satisfies the energy availability constraint according to Definition \ref{deff1}.~\hfill\IEEEQED
\end{Problem}

\subsection{Optimality Conditions}
While the MDP in Problem \ref{IHAC_MDP} is  difficult in general, we utilize  the i.i.d. assumption of the CSI to derive an \emph{equivalent optimality equation} as summarized below.
\begin{Theorem} [Sufficient Conditions for Optimality]	\label{LemBel}
	Assume there exists a ($\theta^\ast, \{ V^\ast\left(Q,E \right) \}$) that solves the following \emph{equivalent optimality equation}:
	\begin{align}	\label{OrgBel}
		 &\theta^\ast  + V^\ast \left(Q,E  \right)  \qquad \forall Q,E 	 \\
		 =&  \mathbb{E} \bigg[\min_{p<E/\tau}\Big[ \frac{Q}{\overline{\lambda}}  +  \sum_{Q',E'}\Pr \big[ Q',E '\big| \boldsymbol{\chi}, p\big]V^\ast \left(Q',E'\right) \Big]   \bigg| Q,E  \bigg] \notag
	\end{align}
	Furthermore,align all  admissible control policy $\Omega$ and initial queue state $\left(Q \left(0 \right),E \left(0 \right)\right)$, $V^\ast$ satisfies the following \emph{transversality condition}:
	\begin{align}	\label{transodts}
	\lim_{N \rightarrow \infty} \frac{1}{N}\mathbb{E}^{\Omega}\left[ V^\ast\left(Q\left(N \right),E \left(N \right) \right) |Q\left(0 \right),E \left(0 \right)\right]=0
\end{align}
	Then, we have the following results:
	\begin{itemize}
		\item $\theta^\ast= \underset{\Omega}{\min} 	\ \overline{D}\left(\Omega\right)$ is the optimal average cost for any initial state  $\boldsymbol{\chi}\left(0 \right) $ and $V^\ast\left(Q,E \right)$ is called the \emph{priority function}.
		\item  Suppose there exists an admissible stationary control policy $\Omega^*$ with $\Omega^*\left(\boldsymbol{\chi} \right) = p^\ast$ for any $\boldsymbol{\chi} $, where $p^\ast$ attains the minimum of the R.H.S. of (\ref{OrgBel}) for given $\boldsymbol{\chi}$. Then,  the optimal control policy of Problem \ref{IHAC_MDP} is given by $\Omega^*$.~\hfill\IEEEQED
	\end{itemize}	
\end{Theorem}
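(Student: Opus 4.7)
The plan is to follow the standard verification argument for infinite-horizon average-cost MDPs, adapted to the equivalent Bellman equation in (\ref{OrgBel}). The key feature of (\ref{OrgBel}) is that the priority function depends only on $(Q,E)$ rather than on the full state $\boldsymbol{\chi}=(h,Q,E)$: since $h(n)$ is i.i.d.\ across slots by Assumption~\ref{CSIassum}, the channel has been ``averaged out'' on the right-hand side, and the inner minimisation over $p<E/\tau$ is parameterised by the realisation of $h$. This makes the proof amount to a telescoping-sum argument over the reduced state $(Q,E)$.

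First, I would establish a one-step inequality. Fix any admissible policy $\Omega$. At slot $n$, the policy chooses $p(n)=\Omega(\boldsymbol{\chi}(n))$ subject to $p(n)\tau\le E(n)$, so in particular $p(n)$ is a feasible choice inside the min on the right-hand side of (\ref{OrgBel}). Hence
\begin{align}
\theta^\ast + V^\ast(Q(n),E(n)) \le \mathbb{E}^{\Omega}\!\left[\tfrac{Q(n)}{\overline{\lambda}} + V^\ast(Q(n+1),E(n+1)) \,\Big|\, Q(n),E(n)\right], \notag
\end{align}
where the conditional expectation averages over $h(n)$, $\lambda(n)$ and $\alpha(n)$ using the kernel (\ref{trankernel}). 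Taking unconditional expectation and telescoping from $n=0$ to $N-1$ gives
\begin{align}
N\theta^\ast + \mathbb{E}^{\Omega}[V^\ast(Q(0),E(0))] \le \sum_{n=0}^{N-1} \mathbb{E}^{\Omega}\!\left[\tfrac{Q(n)}{\overline{\lambda}}\right] + \mathbb{E}^{\Omega}[V^\ast(Q(N),E(N))] . \notag
\end{align}

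Next I would divide by $N$, take $\limsup_{N\to\infty}$, and invoke the transversality condition (\ref{transodts}) to kill the boundary term, obtaining $\theta^\ast \le \overline{D}(\Omega)$. Since $\Omega$ was an arbitrary admissible policy, $\theta^\ast$ is a lower bound on the optimal average cost. For the second bullet, I would repeat the same chain with $\Omega^\ast$, noting that by construction $p^\ast(\boldsymbol{\chi})$ attains the min in (\ref{OrgBel}) pointwise, so the one-step inequality becomes an equality. After telescoping and dividing by $N$, transversality again eliminates the boundary term and yields $\overline{D}(\Omega^\ast)=\theta^\ast$, proving both that $\Omega^\ast$ is optimal and that $\theta^\ast$ is the optimal value.

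The main obstacle I anticipate is the transversality/interchange step rather than the algebra: I must justify that, under every admissible $\Omega$, $N^{-1}\mathbb{E}^{\Omega}[V^\ast(Q(N),E(N))]\to 0$, and that the telescoped conditional expectations can indeed be unconditioned using Fubini. The first is exactly the hypothesis (\ref{transodts}), but its compatibility with the admissibility condition $\lim_n \mathbb{E}^{\Omega}[Q^2(n)]<\infty$ (Definition~\ref{adddtdomain}) requires that any candidate $V^\ast$ grow at most quadratically in $Q$; I would flag this growth condition as what actually links Definition~\ref{adddtdomain} to (\ref{transodts}). A secondary subtlety is admissibility of $\Omega^\ast$ itself: to conclude $\overline{D}(\Omega^\ast)=\theta^\ast$ one needs $\Omega^\ast$ to be unichain and to stabilise the data queue, which is not automatic from (\ref{OrgBel}) and would typically be verified a posteriori using the closed-form structure of $V^\ast$ obtained from the VCTS approximation in later sections.
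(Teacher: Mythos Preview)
Your proposal is correct and is the standard verification (telescoping-sum) argument for average-cost MDPs, applied directly to the reduced equation~(\ref{OrgBel}). The paper's own proof takes a slightly different route: rather than verifying~(\ref{OrgBel}) from scratch, it first invokes Proposition~4.6.1 of \cite{mdp2} to assert that the \emph{full-state} Bellman equation (in terms of $V^\ast(\boldsymbol{\chi})=V^\ast(h,Q,E)$, together with the transversality condition) is sufficient for optimality, and then \emph{derives} the reduced equation~(\ref{OrgBel}) by defining $V^\ast(Q,E)\triangleq\mathbb{E}[V^\ast(\boldsymbol{\chi})\mid Q,E]$ and taking the conditional expectation over the i.i.d.\ channel $h$ on both sides. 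So the paper treats~(\ref{OrgBel}) as a consequence of the standard optimality equation, whereas you treat it as the primitive to be verified. Your approach is more self-contained and does not need the intermediate full-state value function; the paper's approach makes transparent why the dimensional reduction from $\boldsymbol{\chi}$ to $(Q,E)$ is legitimate. Substantively the two coincide, since the telescoping argument you spell out is exactly what the cited Bertsekas result encapsulates, and your discussion of the growth condition linking Definition~\ref{adddtdomain} to~(\ref{transodts}) and of the a~posteriori admissibility check for $\Omega^\ast$ matches how the paper handles these points in later sections.
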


\begin{proof}
	please refer to Appendix A.
\end{proof}

Based on the unichain assumption of the control policy in Definition \ref{adddtdomain}, there is a unique solution for the Bellman equation in (\ref{OrgBel}) and the transversality condition in (\ref{transodts}). The solution $V^\ast \left(Q,E\right)$ captures the dynamic priority of the data flow for different  $(Q,E)$. However, obtaining the priority function $V^\ast \left(Q,E\right)$ is highly non-trivial as it involves solving a large system of nonlinear fixed point equations. Brute-force approaches (such as value iteration and policy iteration \cite{mdp2})) have huge complexity. 

\vspace{-0.1cm}
\framebox{\begin{minipage}[t]{0.95\columnwidth}
	
	 	\vspace{0.1cm}  \emph{Challenge 1: }Huge complexity in obtaining  the priority function $V^\ast \left(Q,E\right)$.\vspace{0.1cm} 
	
\end{minipage}}

\section{Virtual Continuous Time System and Approximate Priority  Function}
In this section, we  adopt a continuous time approach so that we can exploit calculus techniques and theories of differential equations   to obtain a closed-form approximate  priority function. Specifically, we first reverse-engineer a virtual continuous time system (VCTS) and an associated total cost problem in the VCTS. We show that the optimality conditions of the VCTS is equivalent to that of the original MDP (up to $o(\tau)$ order optimal). Based on that, we exploit calculus techniques and theories of differential equations  to obtain a closed-form characterization of the priority function $V^\ast(Q,E)$.

\subsection{Virtual Continuous Time System} \label{VCTS}
We first define the VCTS, which is a fictitious system with a continuous virtual queue state $\left(q(t),e(t)\right)$,  where  $q\left(t\right) \in [0,\infty)$ and $e\left(t\right) \in [0,N_E)$ are the virtual data queue state and virtual energy queue   state at time $t$ ($t\in[0,\infty)$).  

Let $\Omega^v$ be the virtual power control policy  of the VCTS. Similarly, $\Omega^v$ is $\mathcal{F}_t^v$-adapted, where  $\mathcal{F}_t^v=\sigma\big(\big\{h(s),q(s),e(s): 0<s<t\big\}\big)$ and $\big\{\mathcal{F}_t^v\big\}$ is the filtration of the VCTS.    Furthermore, the virtual power control policy $\Omega^v$ satisfies the virtual energy availability constraint, i.e., $p(t) \tau \leq e(t)$ ($\forall t$). Given an initial virtual system state $\left(q_0,e_0\right)$  and a virtual policy $\Omega^v$,  the  trajectory of the  virtual queueing system is  described by the following  coupled differential equations with reflections:
\begin{align}
		&\mathrm{d} q \left(t \right)  = \left(- \mathbb{E}\left[  R\left( h \left(t \right), p \left(t \right) \right) \big|q \left(t \right) , e \left(t \right) \right] +\overline{\lambda} \ \right) \tau\mathrm{d}t + \mathrm{d} L \left(t \right)	\label{queue1}	\\
		&\mathrm{d} e \left(t \right)  = \big(-\mathbb{E}\left[  p\left(t\right) \big|q \left(t \right) , e \left(t \right)\right] +\overline{\alpha} \ \big)\tau \mathrm{d}t - \mathrm{d} U \left(t \right)		\label{queue2}
	\end{align}
	where $L \left(t \right)$ and $U \left(t \right)$ are the \emph{reflection processes}\footnote{$L \left(t \right)$ and $U \left(t \right)$  are non-decreasing and minimal subject to the constraint that $q \left(t \right) \geq 0$ and $e \left(t \right) \leq N_E$, respectively \cite{bbmsf}.} associated with the lower data queue boundary $q (t)=0$ and upper energy queue boundary $e (t)=N_E$, which are uniquely determined by the following equations \txtblue{(\emph{Chap. 2.4} of \cite{bbmsf})}:
\begin{align}	
	& L \left(t \right) = \max \bigg\{0, - \min_{t' \leq t} \bigg[q_0	\label{lowref} \\
	&+ \int_0^{t'}\left(- \mathbb{E}\left[  R\left(h \left(s \right), p \left(s \right)\right) \big|q \left(s \right) , e \left(s \right)\right] +\overline{\lambda} \ \right) \tau \mathrm{d}s \bigg] \bigg\}  \notag \\
	& U \left(t \right) = \max \bigg\{0,  \max_{t' \leq t} \bigg[e_0 \label{uppref} \\
	&+ \int_0^{t'} \big(-\mathbb{E}\left[  p\left(s\right)\big|q \left(s \right) , e \left(s \right) \right] +\overline{\alpha} \ \big)\tau\mathrm{d}s \bigg]- N_E \bigg\}	\notag  	
\end{align}
with $L \left(0 \right)=U \left(0 \right)=0$.

\begin{figure}[t]
\centering
\subfigure[Trajectories of $\{q (t), L (t)\}$.]{
\includegraphics[width=2.5in,angle=90]{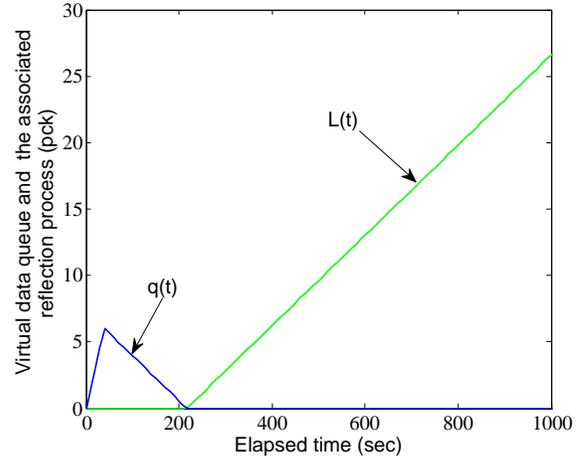}}
\hspace{-0.5cm}	\centering
\subfigure[Trajectories of $\{e (t), U (t)\}$.]{
\includegraphics[width=3.1in]{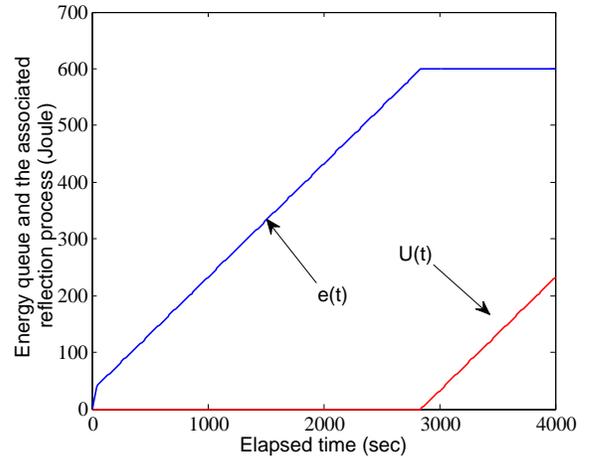}}
\caption{The system parameters are configured as follows: $\tau=0.1$ s, $\overline{\lambda}=1.5$ pcks/s, $\overline{\alpha}=10$ W, $N_E=600$ J. The virtual control policy ${\Omega^v}$ is $p=0$ W when $e<3.5$ J,  and $p=8$ W if $e>40$ J.}\vspace{-0.5cm}
\label{egillu}	
\end{figure}

Note that the process $L \left(t \right)$ ensures that the virtual  data queue length $q \left(t \right)$ will not go below zero. The process   $U \left(t \right)$ together with the virtual energy availability constraint  ensures that  the virtual energy queue length  lies in the domain $[0, N_E]$.  Fig.~\ref{egillu} illustrates\footnote{According to \cite{energymetric},  commercial solar panels usually provide 1$\sim$10 mW/cm$^2$ energy harvesting performance. We assume that  the wireless transmitter (e.g., base station)  is equipped with a 20cm$\times$50cm solar panel. Therefore, it has at most 10W energy harvesting capability.} an example of the trajectories of $\left\{q (t), L (t)\right\}$ and $\left\{e (t), U (t)\right\}$ for a virtual policy $\Omega^v$.

 Furthermore, we have the following definition on the admissible  virtual control policy for the VCTS. 
\begin{Definition}	[Admissible Virtual Control Policy for VCTS]		\label{addvctss}
	A virtual policy $\Omega^v$ for the VCTS is  admissible if  the following requirements are satisfied:
	\begin{itemize}
		\item  For any  initial virtual queue state $\left(q_0,e_0\right)$, the virtual queue trajectory $\big(q \left(t \right),e \left(t \right)\big)$ in (\ref{queue1}) and (\ref{queue2}) under $\Omega^v$ is unique.
		\item For any  initial virtual queue state $\left(q_0,e_0\right)$, the total cost $\txtblue{\int_0^{\infty} q \left(t \right)   \mathrm{d}t}$ under $\Omega^v$ is bounded.~\hfill\IEEEQED
	\end{itemize}
\end{Definition}

\subsection{Total Cost Problem under the VCTS}
Given an admissible  virtual control policy $\Omega^v$, we define the total cost of the VCTS starting from a given  initial virtual queue state $\left(q_0,e_0\right)$  as
\begin{equation}		\label{totalU}
	V \left(q_0,e_0;{\Omega^v}\right)  = \txtblue{\int_0^{\infty} q\left(t \right)   \mathrm{d}t}
\end{equation}

We consider the following  infinite horizon total cost problem for the VCTS:
\begin{Problem}		[Infinite Horizon Total Cost Problem for  VCTS]  \label{fluid problem1}
For any initial  virtual queue state  $\left(q_0,e_0\right)$, the infinite horizon total cost problem for the VCTS is formulated as
	\begin{align}
		\min_{\Omega^v} V\left(q_0,e_0;{\Omega^v}\right) 
	\end{align}
	where $V \left(q_0,e_0;{\Omega^v}\right) $ is given in (\ref{totalU}).~\hfill\IEEEQED
\end{Problem}

Note that the  two technical conditions in Definition \ref{addvctss}  on  the  admissible virtual policy  are for the existence of an optimal policy for the total cost problem in Problem \ref{fluid problem1}.  The above total cost problem has been well-studied in the continuous time optimal control theory \txtblue{(\emph{Chap. 2.6} of  \cite{KRbook})}. The solution can be obtained by solving the \emph{Hamilton-Jacobi-Bellman} (HJB) equation as  below. 
\begin{Theorem}	[Sufficient Conditions for Optimality under VCTS]	\label{HJB11}
	Assume there exists a function  $V\left( q, e\right)$ that is of class\footnote{$f(\mathbf{x})$ ($\mathbf{x}$ is a $K$-dim vector) is of class $\mathcal{C}^1(\mathbb{R}_+^K)$, if the  first order partial derivatives  w.r.t. each element of $\mathbf{x}\in \mathbb{R}_+^K$ are  continuous. } $\mathcal{C}^1(\mathbb{R}_+^2)$, and   $V\left( q, e\right)$ satisfies  the following HJB equation:
	\begin{align}	\label{cenHJB}
		&\min_{p \leq e/\tau} \ \mathbb{E}\left[\frac{q}{\overline{\lambda}\tau}+   \frac{\partial V\left( q, e\right)}{\partial q} \left(-   R\left( h , p\right) +\overline{\lambda} \ \right)  \right. \\
		&\left.\hspace{2cm} + \frac{\partial V\left( q, e\right)}{\partial e} \big(-p +\overline{\alpha} \ \big)  \bigg| q,e \right]=0 \quad \forall q, e \notag 
	\end{align}
Furthermore, for all admissible virtual control policy $\Omega^v$ and initial virtual queue state $\left(q_0,e_0\right)$, the following conditions are satisfied:
\txtblue{\begin{align}
\left\{
	\begin{aligned}	 \label{trankern}
		& \limsup_{T \rightarrow \infty } \int_0^T \frac{\partial V\left( 0, e\left(t \right)\right)}{\partial q} L\left(t \right) \mathrm{d}t  = 0	\\
		& \limsup_{T \rightarrow \infty } \int_0^T \frac{\partial V\left( q\left(t \right), N_E\right)}{\partial e} U\left(t \right) \mathrm{d}t  = 0	\\
		& \limsup_{T \rightarrow \infty } V\left(q\left(T \right), e\left(T \right) \right)= 0
	   \end{aligned}
   \right.
\end{align}}Then, we have the following results:
\begin{itemize}
	\item	$V\left(q,e\right) =\min_{\Omega^v}  V \left(q_0,e_0;{\Omega^v}\right) $ is the optimal total cost when $(q_0,e_0)=(q,e)$ and $V\left(q,e\right)$ is called the \emph{virtual priority function}.
	\item 	 Suppose there exists an admissible virtual stationary control policy   $\Omega^{v \ast}$ with $\Omega^{v \ast}(h,q,e)=p^\ast$ for any  $(h,q,e)$, where  $p^\ast$ attains the minimum of the L.H.S. of (\ref{cenHJB}) for given  $(h,q,e)$.   Then,  the optimal control policy of Problem \ref{fluid problem1} is given by $\Omega^{v \ast}$.\KO
\end{itemize}
\end{Theorem}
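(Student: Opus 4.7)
The plan is a standard verification-theorem argument for continuous-time optimal control, adapted to the reflected deterministic (expectation) dynamics (\ref{queue1})--(\ref{queue2}). Fix any admissible virtual policy $\Omega^v$ with associated trajectory $(q(t),e(t))$ starting at $(q_0,e_0)$. Since $V\in\mathcal{C}^1(\mathbb{R}_+^2)$, the fundamental theorem of calculus applied to $t\mapsto V(q(t),e(t))$, combined with (\ref{queue1})--(\ref{queue2}) and the Skorohod-type characterization of $L,U$ in (\ref{lowref})--(\ref{uppref}), yields for every finite $T$
\begin{align*}
V(q(T),e(T))-V(q_0,e_0)
=&\int_0^T\!\left[\tfrac{\partial V}{\partial q}\big(-\mathbb{E}[R(h,p)\mid q,e]+\overline{\lambda}\big)+\tfrac{\partial V}{\partial e}\big(-\mathbb{E}[p\mid q,e]+\overline{\alpha}\big)\right]\tau\,\mathrm{d}t\\
&+\int_0^T \tfrac{\partial V}{\partial q}(0,e(t))\,\mathrm{d}L(t)-\int_0^T \tfrac{\partial V}{\partial e}(q(t),N_E)\,\mathrm{d}U(t),
\end{align*}
where the two reflection integrals have already been simplified using that $L$ (resp.\ $U$) grows only on $\{t:q(t)=0\}$ (resp.\ $\{t:e(t)=N_E\}$), so $\partial V/\partial q$ and $\partial V/\partial e$ may be evaluated at the corresponding boundary.

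Next I would invoke the HJB equation (\ref{cenHJB}) pointwise along the trajectory: for any admissible $p$ with $p\tau\leq e$, the bracketed conditional-expectation term above is bounded below by $-q(t)/(\overline{\lambda}\tau)$, with equality exactly when $p(t)=p^\ast(h(t),q(t),e(t))$ attains the pointwise minimizer. Multiplying through by $\tau$ and substituting gives the pathwise lower bound
\begin{align*}
V(q(T),e(T))-V(q_0,e_0)\,\ge\,-\int_0^T\tfrac{q(t)}{\overline{\lambda}}\,\mathrm{d}t+\int_0^T\tfrac{\partial V}{\partial q}(0,e(t))\,\mathrm{d}L(t)-\int_0^T\tfrac{\partial V}{\partial e}(q(t),N_E)\,\mathrm{d}U(t).
\end{align*}
Rearranging, letting $T\to\infty$, and invoking the three transversality conditions in (\ref{trankern})---the third kills $V(q(T),e(T))$, while an integration-by-parts step converts the $\mathrm{d}L,\mathrm{d}U$ integrals into the $L(t)\,\mathrm{d}t,U(t)\,\mathrm{d}t$ forms, which then vanish by the first two---produces $V(q_0,e_0)\le V(q_0,e_0;\Omega^v)$ (up to the overall scaling between the HJB running cost $q/(\overline{\lambda}\tau)$ and the total cost (\ref{totalU})). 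Because $\Omega^{v\ast}$ makes every inequality above sharp, the bound is attained, delivering both conclusions of the theorem.

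The main technical obstacle I anticipate is the rigorous handling of the reflection terms, which has three sub-parts: (i) justifying the generalized chain-rule identity for $V(q(t),e(t))$ along a trajectory with singular boundary increments $\mathrm{d}L,\mathrm{d}U$, which relies on the $\mathcal{C}^1$ regularity of $V$ combined with the minimality property in (\ref{lowref})--(\ref{uppref}); (ii) collapsing $\frac{\partial V}{\partial q}(q(t),e(t))\,\mathrm{d}L(t)$ to $\frac{\partial V}{\partial q}(0,e(t))\,\mathrm{d}L(t)$ via the support property of $L$ and the analogous step for $U$; and (iii) the integration-by-parts manipulation that converts $\mathrm{d}L,\mathrm{d}U$ integrals into the $L(t)\,\mathrm{d}t,U(t)\,\mathrm{d}t$ integrals appearing in (\ref{trankern}), which is what licenses the use of the stated transversality conditions. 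The admissibility requirements in Definition~\ref{addvctss} (unique trajectory and bounded total cost) are what permit the $T\to\infty$ limit and the interchange of limit and integration by dominated convergence.
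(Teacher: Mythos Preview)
Your approach mirrors the paper's verification argument in Appendix~B essentially step for step: differentiate $V$ along the reflected trajectory, use the support property of $L,U$ to localize the boundary integrands at $q=0$ and $e=N_E$, invoke the HJB inequality pointwise to bound the drift term by $-q(t)/\overline{\lambda}$, rearrange, and pass to the limit via the transversality conditions, with equality under $\Omega^{v\ast}$.

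One remark on your item~(iii): the paper does \emph{not} carry out any integration-by-parts conversion. It directly asserts that the conditions in (\ref{trankern}) yield $\limsup_{T}\int_0^T V_q(0,e(t))\,\mathrm{d}L(t)=0$ and the analogous statement for $U$, in effect reading the transversality hypotheses as if stated in Stieltjes form $\mathrm{d}L,\mathrm{d}U$ rather than $L(t)\,\mathrm{d}t,U(t)\,\mathrm{d}t$. The discrepancy you flagged is real, but it is a notational slip in the theorem statement rather than a missing step in the argument; moreover, a direct integration by parts of $\int_0^T V_q(0,e(t))\,\mathrm{d}L(t)$ produces a term $V_q(0,e(T))L(T)-\int_0^T L(t)\,\mathrm{d}[V_q(0,e(t))]$, which is not the $\int_0^T V_q(0,e(t))\,L(t)\,\mathrm{d}t$ form appearing in (\ref{trankern}). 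The clean route is simply to take the boundary hypotheses in the $\mathrm{d}L,\mathrm{d}U$ form that the proof actually needs, as the paper implicitly does. Your parenthetical about the $1/\overline{\lambda}$ scaling between the HJB running cost and the total cost (\ref{totalU}) is likewise a genuine inconsistency in the paper, not an error on your part.
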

\begin{proof}
	Please refer to Appendix B.
\end{proof}

In the following theorem, we establish the relationship between the virtual priority  function $V(Q,E)$ in Theorem \ref{HJB11}  and the optimal  priority  function $V^\ast(Q,E)$ in Theorem \ref{LemBel}.
\begin{Theorem}	[Relationship  between  $V(Q,E)$ and  $V^\ast(Q,E)$]	\label{them111}
	If $V(Q,E)=\mathcal{O}\left(Q^2 \right)$ and $\Omega^{v \ast}$ is admissible in the discrete time system, then  $V^\ast\left(Q,E \right)=V\left(Q,E \right)+o(\tau)$.~\hfill\IEEEQED
\end{Theorem}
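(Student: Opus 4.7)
The plan is to verify that $V(Q,E)$ from Theorem \ref{HJB11} (together with an appropriate constant $\theta$) solves the discrete-time equivalent Bellman equation (\ref{OrgBel}) up to an $o(\tau)$ error and satisfies the transversality condition (\ref{transodts}); by uniqueness of the Bellman equation solution under the unichain assumption in Definition \ref{adddtdomain}, this forces $V^\ast(Q,E) = V(Q,E)+o(\tau)$.

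First I would Taylor-expand $V$ at the post-decision state. Fixing $(Q,E,h,p)$ and drawing $(\lambda,\alpha)$, the one-step next state is $Q' = [Q-R(h,p)\tau]^+ + \lambda\tau$ and $E' = \min\{E-p\tau+\alpha\tau,\,N_E\}$. Away from the two boundaries $Q=0$ and $E=N_E$, the $\mathcal{C}^1$ regularity of $V$ guaranteed by Theorem \ref{HJB11} yields
\begin{align*}
V(Q',E') = V(Q,E) + \tau V_Q(Q,E)(-R(h,p)+\lambda) + \tau V_E(Q,E)(-p+\alpha) + o(\tau),
\end{align*}
where the growth bound $V=\mathcal{O}(Q^2)$ controls the remainder uniformly on the states visited by the admissible policy $\Omega^{v\ast}$. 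At the boundaries, the $[\cdot]^+$ and $\min\{\cdot,N_E\}$ operators contribute discrete-time corrections that correspond exactly to the infinitesimals $\mathrm{d}L(t)$ and $\mathrm{d}U(t)$ of the reflection processes in (\ref{lowref})--(\ref{uppref}); by conditions (\ref{trankern}) these boundary corrections contribute only $o(\tau)$ per stage to the expected cost.

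Next I would substitute $V^\ast(Q,E) = V(Q,E) + o(\tau)$ into (\ref{OrgBel}). Cancelling $V(Q,E)$ on both sides, dividing by $\tau$, and taking the limit $\tau\to 0$, the resulting identity becomes
\begin{align*}
\frac{\theta^\ast}{\tau} = \mathbb{E}\Big[\min_{p\le E/\tau}\Big[\tfrac{Q}{\overline{\lambda}\tau} + V_Q(Q,E)(-R(h,p)+\overline{\lambda}) + V_E(Q,E)(-p+\overline{\alpha})\Big]\,\Big|\,Q,E\Big] + o(1),
\end{align*}
which is precisely the HJB equation (\ref{cenHJB}) that $V(Q,E)$ satisfies by Theorem \ref{HJB11}. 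For the transversality condition, I would invoke the hypothesis $V=\mathcal{O}(Q^2)$ together with the stability condition in Definition \ref{adddtdomain} ($\lim_{n\to\infty}\mathbb{E}^{\Omega}[Q^2(n)]<\infty$) and the boundedness of $E\in[0,N_E]$; then
\begin{align*}
\frac{1}{N}\mathbb{E}^{\Omega}\big[V(Q(N),E(N))\big] = \mathcal{O}\!\left(\frac{\mathbb{E}^{\Omega}[Q^2(N)]}{N}\right)\to 0,
\end{align*}
which establishes (\ref{transodts}). Combining the two pieces and using uniqueness of $(\theta^\ast,V^\ast)$ under the unichain property yields $V^\ast(Q,E)=V(Q,E)+o(\tau)$.

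The main obstacle will be Step 1, namely handling the two reflecting boundaries rigorously. Near $Q=0$ the $[\cdot]^+$ can truncate the state by an amount of order $\tau$, and near $E=N_E$ the $\min$ operator absorbs a similar-sized amount of harvested energy; a naive Taylor expansion is not valid at these boundaries. The technical content is to show that the aggregated boundary contribution matches, up to $o(\tau)$, the integrals $\int V_Q(0,e(t))\mathrm{d}L(t)$ and $\int V_E(q(t),N_E)\mathrm{d}U(t)$ in the VCTS, which then vanish by (\ref{trankern}). The other, more routine, difficulty is certifying the $\mathcal{O}(\tau^2)$ Taylor remainder is uniformly negligible along the trajectory; this is where the admissibility of $\Omega^{v\ast}$ in the discrete-time system and the quadratic growth $V=\mathcal{O}(Q^2)$ are essential, as they provide the uniform integrability needed to push the pointwise expansion inside the expectation.
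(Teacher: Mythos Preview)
Your overall plan---Taylor-expand the one-step Bellman operator, show that the HJB solution $V$ satisfies the discrete Bellman equation up to an $o(\tau)$ residual, verify transversality via $V=\mathcal{O}(Q^2)$ and Definition~\ref{adddtdomain}, then conclude by uniqueness of the Bellman fixed point---is exactly the route the paper takes. The paper packages this slightly differently: it first states an intermediate ``approximate optimality equation'' (Corollary~\ref{cor1}), proves it via three short lemmas (Taylor expansion of the operator; a sandwich bound showing the minimized discrete operator equals the minimized continuous one plus $o(\tau)$; a contradiction argument for uniqueness), and then observes that the HJB solution $V$ satisfies that corollary's hypotheses. Your transversality argument is identical to the paper's Part~B.

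Where you diverge from the paper is in your treatment of the boundaries, and here there is a conceptual slip worth flagging. You propose to control the $[\cdot]^+$ and $\min\{\cdot,N_E\}$ corrections by matching them to the reflection integrals $\int V_q(0,e)\,\mathrm{d}L$ and $\int V_e(q,N_E)\,\mathrm{d}U$ and then killing those via (\ref{trankern}). But (\ref{trankern}) is a set of \emph{asymptotic} conditions on the VCTS trajectory as $T\to\infty$; it plays a role only in the verification argument for the HJB equation (Theorem~\ref{HJB11}), and says nothing about the size of a \emph{single-step} boundary truncation in the discrete system. The paper does not invoke (\ref{trankern}) at all in the proof of Theorem~\ref{them111}; it simply writes the Taylor expansion $\mathbb{E}[J(Q',E')\mid Q,E]=J(Q,E)+\nabla J\cdot(\text{drift})\tau+o(\tau)$ and absorbs any boundary clipping into the $o(\tau)$ remainder. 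Whether that absorption is fully rigorous is a fair question (and your instinct to worry about it is sound), but the resolution is a pointwise argument---for fixed $(Q,E)$ in the interior, the clipping is inactive once $\tau$ is small enough---not an appeal to the long-run reflection conditions. Also, a minor wording issue: in your Step~2 you write ``substitute $V^\ast=V+o(\tau)$ into (\ref{OrgBel}),'' which reads as assuming the conclusion; what you actually mean (and what the paper does) is to plug the \emph{candidate} $V$ into the Bellman operator and show the residual is $o(\tau)$.
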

\begin{proof}  
please  refer to Appendix C.	
\end{proof}

Theorem \ref{them111} means that   $V(Q,E)$ can serve as an approximate priority function  to the optimal priority function $V^\ast\left(Q,E \right)$ with approximation error  $o(\tau)$.  As a result, solving the optimality equation in (\ref{OrgBel}) is transformed into a calculus problem of solving the HJB equation in (\ref{cenHJB}).  In the next subsection, we shall focus on solving the HJB equation in (\ref{cenHJB}) by leveraging the well-established theories of calculus and differential equations.

\section{Closed-Form  Delay-Optimal Power  Control}	\label{seclow}
The HJB equation in Theorem \ref{HJB11}  is a coupled two-dimensional partial differential equation (PDE) and hence, one key obstacle is to obtain the closed-form solution to the PDE.

\framebox{\begin{minipage}[t]{0.95\columnwidth}

	 	\vspace{0.1cm}  \emph{Challenge 2: }Solution of the coupled two-dimensional PDE in Theorem \ref{HJB11}.\vspace{0.1cm} 
\end{minipage}}

\vspace{0.2cm}

In this section, using asymptotic analysis, we obtain closed-form solutions to the multi-dimensional PDE in different operating regimes. We
also discuss the control insights from the structural properties of the closed-form priority functions for different asymptotic regimes.

\subsection{General Solution}
We first have the following corollary on the optimal power control  based on the HJB equation in Theorem \ref{HJB11} for given $V(q,e)$: 
\vspace{-0.8cm}
\begin{center}
\textcolor{blue}{}
\framebox{\begin{minipage}[t]{1\columnwidth}
\begin{Corollary}	[Optimal Power Control based on Theorem \ref{HJB11}]	\label{corooptk}
	For given priority function $V(q,e)$, the optimal power control action from the HJB equation in Theorem \ref{HJB11} is given by 
	\begin{align}	\label{optpow}
		p^\ast =\min\left\{ \left(-{\frac{\partial V\left( q, e\right)}{\partial q}}\bigg/{\frac{\partial V\left( q, e\right)}{\partial e}}-\frac{1}{|h|^2}\right)^+, \frac{e}{\tau}\right\}
\end{align}\KO
\end{Corollary}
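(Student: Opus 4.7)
The plan is to exploit the fact that the virtual policy $\Omega^v$ is $\mathcal{F}^v_t$-adapted and hence can depend on the current CSI realization $h$. Consequently, the minimization over $p$ in the HJB equation (\ref{cenHJB}) may be pushed inside the expectation over $h$: writing $p=\mu(h;q,e)$ for an arbitrary measurable function $\mu$ and using the tower property, the problem decouples into a pointwise minimization for each realization of $h$, with $(q,e)$ held fixed.

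First, I would fix $(q,e)$ and $h$ and discard the $p$-independent terms $q/(\overline{\lambda}\tau)$, $\overline{\lambda}\,\partial V/\partial q$ and $\overline{\alpha}\,\partial V/\partial e$, so that the inner minimization reduces to
\begin{align*}
\min_{0\le p\le e/\tau}\; g(p), \qquad g(p)\triangleq -\frac{\partial V(q,e)}{\partial q}\log\bigl(1+p|h|^2\bigr)-\frac{\partial V(q,e)}{\partial e}\,p.
\end{align*}
Under the natural monotonicity properties expected of the priority function, namely $\partial V/\partial q>0$ (larger backlog incurs larger future cost) and $\partial V/\partial e<0$ (more stored energy reduces future cost), a direct computation shows $g''(p)=\frac{\partial V}{\partial q}\cdot\frac{|h|^4}{(1+p|h|^2)^2}>0$, so $g$ is strictly convex on $[0,\infty)$. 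Setting $g'(p)=0$ and rearranging yields the unique unconstrained stationary point
\begin{align*}
p_0=-\frac{\partial V/\partial q}{\partial V/\partial e}-\frac{1}{|h|^2},
\end{align*}
which is positive precisely when the ``water level'' $-(\partial V/\partial q)/(\partial V/\partial e)$ exceeds the inverse channel gain $1/|h|^2$.

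Finally, I would project the unconstrained convex minimizer onto the feasible box $[0,e/\tau]$. Since $g$ is strictly convex, the projection is simply the clipping $p^\ast=\min\{(p_0)^+,\,e/\tau\}$, which is exactly (\ref{optpow}). The main obstacle in this argument is to justify the sign conditions $\partial V/\partial q>0$ and $\partial V/\partial e<0$ rigorously; these are structural properties of the HJB solution rather than a computation, and I would either postulate them here (to be verified later for the asymptotic closed-form $V$ derived in the subsequent subsections) or argue them from a monotonicity/coupling comparison of the virtual queue trajectories. Once these signs are in place, the rest is standard one-dimensional convex optimization with a box constraint, producing the multi-level water-filling form in which the water level is determined jointly by $(q,e)$ through the gradients of $V$.
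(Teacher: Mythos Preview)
Your proposal is correct and follows the same route the paper takes: the paper simply asserts the water-filling form as the minimizer of the HJB integrand at the start of Appendix~D without spelling out the interchange of $\min$ and $\mathbb{E}$, the convexity, or the box projection that you supply. One caution on your sign-verification plan: checking $V_q>0$ and $V_e<0$ directly on the closed-form approximate priority functions of Theorems~\ref{sym111}--\ref{sym33} will \emph{not} succeed (for instance, (\ref{finalsol1}) gives $V_q=-e/(\overline{\lambda}\,\overline{\alpha}\,\tau)<0$), because those expressions are asymptotic solutions of already-simplified PDEs rather than the exact HJB solution; your alternative route via monotonicity of the true total cost $V(q_0,e_0)=\min_{\Omega^v}\int_0^\infty q(t)\,\mathrm{d}t$ in each argument is the correct way to secure the signs.
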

\end{minipage}}
\par\end{center}

\begin{Remark}	[Structure of the Optimal Power Control Policy]
	The optimal  power control policy in (\ref{optpow}) depends on the instantaneous CSI, DQSI and EQSI. Furthermore, the power control action has a \emph{multi-level water-filling} structure as illustrated in Fig.~\ref{figillu1}--Fig.~\ref{figillu2}, where the water level is adaptive to the DQSI and the EQSI indirectly via the priority function $V\left( q, e\right)$. Therefore, the function $V(q,e)$ captures how the DQSI and the EQSI affect the overall priority of the data flow. \KO
\end{Remark}

We then  establish the following theorem on the sufficient conditions to ensure the existence of solution to the PDE in Theorem \ref{HJB11}:
\begin{Theorem}	[Sufficient Conditions for the Existence of Solution]	\label{nonempty}
	There exists a $V(q,e)$ that satisfies (\ref{cenHJB}) and (\ref{trankern})  in Theorem \ref{HJB11}  if the following conditions are satisfied:
	\begin{align}		
		& \overline{\lambda}  < \exp\left( \frac{1}{x}\right)E_1 \left(\frac{1}{x} \right)		\label{energyrequir}	\\	
		 & N_E  \geq e^\ast 	\label{energyrequir1}
	\end{align}
	where $x$ satisfies $x\exp\left(-\frac{1}{x} \right)  - E_1 \left(\frac{1}{x} \right) = \overline{\alpha}$ and $E_1(x) \triangleq \int_1^{\infty} \frac{e^{-tx}}{t}\mathrm{d}t$ is the exponential integral function.  $e^\ast$ is the solution of the fixed point equation in (\ref{ssd2asd}) in Appendix C if $\overline{\lambda}> E_1\left(\frac{1}{\overline{\alpha}} \right) $, and $e^\ast=\overline{\alpha}\tau$ if $\overline{\lambda} \leq E_1\left(\frac{1}{\overline{\alpha}} \right) $.\KO
\end{Theorem}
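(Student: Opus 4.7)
The plan is to establish existence by explicit construction: identify a candidate stationary control policy for the VCTS, verify that the induced virtual trajectory has a finite cost integral, take this integral as the candidate priority function $V(q,e)$, and then check that it satisfies both the HJB equation (\ref{cenHJB}) and the transversality conditions (\ref{trankern}).

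First I would analyze the stationary behavior of the VCTS under the water-filling control given in Corollary \ref{corooptk}. Setting the drift of the virtual energy queue to zero pins down the stationary water level $x$ through the relation $x e^{-1/x} - E_1(1/x) = \overline{\alpha}$, which is obtained by direct integration of $\mathbb{E}[p^\ast]$ against the exponential density of $|h|^2$ assumed in Assumption \ref{CSIassum}. The corresponding stationary energy level $e^\ast$ solves the fixed-point equation (\ref{ssd2asd}) in the general case and reduces to $e^\ast = \overline{\alpha}\tau$ in the low arrival regime. Condition (\ref{energyrequir1}), $N_E \geq e^\ast$, ensures that this equilibrium is interior so that the upper reflection process $U(t)$ is inactive in steady state and the energy queue does not saturate.

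Second I would interpret condition (\ref{energyrequir}) as the stability condition for the data queue. Computing $\mathbb{E}[R(h, p^\ast)]$ at the stationary water level $x$ by integration against the exponential density of $|h|^2$ and comparing it to $\overline{\lambda}$ yields the bound in (\ref{energyrequir}); this is precisely the requirement that the data queue has a strictly negative drift outside a bounded region. A standard Foster-Lyapunov type argument then gives $\limsup_{t\to\infty} q(t) < \infty$ under the induced policy, which implies that the candidate value function $V(q,e) = \int_0^\infty q(t)\,dt$, evaluated along the nominal trajectory started at $(q,e)$, is finite. Under the two conditions this candidate is of class $\mathcal{C}^1(\mathbb{R}_+^2)$ because the water-filling policy depends smoothly on the interior state, and it satisfies the HJB equation (\ref{cenHJB}) by the continuous-time dynamic programming principle applied to the total cost problem in Problem \ref{fluid problem1}.

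The last step, which I expect to be the main obstacle, is the verification of the three transversality conditions in (\ref{trankern}). The convergence of $(q(T),e(T))$ to the equilibrium $(0,e^\ast)$ with $e^\ast < N_E$ implies that the reflection processes $L(t)$ and $U(t)$ become asymptotically constant, so that the integrands $\tfrac{\partial V}{\partial q}|_{q=0}\,L(t)$ and $\tfrac{\partial V}{\partial e}|_{e=N_E}\,U(t)$ contribute only during a bounded transient phase. The terminal condition $\limsup_{T\to\infty} V(q(T),e(T)) = 0$ then follows by normalizing $V$ to vanish at the equilibrium. The technical difficulty is showing that $V$ is indeed $\mathcal{C}^1$ up to the boundaries $q=0$ and $e=N_E$, where the water-filling saturates against the energy availability constraint and the candidate function inherits non-smooth behavior from the reflections; handling this will require a careful local analysis around these boundaries, and it is at this point that the sufficient conditions (\ref{energyrequir})--(\ref{energyrequir1}) are used decisively to rule out persistent contact with either boundary.
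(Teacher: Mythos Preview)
Your construction has a circularity that undermines the argument. You propose to use the water-filling control from Corollary~\ref{corooptk} as the candidate policy and then define $V(q,e)$ as the total cost $\int_0^\infty q(t)\,\mathrm{d}t$ along the resulting trajectory. But the water level in Corollary~\ref{corooptk} is $-V_q/V_e$, which already presupposes $V$; you cannot use that policy to define $V$. If instead you mean to freeze the water level at the constant $x$ determined by the energy balance, then the resulting cost-to-go solves only the linear policy-evaluation PDE associated with that fixed policy, not the HJB equation~(\ref{cenHJB}) with the minimization over $p$. Invoking ``the continuous-time dynamic programming principle'' does not repair this: the DPP delivers the HJB only for the \emph{optimal} value function, and establishing that the optimal value function is $\mathcal{C}^1(\mathbb{R}_+^2)$ is a genuine regularity problem that your ``the policy depends smoothly on the state'' remark does not address (it again appeals to $V$ to describe the policy). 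A secondary gap: $\limsup_{t\to\infty} q(t) < \infty$ does not imply $\int_0^\infty q(t)\,\mathrm{d}t < \infty$; you need $q(t)\to 0$ with an integrable rate.

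The paper's route is different and more algebraic. It substitutes the control structure $p^\ast=\min\{(-V_q/V_e - 1/|h|^2)^+,\, e/\tau\}$ directly into~(\ref{cenHJB}) and evaluates the expectations over $|h|^2\sim\exp(1)$ in closed form, obtaining an autonomous first-order PDE in $(q,e,V_q,V_e)$ involving the explicit functions $F(V_q/(-V_e),\, e/\tau)=\mathbb{E}[R(h,p^\ast)]$ and $G(V_q/(-V_e),\, e/\tau)=\mathbb{E}[p^\ast]$. It then argues that existence of a solution with the boundary behavior in~(\ref{trankern}) is equivalent to solvability of the steady-state balance $\overline{\lambda}\le F$, $\overline{\alpha}\ge G$ at $(q_s,e_s)=(0,e_s)$ for some $e_s\in[0,N_E]$. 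The monotonicity and ranges of $F$ and $G$ in the two regimes $V_q/(-V_e)\lessgtr e/\tau$ are worked out explicitly, and feasibility of these balance conditions is shown to be exactly~(\ref{energyrequir})--(\ref{energyrequir1}), splitting into the case $\overline{\lambda}>E_1(1/\overline{\alpha})$ (where the equalities determine $e^\ast$) and $\overline{\lambda}\le E_1(1/\overline{\alpha})$ (where $e^\ast=\overline{\alpha}\tau$). So the proof is a feasibility analysis of the steady state of the reduced PDE, not a value-function construction; the actual $V$ is exhibited only later, in closed form and in asymptotic regimes, in Theorems~\ref{sym111}--\ref{sym33}.
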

\begin{proof}	
	Please refer to Appendix D.
\end{proof}

The challenge  is to find a priority function $V(q,e)$ that satisfies (\ref{cenHJB}) and (\ref{trankern}).  Note that the PDE in (\ref{cenHJB}) is a two-dimensional PDE, which has no closed-form solution for the priority function $V(q,e)$. In the next subsection \ref{asysolfinn},  we consider different asymptotic regimes and obtain closed-form  solutions   of $V(q,e)$  for these operating   regimes.

\begin{figure}[t]
  \centering
  \includegraphics[width=3in]{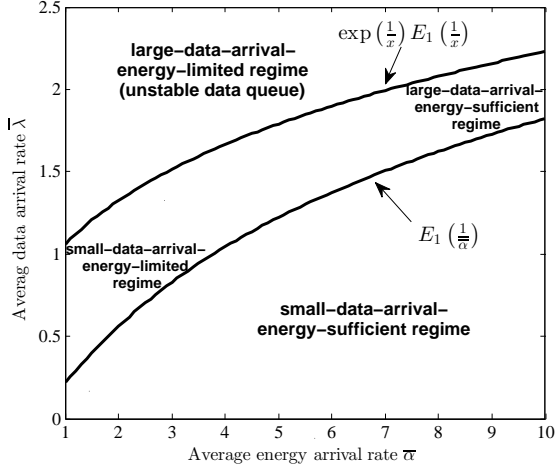}
  \caption{\tblue{Asymptotic regimes of the energy harvesting system.}}\vspace{-0.5cm}
  \label{regm}
\end{figure}

\subsection{Asymptotic Closed-Form Priority Functions and Control Insights}	\label{asysolfinn}
In this subsection, we obtain the closed-form priority functions $V(q,e)$  in different asymptotic regimes\footnote{Under the condition in (\ref{energyrequir}) in Theorem \ref{nonempty}, we have that $\overline{\alpha}$ grows at least at the order of $\exp(\overline{\lambda})$. Therefore, large   $\overline{\lambda}$ induces  large $\overline{\alpha}$. The regime  with large  $\overline{\lambda}$ and small  $\overline{\alpha}$  will cause the  system to be unstable  and  is not included in our discussions.} as illustrated in Fig.~\ref{regm} and discuss the control insights for each regime.

\subsubsection{\txtblue{Large-Data-Arrival}-Energy-Sufficient  Regime} 
In this regime, we consider the operating  region with  large  $\overline{\lambda}$ and large $\overline{\alpha}$, and $E_1\left(\frac{1}{\overline{\alpha}} \right)< \overline{\lambda} <\exp\left( \frac{1}{x}\right)E_1 \left(\frac{1}{x} \right)$ (where $x$ satisfies $x\exp\left(-\frac{1}{x} \right)  - E_1 \left(\frac{1}{x} \right) = \overline{\alpha}$). This regime corresponds to the scenario that we have a \txtblue{large data arrival rate}  for the data queue and sufficient renewable energy supply  for the energy queue to maintain the data queue stable. The closed-form priority function $V(q,e)$ for this regime is given by the following theorem:
\begin{Theorem} \emph{(Closed-Form $V(q,e)$  for the \txtblue{Large-Data-Arrival}-Energy-Sufficient Regime)}	\label{sym111}
	Under the \txtblue{large-data-arrival}-energy-sufficient  regime, the closed-form $V(q,e)$ of the PDE in Theorem \ref{HJB11} is given by
	\begin{itemize}
		\item When $0<e<e^{th}$ ($e^{th}$ is the solution of $E_1 \left( \frac{\tau}{e^{th}} \right)=\overline{\lambda}$), we have
			\begin{align}
				V(q,e)= \frac{e^2}{4 \overline{\lambda}\overline{\alpha}^2 \tau} \left(1+2\gamma_{eu}+2 \overline{\lambda} - 2 \log \frac{e}{\tau}\right)-\frac{eq}{\overline{\lambda} \overline{\alpha} \tau}+C_1
	\label{finalsol1}
			\end{align}
	where $\gamma_{eu}$ is the Euler's constant and $C_1=\frac{\tau}{4 \overline{\lambda}} \left(1+2\gamma_{eu}+2 \overline{\lambda} - 2 \log \overline{\alpha}\right)$.
		\item When $e\geq e^{th}$, $V(q,e)$ is a function of $q$ only.\KO
	\end{itemize}
\end{Theorem}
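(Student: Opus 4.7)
The plan is to start from the HJB equation in Theorem \ref{HJB11}, substitute the optimal $p^{\ast}$ given in Corollary \ref{corooptk}, and then exploit the regime assumptions to reduce the resulting two-dimensional PDE to a tractable one-dimensional ODE whose solution can be verified by direct substitution. Specifically, I plug the water-filling rule
\[
p^{\ast}(h) \;=\; \min\!\left\{\Big(-\tfrac{\partial V/\partial q}{\partial V/\partial e}-\tfrac{1}{|h|^2}\Big)^{\!+},\,\tfrac{e}{\tau}\right\}
\]
into (\ref{cenHJB}) and evaluate the expectation over $|h|^2\sim\mathrm{Exp}(1)$. Using the identity $\mathbb{E}[\log(1+a|h|^2)]=\exp(1/a)E_1(1/a)$ and the companion identity for the truncated mean of $p^{\ast}$, the HJB takes the schematic form
\begin{align}
\tfrac{q}{\overline{\lambda}\tau}
+\tfrac{\partial V}{\partial q}\!\left(\overline{\lambda}-\Phi\!\left(-\tfrac{V_q}{V_e}\right)\right)
+\tfrac{\partial V}{\partial e}\!\left(\overline{\alpha}-\Psi\!\left(-\tfrac{V_q}{V_e}\right)\right)=0,\notag
\end{align}
where $\Phi,\Psi$ are explicit functions involving $E_1(\cdot)$ and the exponential integral.

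Next, I split the analysis at the threshold $e^{th}$ (defined by $E_1(\tau/e^{th})=\overline{\lambda}$), which marks the point where the average service rate under pure water-filling matches the arrival rate $\overline{\lambda}$. For $e\geq e^{th}$, the energy supply is effectively unconstrained, the EQSI drops out of the first-order conditions, and the solution reduces to the classical single-queue delay-optimal PDE whose solution depends only on $q$; this gives the second bullet of the theorem directly. For $0<e<e^{th}$, the energy is the binding resource, so I make the separable ansatz
\[
V(q,e)\;=\;A(e)\,q+B(e)+C_1,
\]
motivated by the fact that the marginal value of a data bit should be constant in $q$ once the energy constraint dominates. Substituting this ansatz, the ratio $-V_q/V_e=-A(e)/(A'(e)q+B'(e))$ becomes tractable in the large-$\overline{\lambda}$ regime where the $q$-term in $V_e$ is negligible relative to $B'(e)$; the HJB then collapses into a pair of ODEs for $A(e)$ and $B(e)$ that can be integrated in closed form.

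Solving these ODEs gives $A(e)=-e/(\overline{\lambda}\overline{\alpha}\tau)$ and, after one more integration involving $\log(e/\tau)$ and the Euler constant $\gamma_{eu}$ (which appears through $\mathbb{E}[\log|h|^2]=-\gamma_{eu}$ when Jensen-type expansions are taken in the large-$\overline{\lambda}$ limit), the formula (\ref{finalsol1}), along with the additive constant $C_1$ chosen to match continuity at $e=e^{th}$. The final step is verification: I substitute the closed-form $V(q,e)$ back into the HJB, check that the derived $p^{\ast}$ lies in $[0,e/\tau]$ on the regime $0<e<e^{th}$, confirm $V\in\mathcal{C}^1(\mathbb{R}_+^2)$ across $e=e^{th}$, and finally verify the transversality/reflection conditions (\ref{trankern}) using the stability ensured by Theorem \ref{nonempty}. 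The main obstacle I anticipate is the last step, namely justifying the asymptotic dropping of the $q$-coupling term inside $V_e$ in the large-$\overline{\lambda}$ regime and showing that the resulting ansatz remains $\mathcal{C}^1$ at the boundary $e=e^{th}$ where the regime splits; everything else is bookkeeping with the exponential integral and Rayleigh-fading expectations.
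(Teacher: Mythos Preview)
Your overall plan—substitute the water-filling $p^\ast$ from Corollary~\ref{corooptk} into the HJB, evaluate the Rayleigh-fading expectations, and split the analysis at $e^{th}$—matches the paper's. But the mechanism you propose for reducing the PDE on $0<e<e^{th}$ is not the one that works, and the step you yourself flag as the ``main obstacle'' is in fact a genuine gap.

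You take the ansatz $V=A(e)q+B(e)$ and then argue that the $q$-term $A'(e)q$ inside $V_e$ can be dropped in the large-$\overline{\lambda}$ limit. But from the claimed answer (\ref{finalsol1}) one has $A'(e)=-1/(\overline{\lambda}\overline{\alpha}\tau)$ while $B'(e)$ is of order $e/(\overline{\lambda}\overline{\alpha}^2\tau)$, so $A'(e)q/B'(e)\sim \overline{\alpha}q/e$, which is \emph{not} uniformly small—indeed this very ratio appears explicitly in the water level of Corollary~\ref{cor1sd111}. So the drop cannot be justified, and without it your ansatz does not decouple the equation.

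The paper's route is different: it first establishes (via an order-of-magnitude lemma) that in this regime $\tfrac{V_q}{-V_e}>\tfrac{e}{\tau}$ and both are large, and then expands the nonlinear terms $\Phi,\Psi$ in that limit. The key observation is that, to leading order, $\mathbb{E}[p^\ast]\approx e/\tau$ and $\tfrac{V_q}{-V_e}\,\mathbb{E}[R(h,p^\ast)]\approx \tfrac{V_q}{-V_e}\exp(\tau/e)E_1(\tau/e)+\tfrac{e}{\tau}$; a cancellation removes the ratio $V_q/V_e$ from the nonlinearities entirely, leaving the \emph{linear} first-order PDE
\[
\frac{q}{\overline{\lambda}\tau}+V_q\Big(\overline{\lambda}-\exp(\tau/e)E_1(\tau/e)\Big)+V_e\,\overline{\alpha}=0,
\]
which is solved by characteristics after the further large-$e$ expansion $\exp(\tau/e)E_1(\tau/e)\approx -\gamma_{eu}+\log(e/\tau)$. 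This asymptotic of $E_1$, not $\mathbb{E}[\log|h|^2]=-\gamma_{eu}$, is where Euler's constant enters.

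Two smaller corrections. The constant $C_1$ is not fixed by continuity at $e^{th}$; it is fixed by the third transversality condition in (\ref{trankern}), i.e.\ by requiring $V$ to vanish at the steady state $(q_s,e_s)=(0,\overline{\alpha}\tau)$. And for $e\geq e^{th}$, the conclusion that $V$ depends only on $q$ does not follow merely from ``EQSI dropping out''; it comes from imposing the reflection boundary condition $\partial V(q,N_E)/\partial e=0$ (second line of (\ref{trankern})) on the general solution of the PDE in that region.
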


\begin{proof}
	Please refer to Appendix E.
\end{proof}

Based on Theorem \ref{sym111}, when $e\geq e^{th}$, since $V(q,e)$  is a function of $q$ only, we have $\frac{\partial V\left( q, e\right)}{\partial e}=0$ for  given $q$ and $e$. Therefore, the water level in (\ref{optpow}) is infinite and hence, we have $p^\ast=\frac{e}{\tau}$. Furthermore,  based on the closed-form priority function in (\ref{finalsol1}), we can calculate the closed-form expression of the water level\footnote{From (\ref{finalsol1}), we have $-{\frac{\partial V\left( q, e\right)}{\partial q}}\big/{\frac{\partial V\left( q, e\right)}{\partial e}}=\frac{\overline{\alpha}e}{e\left(\gamma_{eu}+\overline{\lambda} - \log(\frac{e}{\tau}) \right)-\overline{\alpha}q}$.} $-{\frac{\partial V\left( q, e\right)}{\partial q}}\big/{\frac{\partial V\left( q, e\right)}{\partial e}}$ in (\ref{optpow}).  We summarize the optimal power control structure for this regime in the following corollary:
\vspace{-1cm}
\begin{center}
\textcolor{blue}{}
\framebox{\begin{minipage}[t]{1\columnwidth}
\begin{Corollary}	\emph{(Optimal Power Control Structure  for the \txtblue{Large-Data-Arrival}-Energy-Sufficient  Regime)}	\label{cor1sd111}
	The optimal power control for  the \txtblue{large-data-arrival}-energy-sufficient  regime is given by 
	\begin{itemize}
		\item When $0< e < e^{th}$ and $q>\frac{e}{\overline{\alpha}}\left(\gamma_{eu}+\overline{\lambda} - \log(\frac{e}{\tau}) \right)$, $p^\ast = 0$.
		\item When $0< e < \overline{\alpha}  q$ and $q<\frac{e}{\overline{\alpha}}\left(\gamma_{eu}+\overline{\lambda} - \log(\frac{e}{\tau}) \right)$,  the water level $-{\frac{\partial V\left( q, e\right)}{\partial q}}\big/{\frac{\partial V\left( q, e\right)}{\partial e}}$ is an increasing function of  $q$ for a  given $e$, and is  a decreasing function of  $e$ for a given $q$.
		\item When $  \overline{\alpha}  q<e<e^{th}$ and $q<\frac{e}{\overline{\alpha}}\left(\gamma_{eu}+\overline{\lambda} - \log(\frac{e}{\tau}) \right)$,  the water level $-{\frac{\partial V\left( q, e\right)}{\partial q}}\big/{\frac{\partial V\left( q, e\right)}{\partial e}}$ is an increasing  function of  $q$ for a given $e$, and is  an increasing function of  $e$ for a  given $q$.		\item When $e\geq e^{th}$, $p^\ast=\frac{e}{\tau}$.\KO
	\end{itemize}
\end{Corollary}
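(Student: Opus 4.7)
The plan is to directly plug the closed-form priority function from Theorem 4 into the optimal power control formula from Corollary 1, and then perform a case-by-case sign and monotonicity analysis of the resulting expression. Everything reduces to elementary calculus once the right partial derivatives are in hand, so no new existence or uniqueness argument is needed; the work is in organizing the four regions correctly.

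First I would compute $\partial V/\partial q$ and $\partial V/\partial e$ from (\ref{finalsol1}). A direct differentiation yields $\partial V/\partial q=-\,e/(\overline{\lambda}\overline{\alpha}\tau)$ and, after combining the logarithmic terms,
\begin{equation*}
\frac{\partial V}{\partial e}=\frac{1}{\overline{\lambda}\overline{\alpha}\tau}\Bigl[\tfrac{e}{\overline{\alpha}}\bigl(\gamma_{eu}+\overline{\lambda}-\log(e/\tau)\bigr)-q\Bigr].
\end{equation*}
Substituting into (\ref{optpow}) gives the water level already recorded in the footnote,
\begin{equation*}
W(q,e)\triangleq -\frac{\partial V/\partial q}{\partial V/\partial e}=\frac{\overline{\alpha}\,e}{e\bigl(\gamma_{eu}+\overline{\lambda}-\log(e/\tau)\bigr)-\overline{\alpha}\,q}.
\end{equation*}
This identity is the workhorse for all four assertions.

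Next I would handle the two boundary cases. When $e\ge e^{th}$, Theorem 4 states $V(q,e)$ is independent of $e$, so $\partial V/\partial e=0$, the water level in (\ref{optpow}) is infinite, and the energy-availability clamp gives $p^{\ast}=e/\tau$. When $0<e<e^{th}$ and $q>\tfrac{e}{\overline{\alpha}}(\gamma_{eu}+\overline{\lambda}-\log(e/\tau))$, the denominator of $W(q,e)$ becomes negative while $\partial V/\partial q$ stays negative, so $W(q,e)<0$; then $(W(q,e)-1/|h|^{2})^{+}=0$ and hence $p^{\ast}=0$. This disposes of the first and last bullets.

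The middle two bullets reduce to computing the signs of $\partial W/\partial q$ and $\partial W/\partial e$ on the subregion where the denominator of $W$ is positive. Quotient-rule differentiation in $q$ gives $\partial W/\partial q=\overline{\alpha}^{2}e\big/[e(\gamma_{eu}+\overline{\lambda}-\log(e/\tau))-\overline{\alpha}q]^{2}>0$, which establishes monotonicity in $q$ uniformly over both subregions. Differentiating in $e$ and simplifying, the derivative of the numerator and the $f'(e)$ term cancel the logarithmic pieces and leave
\begin{equation*}
\frac{\partial W}{\partial e}=\frac{\overline{\alpha}(e-\overline{\alpha}q)}{\bigl[e(\gamma_{eu}+\overline{\lambda}-\log(e/\tau))-\overline{\alpha}q\bigr]^{2}},
\end{equation*}
whose sign coincides with the sign of $e-\overline{\alpha}q$. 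This gives decreasing in $e$ on $\{0<e<\overline{\alpha}q\}$ and increasing in $e$ on $\{\overline{\alpha}q<e<e^{th}\}$, matching the second and third bullets respectively. The only mildly delicate step is the cancellation in the numerator of $\partial W/\partial e$, where the $-1$ coming from $\tfrac{d}{de}[e\log(e/\tau)]$ is exactly what removes all log-dependence and leaves the clean threshold $e=\overline{\alpha}q$; once this algebraic identity is observed, the corollary follows.
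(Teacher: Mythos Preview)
Your proposal is correct and follows essentially the same approach as the paper's proof in Appendix~F: compute the water level $W(q,e)=\overline{\alpha}e/[e(\gamma_{eu}+\overline{\lambda}-\log(e/\tau))-\overline{\alpha}q]$ from the partial derivatives of (\ref{finalsol1}), dispose of the two boundary bullets by the sign of the denominator and the $e\ge e^{th}$ case of Theorem~4, and then differentiate $W$ in $q$ and $e$ to obtain $\partial W/\partial q>0$ and $\partial W/\partial e=\overline{\alpha}(e-\overline{\alpha}q)/D^{2}$, whose sign gives the $e\lessgtr\overline{\alpha}q$ split. Your write-up is in fact slightly more explicit than the paper's about the cancellation that produces the clean $e-\overline{\alpha}q$ numerator.
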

\end{minipage}}
\par\end{center}

\begin{proof}
	Please refer to Appendix F.
\end{proof}

\begin{figure}
  \centering
  \includegraphics[width=3.5in]{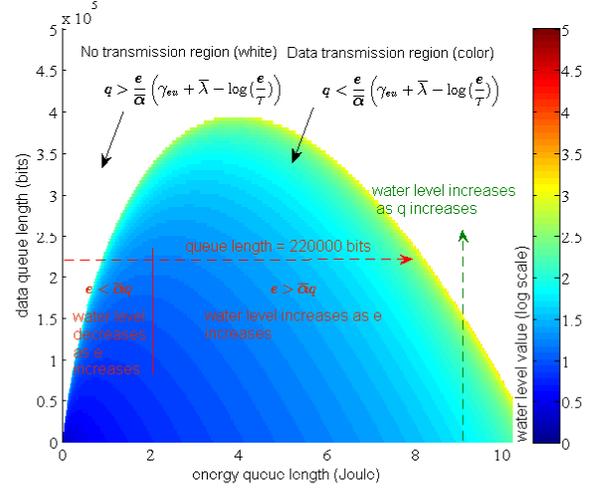}
  \caption{Water level versus the data queue length and the energy queue length for the \txtblue{large-data-arrival}-energy-sufficient regime, where $\tau = 0.1$ s, $\overline{\lambda}=1.8$ pcks/s, $\overline{\alpha}=10$ W, bandwidth is $1$ MHz, and average packet length  is $1$ Mbits.}\vspace{-0.5cm}
  \label{figillu1}\vspace{-0.5cm}
\end{figure}

Fig.~\ref{figillu1} illustrates the water level versus the data queue length and the energy queue length when $e<e^{th}$. Specifically, Corollary \ref{cor1sd111} means that when $0< e < e^{th}$ and for a large data  queue length  $q>\frac{e}{\overline{\alpha}}\left(\gamma_{eu}+\overline{\lambda} - \log(\frac{e}{\tau}) \right)$, we do not use any renewable energy  to transmit data. The reason is that we do not have enough energy to support the \txtblue{large data arrival rate}, and it is appropriate to wait for  future  good transmission opportunities.  For a small queue length $q<\frac{e}{\overline{\alpha}}\left(\gamma_{eu}+\overline{\lambda} - \log(\frac{e}{\tau}) \right)$,  we can use the available energy  for transmission and the  water level is increasing w.r.t. $q$, which is in accordance with the high urgency of the data flow. Furthermore, when\footnote{In order for $q<\frac{e}{\overline{\alpha}}\left(\gamma_{eu}+\overline{\lambda} - \log(\frac{e}{\tau}) \right)$ to hold, we require $\overline{\alpha}q \leq \tau \exp(\gamma_{eu}+\overline{\lambda}-1)$. For large $\overline{\lambda}$, $e^{th}\approx \tau \exp(\gamma_{eu}+\overline{\lambda})$. Therefore, we have $\overline{\alpha}q \in [0, e^{th}]$.} $0< e < \overline{\alpha}  q$, the water level decreases as $e$ increases, which is reasonable because it is better to save some energy for the future transmissions. When $e > \overline{\alpha}  q$,  the water level increases as $e$ increases, which is reasonable because we have relatively sufficient available energy  and it is appropriate to use more power to decrease the data queue. When $e\geq e^{th}$, we have sufficient renewable energy, and it is appropriate  to use all the available energy and make room for the future energy arrivals.

\subsubsection{\txtblue{Small-Data-Arrival}-Energy-Limited Regime} 
In this regime, we consider the operating  region with  small  $\overline{\lambda}$ and small $\overline{\alpha}$, and $E_1\left(\frac{1}{\overline{\alpha}} \right)< \overline{\lambda} <\exp\left({\frac{1}{x}} \right) E_1\left(\frac{1}{x} \right)$. This regime corresponds to the scenario that we have a \txtblue{small data arrival rate}  for the data queue and  insufficient  energy supply  for the energy queue. The closed-form priority function $V(q,e)$ for this regime is given by the following theorem:
\begin{Theorem} \emph{(Closed-Form $V(q,e)$  for the \txtblue{Small-Data-Arrival}-Energy-Limited Regime)}	\label{sym1}
	Under the \txtblue{small-data-arrival}-energy-limited regime, the closed-form $V(q,e)$ of the PDE  in Theorem \ref{HJB11}   is given by
	\begin{itemize}
		\item When $0<e<e^{th}$ ($e^{th}$ is the solution of $E_1 \left( \frac{\tau}{e^{th}} \right)=\overline{\lambda}$), we have
			\begin{align}
		V(q,e)=-\frac{ e^3}{3   \overline{\lambda}\overline{\alpha}^2 \tau^2}+ \frac{ e^2}{2 \overline{\alpha}^2 \tau} - \frac{q e}{\overline{\lambda}\overline{\alpha}\tau}	+C_2	\label{finalsol2}
	\end{align}
		\item When $e\geq e^{th}$, $V(q,e)$ is a function of $q$ only and $C_2=\frac{\tau}{2}-\frac{\overline{\alpha}\tau}{3\overline{\lambda}}$.\KO
	\end{itemize}
\end{Theorem}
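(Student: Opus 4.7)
The plan is to verify the claimed closed-form expression by substituting it into the HJB equation from Theorem \ref{HJB11}, using the optimal power control structure from Corollary \ref{corooptk} together with the regime-specific approximations that characterize the small-data-arrival-energy-limited operating point. Since the HJB is a two-dimensional PDE and has no exact closed-form solution, the proof will proceed by (i) postulating an ansatz of the form $V(q,e) = A(e) + B(e)\,q$ for $e < e^{th}$ (motivated by the linear-in-$q$ coupling already present in the companion Theorem \ref{sym111}), and (ii) reducing the PDE to a pair of ordinary differential equations in $e$ for $A(e)$ and $B(e)$ via an asymptotic reduction of the channel expectation.

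First I would derive the reduced HJB. Substituting the candidate optimal $p^*$ from Corollary \ref{corooptk} back into (\ref{cenHJB}), the HJB becomes $\frac{q}{\overline{\lambda}\tau} + \frac{\partial V}{\partial q}\big(\overline{\lambda} - \mathbb{E}[R(h,p^*)\mid q,e]\big) + \frac{\partial V}{\partial e}\big(\overline{\alpha} - \mathbb{E}[p^*\mid q,e]\big)=0$. In the energy-limited regime, $e/\tau$ is small compared with the unconstrained water level $-(\partial V/\partial q)/(\partial V/\partial e)$ for most channel realizations, so the energy availability constraint binds with high probability and we may approximate $p^* \approx e/\tau$. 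The channel expectation then reduces to $\mathbb{E}[\log(1+(e/\tau)|h|^2)] = e^{\tau/e} E_1(\tau/e)$, which (using the asymptotic expansion $e^{x}E_1(x) = 1/x - 1/x^2 + \mathcal{O}(x^{-3})$ valid for small $e$, i.e.\ large $\tau/e$) gives $\mathbb{E}[R] \approx e/\tau - (e/\tau)^2 + \cdots$. I will retain the leading two terms, since it is precisely the second-order correction that produces the $-e^3/(3\overline{\lambda}\overline{\alpha}^2\tau^2)$ contribution in the claimed $V$.

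Next I plug the ansatz $V(q,e) = A(e) - \frac{e\,q}{\overline{\lambda}\overline{\alpha}\tau}$ into the reduced HJB. Collecting the coefficient of $q$ and the $q$-independent piece separately yields two balance conditions: the $q$-coefficient equation fixes $B(e) = -e/(\overline{\lambda}\overline{\alpha}\tau)$ (consistent with the ansatz), while the $q$-independent equation becomes a first-order ODE for $A(e)$ whose integration gives $A(e) = -\frac{e^3}{3\overline{\lambda}\overline{\alpha}^2\tau^2} + \frac{e^2}{2\overline{\alpha}^2\tau} + C_2$. For the region $e \geq e^{th}$, the water level exceeds $e/\tau$ so the energy constraint ceases to bind and the HJB degenerates into a pure data-queue equation with $\partial V/\partial e \equiv 0$, yielding $V(q,e)$ depending only on $q$. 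The threshold $e^{th}$, defined by $E_1(\tau/e^{th}) = \overline{\lambda}$, arises naturally as the locus where the energy-constrained branch and the unconstrained branch meet in the HJB residual. Finally I will fix the integration constant by matching $V$ at $e = e^{th}$ with the $e$-independent continuation, which after simplification using the threshold equation gives $C_2 = \tau/2 - \overline{\alpha}\tau/(3\overline{\lambda})$.

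The main obstacle is justifying the asymptotic reduction of $\mathbb{E}[R(h,p^*)\mid q,e]$ rigorously in this regime: one must argue that the $p^*=0$ sub-region (where $|h|^2 < 1/w$) contributes only higher-order terms, and that retaining exactly the first two terms of the large-argument expansion of $e^{\tau/e}E_1(\tau/e)$ produces a self-consistent balance in the HJB (cancelling the $q/(\overline{\lambda}\tau)$ and all $\mathcal{O}(e/\tau)$ residuals). I also need to verify the admissibility conditions (\ref{trankern}) for the constructed $V$ — the polynomial growth in $e$ is bounded since $e \in [0,N_E]$, and the linear growth in $q$ combined with data-queue stability (which follows in this regime from Theorem \ref{nonempty} applied to the resulting $\Omega^{v*}$) ensures the transversality limits vanish. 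These verifications parallel those in Appendix E for Theorem \ref{sym111} and can be carried out by the same technique.
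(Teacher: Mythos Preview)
Your reduction of the HJB to a simplified PDE is where the argument breaks. You argue that in this regime the energy constraint binds so that $p^\ast\approx e/\tau$ deterministically, hence $\mathbb{E}[p^\ast]=e/\tau$ and $\mathbb{E}[R]=e^{\tau/e}E_1(\tau/e)$, which would give the reduced PDE
\[
\frac{q}{\overline{\lambda}\tau}+V_q\Big[\overline{\lambda}-e^{\tau/e}E_1\!\big(\tfrac{\tau}{e}\big)\Big]+V_e\Big[\overline{\alpha}-\tfrac{e}{\tau}\Big]=0.
\]
But the claimed $V(q,e)$ in (\ref{finalsol2}) does \emph{not} satisfy this equation: substituting $V_q=-e/(\overline{\lambda}\overline{\alpha}\tau)$ and $V_e=-e^2/(\overline{\lambda}\overline{\alpha}^2\tau^2)+e/(\overline{\alpha}^2\tau)-q/(\overline{\lambda}\overline{\alpha}\tau)$ leaves an uncancelled residual $qe/(\overline{\lambda}\overline{\alpha}\tau^2)+\ldots$, so your ansatz procedure would not close. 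Adding the second-order term $-(e/\tau)^2$ of the $E_1$ expansion does not fix this; in particular your assertion that ``it is precisely the second-order correction that produces the $-e^3/(3\overline{\lambda}\overline{\alpha}^2\tau^2)$ contribution'' is incorrect.

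The paper's derivation (Appendix G, which reuses the calculation in Appendix E) does \emph{not} set $p^\ast=e/\tau$. It keeps $p^\ast=\min\{(-V_q/V_e-1/|h|^2)^+,\,e/\tau\}$, computes $\mathbb{E}[R(h,p^\ast)]$ and $\mathbb{E}[p^\ast]$ exactly in terms of $V_q/(-V_e)$ and $e/\tau$, and then expands in the large-$V_q/(-V_e)$ regime. The crucial point is that the $e/\tau$ contribution to $\mathbb{E}[p^\ast]$ cancels against a matching $e/\tau$ piece inside $\tfrac{V_q}{-V_e}\mathbb{E}[R]$ (see (\ref{fieeq1})--(\ref{fieeq2})), so the correctly reduced PDE is (\ref{extequasda}),
\[
\frac{q}{\overline{\lambda}\tau}+V_q\Big[\overline{\lambda}-e^{\tau/e}E_1\!\big(\tfrac{\tau}{e}\big)\Big]+V_e\,\overline{\alpha}=0,
\]
with $V_e$ multiplied by $\overline{\alpha}$ alone. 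Only then does the small-$e$ approximation $e^{\tau/e}E_1(\tau/e)\approx e/\tau$ (leading term only) yield a first-order linear PDE whose characteristic solution is exactly (\ref{finalsol2}); the cubic term in $e$ arises from integrating along characteristics, not from a second-order term in the $E_1$ expansion. Finally, $C_2$ is fixed not by matching at $e=e^{th}$ but by the transversality condition in (\ref{trankern}) at the steady state $(q_s,e_s)=(0,\overline{\alpha}\tau)$, paralleling the determination of $C_1$ in Appendix~E.
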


\begin{proof}
	Please refer to Appendix G.
\end{proof}

Based on the closed-form $V(q,e)$ in Theorem \ref{sym1}, we have the following corollary summarizing the optimal power control structure for this regime\footnote{From (\ref{finalsol2}), we have $-{\frac{\partial V\left( q, e\right)}{\partial q}}\big/{\frac{\partial V\left( q, e\right)}{\partial e}}=\frac{\overline{\alpha}\tau e}{-e^2+\overline{\lambda}\tau e - \overline{\alpha}\tau q}$.}:
\vspace{-0.5cm}
\begin{center}
\textcolor{blue}{}
\framebox{\begin{minipage}[t]{1\columnwidth}
\begin{Corollary}	\emph{(Optimal Power Control Structure  for the \txtblue{Small-Data-Arrival}-Energy-Limited Regime)}	\label{cor1wewe111}
	The optimal power control for  the \txtblue{small-data-arrival}-energy-limited regime is given by 
	\begin{itemize}
		\item When $0< e < e^{th}$ and $q>\frac{-e^2 + \overline{\lambda} \tau e}{\overline{\alpha}\tau}$, $p^\ast = 0$.
		\item When $0< e < \sqrt{\overline{\alpha}\tau q}$ and $q<\frac{-e^2 + \overline{\lambda} \tau e}{\overline{\alpha}\tau}$,  the water level $-{\frac{\partial V\left( q, e\right)}{\partial q}}\big/{\frac{\partial V\left( q, e\right)}{\partial e}}$ is an increasing function of  $q$ for a  given $e$, and is  a decreasing function of  $e$ for  a given $q$.
		\item When $  \sqrt{\overline{\alpha}\tau q}<e<e^{th}$ and $q<\frac{-e^2 + \overline{\lambda} \tau e}{\overline{\alpha}\tau}$,  the water level $-{\frac{\partial V\left( q, e\right)}{\partial q}}\big/{\frac{\partial V\left( q, e\right)}{\partial e}}$ is an increasing  function of  $q$ for a given $e$, and is  an increasing function of  $e$ for a  given $q$.
		\item When $e\geq e^{th}$, we have $p^\ast=\frac{e}{\tau}$.	\KO
	\end{itemize}
\end{Corollary}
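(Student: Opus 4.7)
The plan is to plug the closed-form priority function from Theorem \ref{sym1} directly into the general power-control formula of Corollary \ref{corooptk} and then perform a case analysis on the sign and monotonicity of the resulting water-level expression. First, for $0<e<e^{th}$, I will differentiate (\ref{finalsol2}) to obtain $\frac{\partial V}{\partial q}=-\frac{e}{\overline{\lambda}\overline{\alpha}\tau}$ and $\frac{\partial V}{\partial e}=-\frac{e^2}{\overline{\lambda}\overline{\alpha}^2\tau^2}+\frac{e}{\overline{\alpha}^2\tau}-\frac{q}{\overline{\lambda}\overline{\alpha}\tau}$, and then combine them to reproduce the water-level expression given in the footnote,
\begin{equation*}
W(q,e)\triangleq -\frac{\partial V/\partial q}{\partial V/\partial e}=\frac{\overline{\alpha}\tau e}{-e^2+\overline{\lambda}\tau e-\overline{\alpha}\tau q}.
\end{equation*}

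Next, I will establish the zero-power region. Since $W(q,e)\le 0$ forces the positive part in (\ref{optpow}) to vanish for every CSI realization $h$, it suffices to determine when the denominator is non-positive. With $e>0$, this happens iff $-e^2+\overline{\lambda}\tau e-\overline{\alpha}\tau q\le 0$, i.e.\ $q\ge\frac{-e^2+\overline{\lambda}\tau e}{\overline{\alpha}\tau}$, which recovers the first bullet of the corollary. In the complementary region $q<\frac{-e^2+\overline{\lambda}\tau e}{\overline{\alpha}\tau}$, the denominator is positive and $W(q,e)>0$; the monotonicity in $q$ then follows from
\begin{equation*}
\frac{\partial W}{\partial q}=\frac{(\overline{\alpha}\tau)^2 e}{(-e^2+\overline{\lambda}\tau e-\overline{\alpha}\tau q)^2}>0,
\end{equation*}
which holds throughout the region, proving the $q$-monotonicity parts of bullets two and three.

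For the $e$-monotonicity, I will compute
\begin{equation*}
\frac{\partial W}{\partial e}=\overline{\alpha}\tau\cdot\frac{e^2-\overline{\alpha}\tau q}{(-e^2+\overline{\lambda}\tau e-\overline{\alpha}\tau q)^2},
\end{equation*}
so the sign of $\partial W/\partial e$ is governed by the numerator $e^2-\overline{\alpha}\tau q$. This yields the critical threshold $e=\sqrt{\overline{\alpha}\tau q}$: $W$ is decreasing in $e$ for $0<e<\sqrt{\overline{\alpha}\tau q}$ and increasing in $e$ for $\sqrt{\overline{\alpha}\tau q}<e<e^{th}$, matching the second and third bullets. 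Finally, for $e\ge e^{th}$, Theorem \ref{sym1} states that $V$ depends only on $q$, so $\partial V/\partial e=0$, which makes the water level in (\ref{optpow}) infinite; hence the minimum is attained at the boundary $p^*=e/\tau$, giving the last bullet.

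The derivation is largely mechanical once $V(q,e)$ from Theorem \ref{sym1} is available; the only care needed is to verify that the algebraic simplification of $\partial V/\partial e$ indeed produces the compact quadratic form $-e^2+\overline{\lambda}\tau e-\overline{\alpha}\tau q$ (up to the common factor $\frac{1}{\overline{\lambda}\overline{\alpha}^2\tau^2}$) so that $W$ matches the footnote expression, and to keep track of the sign of the denominator when inverting $\partial W/\partial e$. These bookkeeping details, rather than any deep argument, constitute the only obstacle.
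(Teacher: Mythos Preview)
Your proposal is correct and follows essentially the same approach as the paper's proof in Appendix H: compute the water level from the closed-form $V(q,e)$ of Theorem \ref{sym1}, read off the zero-power region from the sign of the denominator, and obtain the monotonicity in $q$ and $e$ by differentiating the water-level expression, with the $e\ge e^{th}$ case handled via $\partial V/\partial e=0$. If anything, your write-up is slightly more explicit in recording the intermediate partial derivatives $\partial V/\partial q$ and $\partial V/\partial e$ before forming the ratio.
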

\end{minipage}}
\par\end{center}

\begin{proof}
	Please refer to Appendix H.
\end{proof}

\begin{figure}
  \centering
  \includegraphics[width=3.5in]{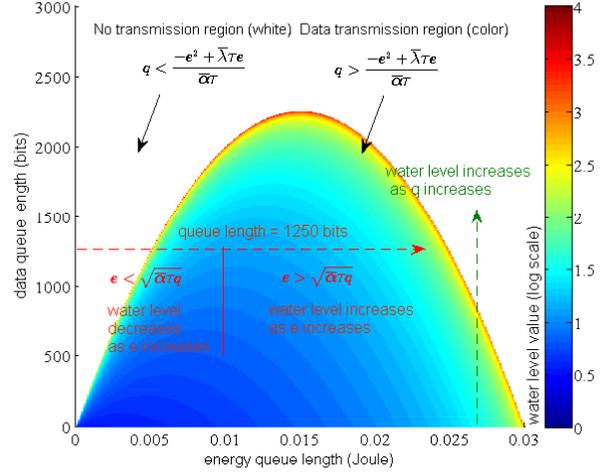}
  \caption{Water level versus the data queue length and the energy queue length for the \txtblue{small-data-arrival}-energy-limited regime, where $\tau = 0.1$ s, $\overline{\lambda}=0.3$ pcks/s, $\overline{\alpha}=1$ W, bandwidth is $1$ MHz, and average packet length  is $1$ Mbits.}\vspace{-0.5cm}
  \label{figillu2}
\end{figure}

Fig.~\ref{figillu2} illustrates the water level versus the data queue length and the energy queue length when $e<e^{th}$. Specifically,  Corollary \ref{cor1wewe111} means that when $0< e < e^{th}$ and for a large data  queue length  $q>\frac{-e^2 + \overline{\lambda} \tau e}{\overline{\alpha}\tau}$, we do not use any renewable energy  to transmit data. The reason is that even though we can use  the limited  energy for data  transmission, the data queue length will not decrease significantly, which contributes very little to the delay performance. Instead, if we do not use the energy at the current slot, we can save it and wait for the future good transmissions opportunities. On the other hand, for a small queue length $q<\frac{-e^2 + \overline{\lambda} \tau e}{\overline{\alpha}\tau}$, we can use the available energy  for transmission and the  water level is increasing w.r.t. $q$, which is in accordance with the high urgency of the data flow.  Furthermore, when\footnote{In order for $q<\frac{-e^2 + \overline{\lambda} \tau e}{\overline{\alpha}\tau}$ to hold, we require $\sqrt{\overline{\alpha}\tau q} \leq \frac{\overline{\lambda}\tau}{2}$. For small $\overline{\lambda}$, $\frac{e^{th}}{\tau} \exp\left(-\frac{\tau}{e^{th}} \right)\approx \overline{\lambda}<\frac{e^{th}}{\tau}\Rightarrow e^{th}>\overline{\lambda}\tau$. Therefore, we have $\sqrt{\overline{\alpha}\tau q} \in [0, e^{th}]$.}  $0< e < \sqrt{\overline{\alpha}\tau q}$, large $e$ leads to a lower water level. This is reasonable because it is appropriate that for small $e$, we  can save some energy in the current slot for better transmission opportunities in the future slots. When $\sqrt{\overline{\alpha}\tau q}<e<e^{th}$, large $e$ leads to a higher  water level  because we  have sufficient available energy  and it is appropriate to use more power to decrease the data queue. When $e\geq e^{th}$, we have plenty of renewable energy, and it is sufficient to use all the available energy to support the \txtblue{small data arrival rate}.

\subsubsection{\txtblue{Small-Data-Arrival}-Energy-Sufficient  Regime} 
In this regime, we consider the operating  region with  $\overline{\lambda} \leq E_1\left(\frac{1}{\overline{\alpha}} \right)$.  This regime corresponds to the scenario that we have a \txtblue{small data arrival rate} for the data queue and sufficient renewable energy supply  in the energy queue  to maintain the data queue stable.  The closed-form priority function $V(q,e)$ for this regime  is given by the following theorem:
\begin{Theorem} \emph{(Closed-Form $V(q,e)$  for the \txtblue{Small-Data-Arrival}-Energy-Sufficient   Regime)}	\label{sym33}
	Under the  \txtblue{small-data-arrival}-energy-sufficient regime, the closed-form $V(q,e)$ of the PDE in Theorem \ref{HJB11}   is given by
	\begin{itemize}
		\item When $0<e<e^{th}$ ($e^{th}$ is the solution of $E_1 \left( \frac{\tau}{e^{th}} \right)=\overline{\lambda}$), we have
			\begin{align}
		V(q,e)=\frac{1}{2 \overline{\alpha}^2 \tau} e^2 - \frac{e q}{\overline{\lambda}\overline{\alpha}\tau}-\frac{1}{2\lambda^2\tau}\left({q}-\frac{\overline{\lambda}}{\overline{\alpha}}e\right)^2	\label{finalsol3}
	\end{align}
		\item When $e\geq e^{th}$, $V(q,e)$ is a function of $q$ only.\KO
	\end{itemize}
\end{Theorem}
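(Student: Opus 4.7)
The plan is to verify the explicit form of $V(q,e)$ given in~(\ref{finalsol3}) by substituting it into the HJB equation of Theorem~\ref{HJB11} in each of the two energy regimes, leveraging the closed-form water-filling control from Corollary~\ref{corooptk} and the fact that $|h|^2\sim\mathrm{Exp}(1)$ so that all expectations can be written in terms of exponential integrals. The overall structure will mirror the proofs of Theorems~\ref{sym111} and~\ref{sym1}, but with coefficients chosen to match the stronger energy-sufficiency condition $\overline{\lambda}\leq E_1(1/\overline{\alpha})$ operative in this regime.

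First, I would handle the high-energy regime $e\geq e^{th}$. An ansatz of the form $V=V(q)$ (independent of $e$) makes $\partial V/\partial e\to 0$ in~(\ref{optpow}), the water level diverges, and the energy-availability cap binds, giving $p^\ast=e/\tau$. Plugging this back into the HJB and taking the channel expectation via $\mathbb{E}_{|h|^2}[\log(1+(e/\tau)|h|^2)]=\exp(\tau/e)E_1(\tau/e)$ reduces the PDE to an ODE in $q$, with the crossover at $e^{th}$ defined by $E_1(\tau/e^{th})=\overline{\lambda}$ ensuring that the drift coefficient of the ODE vanishes consistently. The sufficiency condition $\overline{\lambda}\leq E_1(1/\overline{\alpha})$ then guarantees that the energy queue is stable and that $V(q,e)$ indeed depends on $q$ alone in this region.

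For the low-energy regime $0<e<e^{th}$, I would compute $\partial V/\partial q$ and $\partial V/\partial e$ directly from the ansatz~(\ref{finalsol3}), plug the ratio into the water-filling expression of Corollary~\ref{corooptk}, and evaluate $\mathbb{E}[R(h,p^\ast)]$ and $\mathbb{E}[p^\ast]$ via the standard identities $\mathbb{E}[(a-1/|h|^2)^+]=\exp(-1/a)-(1/a)E_1(1/a)$ and $\mathbb{E}[\log(a|h|^2)\,\mathbf{1}\{|h|^2>1/a\}]$ for $|h|^2\sim\mathrm{Exp}(1)$. Verification then amounts to showing the HJB equation holds identically in $(q,e)$, followed by continuity of $V$ and its partial derivatives with the high-energy piece at $e=e^{th}$ and the three transversality conditions in~(\ref{trankern}).

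The main obstacle is the algebraic verification that the HJB actually collapses onto the proposed quadratic form. Unlike Theorems~\ref{sym111} and~\ref{sym1}, whose priority functions are essentially quadratic in $e$ with a simple $qe$ coupling, the presence of the extra term $-\tfrac{1}{2\overline{\lambda}^2\tau}(q-(\overline{\lambda}/\overline{\alpha})e)^2$ here is tailored to cancel residual cross-terms produced by the expectation under the sufficient-energy condition $\overline{\lambda}\leq E_1(1/\overline{\alpha})$. I expect the cleanest route to this cancellation is the change of variable $\widetilde{q}=q-(\overline{\lambda}/\overline{\alpha})e$, which is invariant along the joint mean drift of the VCTS; writing $V$ in terms of $(\widetilde{q},e)$ should decouple the two directions and reduce the check to matching one-dimensional coefficients, while simultaneously explaining why this extra quadratic correction appears only in the small-data-arrival-energy-sufficient regime.
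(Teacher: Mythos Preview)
Your plan has a genuine gap: the closed form~(\ref{finalsol3}) does \emph{not} satisfy the full HJB equation~(\ref{cenHJB}), so a direct substitution check will not close. If you compute the partials of~(\ref{finalsol3}) you will find $V_e\equiv 0$ identically (the three $e$-contributions cancel), whence the water level $-V_q/V_e$ is infinite and $p^\ast=e/\tau$. Plugging this back into~(\ref{cenHJB}) gives
\[
\frac{q}{\overline{\lambda}\tau}+V_q\Big(\overline{\lambda}-\exp(\tau/e)E_1(\tau/e)\Big)=0,
\]
whereas~(\ref{finalsol3}) yields $V_q=-q/(\overline{\lambda}^2\tau)$; these are compatible only if $\exp(\tau/e)E_1(\tau/e)=0$, which is false. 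So the ``verify the ansatz against the full HJB'' strategy cannot succeed, and the exponential-integral identities you list for a finite water level are never invoked.

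What the paper actually does is an asymptotic \emph{derivation}, not a verification. In this regime the reflection at $q=0$ is active (because $\overline{\lambda}\leq E_1(1/\overline{\alpha})$ means the data queue drains), so the relevant boundary condition is $V_q(0,e)=0$ from the first line of~(\ref{trankern}), and one works in the small-$V_q$ region. There $\mathbb{E}[R(h,p^\ast)]=o(1)$ and $\mathbb{E}[p^\ast]=o(1)$, and the HJB collapses to the linear first-order PDE $\tfrac{q}{\overline{\lambda}\tau}+V_q\overline{\lambda}+V_e\overline{\alpha}=0$. Your characteristic variable $\widetilde{q}=q-(\overline{\lambda}/\overline{\alpha})e$ is exactly right for this linear PDE, and the general solution carries an arbitrary function $\phi(\widetilde{q})$; the extra quadratic $-\tfrac{1}{2\overline{\lambda}^2\tau}\widetilde{q}^{\,2}$ is then forced by the boundary condition $V_q(0,e)=0$, \emph{not} by cancelling cross-terms from the expectation as you suggest. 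So the missing ingredients are (i) the asymptotic reduction of the HJB before solving, and (ii) identifying the reflection condition at $q=0$ as the mechanism selecting $\phi$.
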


\begin{proof}
	Please refer to Appendix I.
\end{proof}

Based on the closed-form $V(q,e)$ in Theorem \ref{sym33}, we have the following corollary summarizing the optimal power control structure in this regime\footnote{Based on (\ref{finalsol3}), we have ${\frac{\partial V\left( q, e\right)}{\partial e}}=0$ for all $q, e$, which induces an infinite water level in (\ref{optpow}). Hence, we have  $p^\ast=\frac{e}{\tau}$ when $0<e<e^{th}$.}:
\vspace{-1cm}
\begin{center}
\textcolor{blue}{}
\framebox{\begin{minipage}[t]{1\columnwidth}
\begin{Corollary}	\emph{(Optimal Power Control Structure  for the  \txtblue{Small-Data-Arrival}-Energy-Sufficient Regime)}	\label{cor1ssssd111}
	The optimal power control for  the    \txtblue{small-data-arrival}-energy-sufficient  regime is given by 
	\begin{align}
		p^\ast = \frac{e}{\tau}	\label{ppolicygreedy}
	\end{align}~\hfill~\IEEEQED
\end{Corollary}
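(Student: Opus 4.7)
The plan is to substitute the closed-form priority function from Theorem \ref{sym33} into the optimal-power formula of Corollary \ref{corooptk} and verify that the water level is infinite, so the outer $\min$ in (\ref{optpow}) is always attained by the energy-availability cap $e/\tau$.

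First I would compute $\partial V/\partial e$ on the subregion $0 < e < e^{th}$ directly from (\ref{finalsol3}). Term by term, the quadratic $e^2/(2\overline{\alpha}^2\tau)$ contributes $e/(\overline{\alpha}^2\tau)$, the bilinear $-eq/(\overline{\lambda}\overline{\alpha}\tau)$ contributes $-q/(\overline{\lambda}\overline{\alpha}\tau)$, and the chain rule applied to $-(q-(\overline{\lambda}/\overline{\alpha})e)^2/(2\overline{\lambda}^2\tau)$ contributes $(q-(\overline{\lambda}/\overline{\alpha})e)/(\overline{\lambda}\overline{\alpha}\tau)$. Expanding the last term and combining, the three contributions cancel pairwise and yield $\partial V/\partial e \equiv 0$. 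On the complementary region $e \geq e^{th}$, Theorem \ref{sym33} directly states that $V(q,e)$ is a function of $q$ alone, so $\partial V/\partial e = 0$ there as well. Hence $\partial V/\partial e$ vanishes identically on the admissible region $(q,e) \in [0,\infty) \times [0, N_E)$.

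Next I would feed $\partial V/\partial e = 0$ into the water-filling expression $-(\partial V/\partial q)/(\partial V/\partial e)$ appearing in (\ref{optpow}). Since the denominator vanishes while the numerator is generically nonzero (as can be verified from (\ref{finalsol3}) for $q>0$), the water level diverges, its positive part exceeds $1/|h|^2$ for every channel realization, and the outer $\min$ is necessarily attained by $e/\tau$. This immediately yields $p^\ast = e/\tau$ and establishes (\ref{ppolicygreedy}).

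The only subtlety --- the ``main obstacle'' --- is that the water-filling expression in (\ref{optpow}) was implicitly derived under $\partial V/\partial e \neq 0$, so the degenerate case warrants a direct cross-check against the HJB equation (\ref{cenHJB}). Substituting $\partial V/\partial e = 0$ there, the only $p$-dependent piece of the integrand reduces to $\partial V/\partial q \cdot (-R(h,p))$, which is strictly monotone in $p$ for each channel realization $h$; the pointwise minimizer over the feasible interval $[0, e/\tau]$ therefore lies at a boundary, and the same limiting argument that produced (\ref{optpow}) identifies the active boundary as the upper endpoint $p = e/\tau$. This confirms $p^\ast = e/\tau$ throughout the small-data-arrival-energy-sufficient regime.
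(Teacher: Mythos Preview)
Your proposal is correct and follows essentially the same approach as the paper: compute $\partial V/\partial e$ from (\ref{finalsol3}), observe it vanishes identically, and conclude that the water level in (\ref{optpow}) is infinite so that $p^\ast=e/\tau$. Your additional direct cross-check against the HJB equation for the degenerate $\partial V/\partial e=0$ case is a nice extra safeguard not present in the paper's brief argument, but the core reasoning is the same.
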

\end{minipage}}
\par\end{center}

Corollary \ref{cor1ssssd111} means that the optimal control policy for the   \txtblue{small-data-arrival}-energy-sufficient regime is to use all the available energy in the energy buffer.  This is reasonable because in this regime we have $\overline{\lambda}\leq E_1\left(\frac{1}{\overline{\alpha}} \right)$, which means that there is plenty of renewable energy  and it is sufficient to use all the available energy to support the data traffic. 

\txtblue{Based on the closed-form solutions for the asymptotic operating regions in Theorem \ref{sym111}--\ref{sym33}, we propose the following  solution for the PDE in Theorem \ref{HJB11} that covers all regimes w.r.t. $\big(\overline{\lambda},\overline{\alpha}\big)$: 
\begin{align}	\label{solprameter}
	V(q,e)\approx \left\{\begin{aligned}
		&\text{sol. in Thm 5},\\
		& \qquad \overline{\alpha}\geq \overline{\alpha}^{th}, E_1\left(\frac{1}{\overline{\alpha}} \right)< \overline{\lambda} <\exp\left( \frac{1}{x}\right)E_1 \left(\frac{1}{x} \right)	\\
		&\text{sol. in Thm 6}, \\
		& \qquad \overline{\alpha}< \overline{\alpha}^{th}, E_1\left(\frac{1}{\overline{\alpha}} \right)< \overline{\lambda} <\exp\left( \frac{1}{x}\right)E_1 \left(\frac{1}{x} \right)	\\
		&\text{sol. in Thm 7},\qquad	\overline{\lambda}\leq E_1\left(\frac{1}{\overline{\alpha}} \right)
	\end{aligned}
	\right.
\end{align}
where  $\overline{\alpha}^{th}>0$ is a solution parameter.}

\subsection{Stability Conditions of using the Closed-Form Solution in the Discrete-Time System} 
 In the previous subsection, we  obtain the closed-form optimal power control solutions for different asymptotic regimes as in Theorem \ref{sym111}--\ref{sym33}. We then establish the following theorem on the stability conditions when using the control policy in Corollary \ref{corooptk} in the original discrete time system in (\ref{dataQ}) and (\ref{energyQ}):
\begin{Theorem}	\emph{(Stability Conditions of using the Closed-Form Solutions in the Discrete-Time System):}	\label{optimawlcontrolreg11}	Using (\ref{solprameter}) and the closed-form control policy in  Corollary \ref{corooptk}, if the following conditions are \txtblue{satisfied}:
	\begin{align}
		&\overline{\lambda}<\mathbb{E}\left[\exp\left(\frac{1}{\alpha} \right)E_1\left(\frac{1}{\alpha} \right)\right]	\label{energyrequqweir}\\
		&N_E\geq Ne^\ast	\label{energyrequqweirsad}
	\end{align}
	where $e^\ast$ is defined in (\ref{energyrequir1}), then the data queue in the original discrete time system in (\ref{dataQ}) is stable, in the sense that $\lim_{n\rightarrow \infty} \mathbb{E} \big[  Q^2(n)  \big] < \infty$. ~\hfill~\IEEEQED
\end{Theorem}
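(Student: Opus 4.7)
The plan is to establish second-moment stability through a Lyapunov drift argument, specifically by applying a Foster-Lyapunov type moment bound to the joint controlled Markov chain $\{(Q(n), E(n))\}$ under the closed-form policy from Corollary 1 evaluated with the priority function in (\ref{solprameter}). I would choose a Lyapunov function of the form $L(Q,E) = Q^{3}$ (or $Q^{2+\delta}$ for some $\delta>0$), because the conclusion we need, $\lim_{n\to\infty}\mathbb{E}[Q^2(n)]<\infty$, is a second-moment statement, and standard moment-bound theorems (e.g., Meyn–Tweedie) show that an $(r{+}1)$-step negative drift yields boundedness of the $r$-th moment.

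First, I would analyze the steady-state behavior of the energy queue under the proposed policy. The block i.i.d. assumption with block length $N$ and the buffer bound $N_E \geq N e^\ast$ from (\ref{energyrequqweirsad}) together ensure that, within each arrival block, the energy queue has enough headroom that overflow (capture by the reflection at $N_E$) does not cause the effective per-slot available energy to fall below the threshold used in the asymptotic analysis of Theorem 5, Theorem 6, and Theorem 7. Concretely, I would verify that the time-average of $E(n)/\tau$ under the proposed controller stays close to $\overline{\alpha}$, up to lower-order terms controlled by the block size and the buffer slack.

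Second, I would compute the one-step conditional drift. From (\ref{dataQ}), $Q(n+1) \le Q(n) - R(h(n), p(n))\tau + \lambda(n)\tau$, so expanding the cube and taking conditional expectations given $\boldsymbol{\chi}(n)$ yields, using boundedness of second moments of $\lambda$ and of $R$ under the energy-availability cap,
\begin{align}
\mathbb{E}\!\left[Q^3(n{+}1) - Q^3(n)\,\big|\,\boldsymbol{\chi}(n)\right] \le -3\tau Q^2(n)\,\mathbb{E}\!\left[R(h(n),p(n))-\lambda(n)\,\big|\,\boldsymbol{\chi}(n)\right] + C_1 Q(n) + C_2 \notag
\end{align}
for constants $C_1, C_2$ depending on the arrival and service-rate moments. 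Inspecting the three regime-specific closed-form priority functions in (\ref{finalsol1}), (\ref{finalsol2}), (\ref{finalsol3}), one sees that in each regime the water level $-(\partial V/\partial q)/(\partial V/\partial e)$ grows in $q$, so for $Q(n)$ sufficiently large the clipped power in (\ref{optpow}) saturates to $E(n)/\tau$. Consequently, in the large-$Q$ regime,
\begin{align}
\mathbb{E}\!\left[R(h(n),p^\ast(n))\,\big|\,\boldsymbol{\chi}(n)\right] \to \mathbb{E}_h\!\left[\log\!\left(1+\tfrac{E(n)}{\tau}|h|^2\right)\right] = \exp\!\left(\tfrac{\tau}{E(n)}\right) E_1\!\left(\tfrac{\tau}{E(n)}\right). \notag
\end{align}
Averaging over the stationary distribution of $E(n)$ established in the first step, and using monotonicity of $x\mapsto \exp(1/x)E_1(1/x)$, the resulting expected service rate is bounded below by $\mathbb{E}[\exp(1/\alpha)E_1(1/\alpha)]$, which by hypothesis (\ref{energyrequqweir}) strictly exceeds $\overline{\lambda}$.

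Third, combining these ingredients gives a drift bound of the form $\mathbb{E}[Q^3(n{+}1)-Q^3(n)\,|\,\boldsymbol{\chi}(n)] \le -\epsilon Q^2(n) + C_3$ outside a compact set of $(Q,E)$, for some $\epsilon>0$ and constant $C_3$. Applying the standard Foster–Lyapunov moment-bound theorem (Theorem 14.0.1 of Meyn–Tweedie) to this drift inequality then yields $\sup_n \mathbb{E}[Q^2(n)] < \infty$, which is the desired stability conclusion. The main obstacle I anticipate is step one: justifying that the induced stationary energy distribution under the closed-form policy truly concentrates around $\overline{\alpha}\tau$ despite the block i.i.d. arrivals, the reflection at $N_E$, and the interaction with $Q$ through the water-filling threshold. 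To handle this cleanly I would augment the state with the intra-block index to restore Markovianity, then invoke the condition $N_E \geq N e^\ast$ to rule out sustained reflection loss, effectively reducing the analysis to the VCTS steady-state where the asymptotic formulas of Theorems 5–7 were derived.
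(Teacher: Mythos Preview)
Your Lyapunov drift framework is reasonable, but there is a concrete error in the second step that breaks the argument. You assert that ``in each regime the water level $-(\partial V/\partial q)/(\partial V/\partial e)$ grows in $q$, so for $Q(n)$ sufficiently large the clipped power in (\ref{optpow}) saturates to $E(n)/\tau$.'' This is false for the two regimes governed by Theorems~\ref{sym111} and~\ref{sym1} whenever $e<e^{th}$. From (\ref{finalsol1}) the water level is $\overline{\alpha}e\big/\big(e(\gamma_{eu}+\overline{\lambda}-\log(e/\tau))-\overline{\alpha}q\big)$, which is positive and increasing in $q$ only while $q$ is \emph{below} the threshold $\tfrac{e}{\overline{\alpha}}(\gamma_{eu}+\overline{\lambda}-\log(e/\tau))$; once $q$ exceeds that threshold the denominator is negative and the water level is negative, so $p^\ast=0$. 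Corollaries~\ref{cor1sd111} and~\ref{cor1wewe111} state this explicitly. Hence for sufficiently large $Q$ and $E<e^{th}$ the controller \emph{does not transmit at all}, and your one-step drift is strictly positive (equal to $\overline{\lambda}\tau$) on that part of the state space. No choice of Lyapunov exponent repairs this.

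The paper's proof confronts exactly this behavior by splitting into two cases on the energy level and, in the low-energy case $E(0)<e^{th}$, using a \emph{multi-step} drift over roughly $\lceil (e^{th}-E(0))/\alpha\rceil$ slots: during those slots $p^\ast=0$ so the energy queue charges deterministically past $e^{th}$, after which the greedy action $p^\ast=E/\tau$ applies for the remainder of the block. Condition (\ref{energyrequqweirsad}), $N_E\ge Ne^\ast$, together with the block-i.i.d.\ structure of $\alpha$, is what guarantees that the charging phase is short relative to the block length $N$, so that the expected departures over the block dominate the expected arrivals via (\ref{energyrequqweir}). Your outline never uses (\ref{energyrequqweirsad}) in this way; you invoke it only to ``rule out sustained reflection loss,'' which is the wrong role. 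To fix your argument you would need to replace the one-step drift by an $N$-step (or variable-horizon) drift and carry out the accounting the paper does in Case~2 of Appendix~J.
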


\begin{proof}
	Please refer to Appendix J.
\end{proof}

Theorem \ref{optimawlcontrolreg11} means that using (\ref{solprameter}), the  closed-form control policy in  Corollary \ref{corooptk} is admissible according to Definition \ref{adddtdomain}.
\begin{Remark}	[Interpretation of the Conditions in Theorem \ref{nonempty}]	\
	\begin{itemize}
		\item \textbf{Interpretation of the Condition on $\overline{\lambda}$ and $\overline{\alpha}$ in (\ref{energyrequqweir}):}	The condition in (\ref{energyrequqweir}) implies\footnote{$(\ref{energyrequqweir})\overset{(a)}{\Rightarrow}\overline{\lambda} <\exp\left({\frac{1}{\overline{\alpha}}} \right) E_1\left(\frac{1}{\overline{\alpha}} \right)=\mathcal{O}\left(\log\overline{\alpha}\right)$, where $(a)$ is due to $\mathbb{E}\left[\exp\left({\frac{1}{{\alpha}}} \right) E_1\left(\frac{1}{{\alpha}} \right)\right]<\exp\left({\frac{1}{\overline{\alpha}}} \right) E_1\left(\frac{1}{\overline{\alpha}} \right)$ using the concavity of  $\exp\left({\frac{1}{{x}}} \right) E_1\left(\frac{1}{{x}} \right)$ and the \emph{Jensen's Inequality}.  Therefore, $\overline{\alpha}$ grows at least at the order of $\exp(\overline{\lambda})$.}  that $\overline{\alpha}$ grows at least at the order of $\exp(\overline{\lambda})$. It indicates that for given $\overline{\lambda}$, if $\overline{\alpha}$  is too small, even if we use all the available energy in the energy buffer at each time slot, the average data arrival rate will be larger than the average data departure rate for the data queue buffer. Therefore, the data queue cannot be stabilized.  		
		\item \textbf{Interpretation of the Condition on $N_E$  in (\ref{energyrequqweirsad}):}	The condition in (\ref{energyrequqweirsad}) gives a first order design guideline on the dimensioning of the energy storage capacity  required at the transmitter. For example, $N_E$ should be at least at a similar order\footnote{From (\ref{ssd2}) in Appendix D, we have $e^\ast>\overline{\alpha}\tau$. Therefore, from (\ref{energyrequqweirsad}), we have $N_E>N \overline{\alpha}\tau $ which means that $N_E$ grows at least at the order of $N \overline{\alpha}\tau$.} of $N \overline{\alpha}\tau$. This condition on $N_E$  ensures that the  energy storage at the transmitter has sufficient energy to support data transmission for $N$ slots when $\alpha(t)$ is small.\KO
	\end{itemize}
\end{Remark}

\section{Simulations}
In this section, we compare the performance of the proposed closed-form delay-optimal power control scheme in (\ref{optpow}) with the following three baselines using numerical simulations:

\begin{itemize}
	\item \textbf{Baseline 1, Greedy Strategy (GS)\footnote{Baseline 1 (Baseline 2)  refers to the greedy policy (CSI dependent policy) in Section III (Section V) of \cite{queuestable}.} \cite{queuestable}:} At each time slot, the transmitter sends data to the receiver using the power $p(t)=\min \Big\{\overline{\alpha}- \epsilon, \frac{E(t)}{ \tau} \Big\}$ for a given small positive constant $\epsilon$.  The GS is a throughput-optimal policy in the stability sense, i.e., it  ensures the stability of the queueing network.	
	\item \textbf{Baseline 2, CSI-Only Water-Filling Strategy (COWFS) \cite{queuestable}:} At each time slot, the transmitter sends data to the receiver using the power $p(t)= \Big\{ \big(\frac{1}{\gamma}- \frac{1}{|h(t)|^2}\big)^+,\frac{E(t)}{ \tau} \Big\}$. Specifically, the water-filling solution in the COWFS is obtained by maximizing the ergodic capacity $\mathbb{E}\big[\log ( 1+p |h|^2 )  \big]$ with the average power constraint\footnote{The Lagrangian multiplier $\gamma$ for Baseline 2 and Baseline 3 can be obtained by the following iterative equation: $\gamma(t+1)=\left[\gamma(t)+ a_t \left(p - \overline{\alpha}+ \epsilon\right)\right]^+$, where $a_t$ is the  step size satisfying $\sum_t a_t=\infty$, $\sum_t a_t^2<\infty$. As $t\rightarrow \infty$, the convergent $\gamma(\infty)$ can be shown to satisfy  the average power constraint $\mathbb{E}[p]=\overline{\alpha}- \epsilon$ \cite{delaysurvey}.} $\mathbb{E}[p]=\overline{\alpha}- \epsilon$ for a  given small positive constant $\epsilon$.
	\item \textbf{Baseline 3, Queue-Weighted Water-Filling Strategy  (QWWFS) \cite{stab1}:} At each time slot, the transmitter sends data to the receiver using the power $p= \Big\{ \big(\frac{Q(t)}{\gamma}- \frac{1}{|h(t)|^2}\big)^+,\frac{E(t)}{ \tau} \Big\}$. The QWWFS is  also a  throughput-optimal policy. $\gamma$ is the  Lagrangian multiplier associated with the  average power constraint  $\mathbb{E}[p]=\overline{\alpha}- \epsilon$ for a  given small positive constant $\epsilon$.
\end{itemize}

In the simulation, we consider a point-to-point energy harvesting system, where a base station  (BS) communicates with a mobile station. The BS is equipped with a 40cm$\times$50cm solar panel with energy harvesting performance\footnote{If the surrounding environment of the BS has sufficient sunlight, the energy harvesting performance is high. Otherwise, the energy harvesting performance is low \cite{energymetric}.} 1$\sim$10 mW/cm$^2$. We assume Poisson packet arrival with average packet arrival rate $\overline{\lambda}$ (pck/s) and an exponentially distributed random packet size with mean  $1$ Mbits. The decision slot duration $\tau$ is $50$ ms, and the total bandwidth is $1$ MHz. Furthermore,  we consider Poisson energy arrival \cite{queuestable} with average energy arrival rate $\overline{\alpha}=$ 1$\sim$10 W. We assume that  the block length of the energy arrival process is $N=6000$, i.e., the energy arrival  rate $\alpha(t)$ at the BS changes every 5 min and the renewable energy is stored in a 1.2V 2000 mAh lithium-ion battery. We compare the delay performance of the proposed scheme with the above three baselines.

\subsection{\txtblue{Choice of the Solution Parameter $\overline{\alpha}^{th}$ in (\ref{solprameter})}}

\txtblue{Fig.~\ref{compareees2} illustrates the performance loss ratio\footnote{The performance loss ratio is defined as $\frac{\text{Perf. of the proposed scheme}-\text{Perf. of the VIA}}{\text{Perf. of the VIA}}$.} versus the average energy arrival rate with the average data arrival  rate $\overline{\lambda}=\frac{1}{2}\left[E_1\left(\frac{1}{\overline{\alpha}} \right) + \exp\left( \frac{1}{x}\right)E_1 \left(\frac{1}{x} \right)\right]$.  It can be observed that using the solution in Theorem \ref{sym111}, the performance loss is small for large $\overline{\alpha}$ and it increases as $\overline{\alpha}$ decreases. In addition, using the solution in Theorem \ref{sym1}, the performance loss is small for small $\overline{\alpha}$ and it increases as $\overline{\alpha}$ increases. It can be observed that choosing $\overline{\alpha}^{th}\approx 3.6$ can keep the performance loss down to $6\%$ over the entire operating regime w.r.t. $(\overline{\lambda},\overline{\alpha})$. }

\begin{figure}
  \centering
  \includegraphics[width=2.8in]{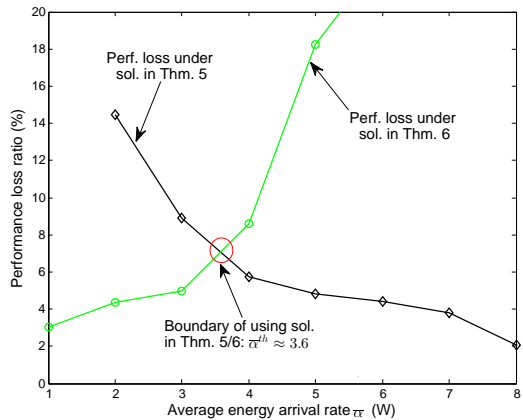}
  \caption{\txtblue{Performance loss ratio versus average energy arrival rate.}}
  \label{compareees2} \vspace{-0.5cm}
\end{figure}
\begin{figure}
  \centering
  \includegraphics[width=3in]{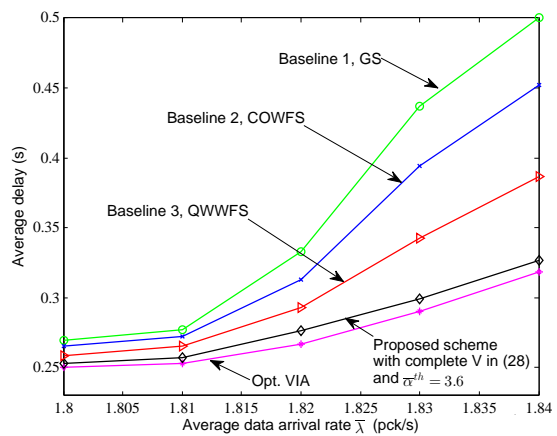}
  \caption{\txtblue{Average delay versus average data arrival rate  for the large-data-arrival-energy-sufficient regime. The average energy arrival is $\overline{\alpha}=10$ W.}}\vspace{-0.5cm}
  \label{fig13} 
\end{figure}

\subsection{Delay Performance for the \txtblue{Large-Data-Arrival}-Energy-Sufficient Regime}

Fig.~\ref{fig13} illustrates the average delay versus the average data arrival rate  for the \txtblue{large-data-arrival}-energy-sufficient regime. The average data arrival rate is $\overline{\lambda}=1.8\sim 1.84$ pcks/s and the average energy arrival rate is $\overline{\alpha}=10$ W. The average delay of all the schemes increases as the average data arrival rate increases, and the proposed scheme achieves significant performance gain over all the baselines. The gain is contributed by the DQSI and the EQSI aware dynamic water level structure.  \txtblue{It can be also observed that the performance of the proposed closed-form solution is very close to that of the optimal value iteration algorithm (VIA) \cite{mdp2}.}

\begin{figure}
  \centering
  \includegraphics[width=3in]{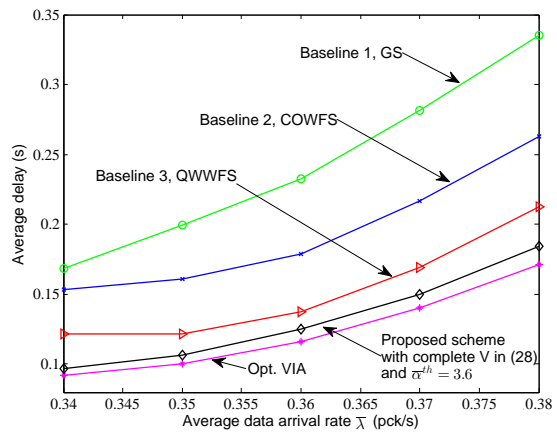}
  \caption{\txtblue{Average delay versus average data arrival rate  for the small-data-arrival-energy-sufficient regime. The average energy arrival is $\overline{\alpha}=1$ W.}}\vspace{-0.5cm}
  \label{fig23}
\end{figure}

\subsection{Delay Performance for the \txtblue{Small-Data-Arrival}-Energy-Limited Regime}

Fig.~\ref{fig23} illustrates the average delay versus the average data arrival rate  for the \txtblue{small-data-arrival}-energy-limited regime. The average data arrival rate is $\overline{\lambda}=0.34\sim0.38$ pcks/s and the average energy arrival rate is $\overline{\alpha}=1$ W. The proposed scheme achieves significant performance gain over all the baselines due to the  DQSI and the EQSI aware dynamic water level structure. \txtblue{Furthermore, the performance of the proposed closed-form solution is very close to that of the VIA.}

\begin{figure}
  \centering
  \includegraphics[width=3in]{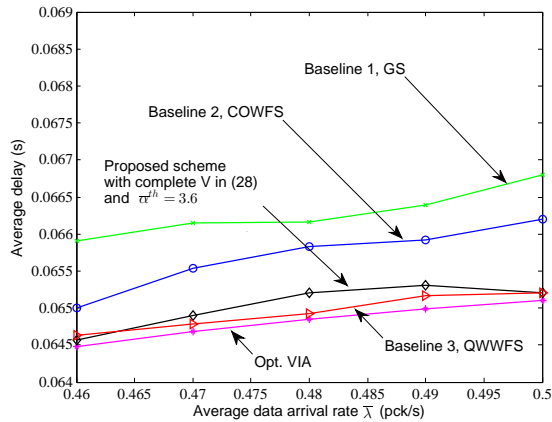}
  \caption{\txtblue{Average delay versus average data arrival rate  for the small-data-arrival-energy-limited regime. The average energy arrival is $\overline{\alpha}=6$ W.}}\vspace{-0.5cm}
  \label{fig33}
\end{figure}

\begin{table*}[t]
\centering
\begin{tabular}{|c|c|c|c|c|c|} 
\hline
\multicolumn{1}{|c}{} & \multicolumn{1}{|c}{Baseline 1} & \multicolumn{1}{|c}{Baseline 2} & \multicolumn{1}{|c}{Baseline 3} &  \multicolumn{1}{|c}{Proposed Scheme} &  \multicolumn{1}{|c|}{VIA} \\ \hline
Computational time ($N_E=2000$) & & & && 759s\\
Computational time ($N_E=4000$)& 0.2374ms & 1.729s& 15.437s & 0.2491ms& $>10^4$s \\
Computational time ($N_E=6000$)& & & &&  $>10^4$s \\	\hline
\end{tabular}
\caption{ {Comparison of the MATLAB computational time of the proposed scheme, the baselines and  the value iteration algorithm (VIA). The system parameters are configured as in Fig.~\ref{fig33}.}}\vspace{-0.5cm}
\label{tabletime}
\end{table*}

\subsection{Delay Performance for the \txtblue{Small-Data-Arrival}-Energy-Sufficient Regime}

Fig.~\ref{fig33} illustrates the average delay versus the average data arrival rate  for the \txtblue{small-data-arrival}-energy-sufficient regime. The average data arrival rate is $\overline{\lambda}=0.34\sim0.38$ pcks/s and the average energy arrival rate is $\overline{\alpha}=6$ W. The delay performance of the proposed scheme is very \txtblue{close} to that of  Baseline 3 and also better than  those of  Baselines 1 and 2.  However, our proposed scheme has lower complexity compared with Baseline 3, which  involves the gradient update to obtain  the Lagrangian multiplier. Therefore, it is better to adopt our proposed scheme for the \txtblue{small-data-arrival}-energy-sufficient regime. \txtblue{Furthermore,  the performance of the proposed closed-form solution is very close to that of the VIA.}

\subsection{Comparison of Complexity in Computational Time}

Table \ref{tabletime}  illustrates the comparison of the MATLAB computational time of the proposed solution, the  baselines and the brute-force VIA \cite{mdp2}. Note that the proposed scheme has similar complexity   to Baseline 1 due to the closed-form priority function. Therefore, our proposed scheme achieves significant performance gain with negligible computational cost.

\section{Summary}
In this paper, we propose a closed-form  delay-optimal power control solution for an energy harvesting wireless network with finite energy storage.  We formulate the associated stochastic optimization problem as an infinite horizon average cost MDP. Using a continuous time approach, we derive closed-form approximate priority functions  for different asymptotic regimes. Based on the closed-form approximations, we propose a closed-form  optimal control policy, which has a multi-level water filling structure and the water level is adaptive to the DQSI and the EQSI. Numerical results show that the proposed power control scheme has much better  performance than the   baselines.

\section*{Appendix A: Proof of Theorem \ref{LemBel}}
Following \emph{Proposition 4.6.1} of \cite{mdp2},  the sufficient conditions for the optimality of Problem \ref{IHAC_MDP} is that there exists a ($\theta^\ast, \{ V^\ast\left(\boldsymbol{\chi}  \right) \}$) that satisfies the following Bellman equation and $V^\ast$ satisfies the transversality condition in (\ref{transodts})  for all  admissible control policy $\Omega$ and initial  state $\boldsymbol{\chi}  \left(0 \right)$:
\begin{align}
	&\theta^\ast + V^\ast\left(\boldsymbol{\chi} \right) = \min_{p<E/\tau} \Big[ \frac{Q}{\overline{\lambda}}+  \sum_{\boldsymbol{\chi}' } \Pr\big[ \boldsymbol{\chi}'\big| \boldsymbol{\chi}, p\big]  V^\ast\left(\boldsymbol{\chi}'\right)    \Big]	\label{equveryimportant} \\
	&= \min_{p<E/\tau} \Big[  \frac{Q}{\overline{\lambda}}+ \sum_{Q',E'} \sum_{h' }  \Pr \big[ Q',E'\big| \boldsymbol{\chi}, p \big] \Pr \big[h'\big]  V^\ast\left(\boldsymbol{\chi}'\right)    \Big]	\notag	
\end{align}
Then, $\theta^\ast= \underset{\Omega}{\min} 	\ \overline{D}\left(\Omega\right)$ is the optimal average cost for any initial state  $\boldsymbol{\chi}\left(0 \right) $. Furthermore,  suppose there exists an stationary admissible $\Omega^*$ with $\Omega^*\left(\boldsymbol{\chi} \right) = p^\ast$ for any $\boldsymbol{\chi} $, where $p^\ast$ attains the minimum of the R.H.S. in (\ref{equveryimportant}) for given $\boldsymbol{\chi}$. Then,  the optimal control policy of Problem \ref{IHAC_MDP} is given by $\Omega^*$.

Taking expectation w.r.t. $h$ on both sizes of (\ref{equveryimportant}) and denoting $V^\ast\left(Q,E  \right) = \mathbb{E}\big[V^\ast\left(\boldsymbol{\chi} \right) \big| Q,E \big]$, we obtain the equivalent Bellman equation in (\ref{OrgBel}) in Theorem \ref{LemBel}.

\section*{Appendix B: Proof of Theorem \ref{HJB11}}
 Suppose $V\left(q,e\right)$ is of class $\mathcal{C}^1(\mathbb{R}_+^2)$, we have  $\mathrm{d} V\left( q, e\right) = \frac{\partial  V\left( q,e\right)}{\partial q}\mathrm{d} q+\frac{\partial  V\left( q,e\right)}{\partial e}\mathrm{d} e$. Substituting  the dynamics in (\ref{queue1}) and (\ref{queue2}), we obtain 
\begin{align}
	& \mathrm{d} V\left( q(t), e(t)\right) = D^{\Omega^v} \left(  V\left( q(t), e(t)\right) \right)  \mathrm{d}t   \\
	&\hspace{1cm} +\frac{\partial  V\left( q(t),e(t)\right)}{\partial q} \mathrm{d} L\left(t \right) -\frac{\partial  V\left( q(t),e(t)\right)}{\partial e} \mathrm{d} U\left(t \right)\notag
\end{align}	
where  $D^{\Omega^v}\left(  V\left( q, e\right) \right) \triangleq \frac{\partial  V\left( q,e\right)}{\partial q}\left(- \mathbb{E}\left[  R\left( h , p \right) \big|q,e\right]  +\overline{\lambda}    \right)\tau+\frac{\partial  V\left( q,e\right)}{\partial e}\big(-\mathbb{E}\left[  p\big|q,e\right] +\overline{\alpha}  \big)\tau$. Integrating  on both sizes w.r.t. $t$ from 0 to $T$, we have
\begin{align}
	& V\left( q(T),e(T)\right)-V\left( q_0,e_0\right)	\label{43ers} \\
	=&\int_0^T D^{\Omega^v}\left(  V\left( q(t), e(t)\right) \right) \mathrm{d}t+\int_0^T \frac{\partial  V\left( q(t),e(t)\right)}{\partial q} \mathrm{d} L\left(t \right)\notag \\
	&-\int_0^T \frac{\partial  V\left( q(t),e(t)\right)}{\partial e} \mathrm{d} U\left(t \right) \notag \\
	\overset{(a)}{=}&\int_0^T D^{\Omega^v}\left(  V\left( q(t), e(t)\right) \right) \mathrm{d}t+\int_0^T \frac{\partial  V\left(0,e(t)\right)}{\partial q} \mathrm{d} L\left(t \right)\notag \\
	&-\int_0^T \frac{\partial  V\left( q(t),N_E\right)}{\partial e} \mathrm{d} U\left(t \right) \notag \\
	=&\int_0^T \left(\frac{q(t)}{\overline{\lambda}}+D^{\Omega^v}\left(  V\left( q(t), e(t)\right) \right) \right)\mathrm{d}t+\int_0^T \frac{\partial  V\left(0,e(t)\right)}{\partial q} \mathrm{d} L\left(t \right)\notag \\
	&-\int_0^T \frac{\partial  V\left( q(t),N_E\right)}{\partial e} \mathrm{d} U\left(t \right) -\int_0^T \frac{q(t)}{\overline{\lambda}}\mathrm{d}t		\label{34impeuqq}
\end{align}
where $(a)$ is because $L(t)$ and $U(t)$ increase only when $q=0$ and $e=N_E$ according to \emph{Chapter 2.4} of \cite{bbmsf}. If $V(q,e)$  satisfies  (\ref{cenHJB}),   from (\ref{34impeuqq}), we have  for any admissible virtual policy $\Omega^v$,
\begin{align}
	&V\left( q_0,e_0\right)\leq   V\left( q(T),e(T)\right)-\int_0^T \frac{\partial  V\left(0,e(t)\right)}{\partial q} \mathrm{d} L\left(t \right)\notag \\
	&+\int_0^T \frac{\partial  V\left( q(t),N_E\right)}{\partial e} \mathrm{d} U\left(t \right) +\int_0^T \frac{q(t)}{\overline{\lambda}}\mathrm{d}t	\label{finalequapp2}
\end{align}

From the boundary conditions in (\ref{trankern}), we have $\limsup_{T \rightarrow \infty } \int_0^T \frac{\partial V\left( 0, e\left(t \right)\right)}{\partial q} \mathrm{d}L\left(t \right)   = 0$, $\limsup_{T \rightarrow \infty } \int_0^T\\  \frac{\partial V\left( q\left(t \right), N_E\right)}{\partial e} \mathrm{d} U\left(t \right) = 0$ and $\limsup_{T \rightarrow \infty } V\left(q\left(T \right), e\left(T \right) \right)= 0$. Hence, taking  the limit superior as $T\rightarrow \infty$   in (\ref{finalequapp2}), we have 
\begin{align}
	&V\left( q_0,e_0\right)\leq \limsup_{T \rightarrow \infty } \int_0^T \frac{q(t)}{\overline{\lambda}}\mathrm{d}t
\end{align}
where the above equality is achieved if the admissible virtual stationary policy $\Omega^{v}(q,e,h)$ attains the minimum in the HJB equation in (\ref{cenHJB}) for all $(q,e,h)$. Hence, such $\Omega^{v}$ is the optimal control policy of the total cost problem in VCTS in Problem \ref{fluid problem1}.

\section*{Appendix C: Proof of Theorem  \ref{them111}}

\vspace{0.1cm}
\hspace{-0.3cm} \emph{A. Relationship between the Discrete Time and VCTS  Optimality Equations}
\vspace{0.2cm}

We first prove the following corollary based on Theorem \ref{LemBel}.
\begin{Corollary}	[Approximate Optimality Equation]	\label{cor1}
	Suppose there exist $J\left( Q,E \right)$ of class $\mathcal{C}^1(\mathbb{R}_+^K)$ that solve the following \emph{approximate optimality equation}:
\begin{align}\label{bellman2}
	&\min_{p \leq E/\tau} \ \mathbb{E}\left[\frac{Q}{\overline{\lambda}\tau}+   \frac{\partial J\left( Q, E\right)}{\partial Q} \left(-   R\left( h , p\right) +\overline{\lambda} \ \right) \right. \notag \\
	&\left. \hspace{1cm}+ \frac{\partial J\left( Q, E\right)}{\partial E} \big(-p +\overline{\alpha} \ \big)  \bigg| Q,E \right]=0 ,\quad \forall Q,E 
\end{align}
Furthermore, for all admissible  control policy $\Omega$ and initial queue state $Q(0),E \left(0 \right)$, the transversality condition  in (\ref{transodts}) is satisfied for $J$. Then, we have $ V^\ast\left(Q,E  \right)=J\left(Q,E  \right)+o(\tau)$.~\hfill\IEEEQED
\end{Corollary}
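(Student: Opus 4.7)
My plan is to show that the candidate $J$ satisfies the discrete-time Bellman equation (\ref{OrgBel}) up to an $o(\tau)$ residual, and then to invoke a uniqueness-type argument (exploiting the transversality condition and the unichain property in Definition \ref{adddtdomain}) to conclude $V^\ast = J + o(\tau)$.

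First I would substitute $J$ into the right-hand side of the discrete Bellman equation and perform a first-order Taylor expansion of $J(Q',E')$ around $(Q,E)$. Because $J \in \mathcal{C}^1$ and both $Q'-Q$ and $E'-E$ are of order $\tau$ by the dynamics (\ref{dataQ}) and (\ref{energyQ}), we have
\[
J(Q',E') = J(Q,E) + \frac{\partial J}{\partial Q}(Q'-Q) + \frac{\partial J}{\partial E}(E'-E) + o(\tau).
\]
Taking conditional expectation over the random data arrival $\lambda$ and energy arrival $\alpha$, and noting that in the interior of the state space the operators $[\cdot]^+$ and $\min\{\cdot,N_E\}$ are inactive, yields
\[
\mathbb{E}\!\left[J(Q',E')\,\big|\,Q,E,h,p\right] = J(Q,E) + \tau\,\frac{\partial J}{\partial Q}\bigl(\overline{\lambda}-R(h,p)\bigr) + \tau\,\frac{\partial J}{\partial E}(\overline{\alpha}-p) + o(\tau).
\]
Inserting this into (\ref{OrgBel}), dividing through by $\tau$, and using that $J$ solves the approximate optimality equation (\ref{bellman2}) by hypothesis, the residual in Bellman reduces to $o(\tau)$.

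Second, setting $\Delta(Q,E) := V^\ast(Q,E) - J(Q,E)$, the difference between the exact Bellman equation for $V^\ast$ and the approximate-Bellman equation (up to $o(\tau)$) for $J$ yields a homogeneous fixed-point equation for $\Delta$ with an $o(\tau)$ inhomogeneity. Because both $V^\ast$ and $J$ satisfy the transversality condition (\ref{transodts}), and because the unichain assumption in Definition \ref{adddtdomain} guarantees uniqueness of the bounded solution to the Bellman equation up to an additive constant, a standard Poisson-equation/contraction argument over the recurrent class propagates the $o(\tau)$ residual to give $\Delta(Q,E) = o(\tau)$, which is the claim.

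The main technical obstacle lies in the boundary behaviour of Step 1. When $Q = \mathcal{O}(\tau)$ or $E = N_E - \mathcal{O}(\tau)$, the operators $[\cdot]^+$ and $\min\{\cdot,N_E\}$ contribute extra terms to $\mathbb{E}[Q'-Q]$ and $\mathbb{E}[E'-E]$ beyond $(\overline{\lambda}-R)\tau$ and $(\overline{\alpha}-p)\tau$. These reflection contributions must be shown to pair with $\partial_Q J$ and $\partial_E J$ so as to produce only $o(\tau)$ corrections, which is the discrete-time analogue of the boundary integrals in (\ref{trankern}). A related technicality is justifying the interchange of $\min_p$ and $\mathbb{E}_h$ (permissible because the control is $\mathcal{F}(n)$-measurable and can therefore be chosen as a function of $h$), and confirming that the feasible set $\{p\le E/\tau\}$ in the discrete problem matches the VCTS feasible set $\{p\le e/\tau\}$ up to an $o(\tau)$ perturbation of the optimal value.
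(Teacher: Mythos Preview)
Your proposal is correct and follows essentially the same route as the paper: a Taylor expansion of $J(Q',E')$ showing that $J$ satisfies the discrete Bellman equation up to an $o(\tau)$ residual (the paper's Lemmas~\ref{applema}--\ref{applemma}), followed by a uniqueness argument via the transversality condition to conclude $V^\ast-J=o(\tau)$ (the paper's Lemma~\ref{tenlemma}, argued by contradiction rather than contraction). The only place the paper is more explicit is in handling the interaction of $\min_p$ with the $o(\tau)$ remainder, which it does by the sandwich $\min_p T^\dagger + \nu \min_p G \le \min_p(T^\dagger+\nu G) \le \min_p T^\dagger + \nu\, G(\cdot,p^\dagger)$; your boundary-reflection concern is in fact not treated in the paper's proof either.
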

\begin{proof}	[Proof of Corollary \ref{cor1}]
We will establish the following Lemmas \ref{applema}--\ref{tenlemma} to prove Corollary \ref{cor1}. For convenience, denote 
\begin{align}
	&T_{\boldsymbol{\chi}}(\theta, J, p)=    \frac{Q}{\overline{\lambda}}  +  \sum_{Q',E'}\Pr \big[ Q',E '\big| \boldsymbol{\chi}, p\big]J\left(Q',E'\right)   \notag \\
	&\hspace{5cm}- J \left(Q,E  \right)-  \theta	\\
	&T_{\boldsymbol{\chi}}^\dagger(\theta, J, p)=\frac{Q}{\overline{\lambda}}+   \frac{\partial J\left( Q, E\right)}{\partial Q} \left(-   R\left( h , p\right)\tau +\overline{\lambda}  \right)\tau   \notag \\
	&\hspace{3cm}+ \frac{\partial J\left( Q, E\right)}{\partial E} \big(-p +\overline{\alpha}  \big) \tau-  \theta
\end{align}\vspace{-0.4cm}

\hspace{-0.5cm} \emph{Step 1, Relationship between $ T_{\boldsymbol{\chi}}(\theta, J, p)$ and $T_{\boldsymbol{\chi}}^\dagger(\theta, J, p)$:}
\vspace{0.2cm}

\begin{Lemma}	\label{applema}
	For any $\boldsymbol{\chi}$, $ T_{\boldsymbol{\chi}}(\theta, J, p)=T_{\boldsymbol{\chi}}^\dagger(\theta, J, p)+\nu  G_{\boldsymbol{\chi}}(J,p)$ for some smooth function $G_{\boldsymbol{\chi}}$ and $\nu=o(\tau)$.
\end{Lemma}
\begin{proof}	[Proof of Lemma \ref{applema}] Let $\left(Q (n+1),E(n+1)\right)=(Q', E')$ and $\left(Q (n),E(n)\right)=(Q, E)$. For  sufficiently small $\tau$, according to the dynamics in (\ref{dataQ}) and (\ref{energyQ}), we have the following Taylor expansion on $J\left( Q',E'\right)$ in (\ref{OrgBel}): 
\begin{align}
	&\mathbb{E}\left[ J\left( Q',E'\right) \big| Q,E  \right] 	=J\left( Q,E\right)+ \mathbb{E}\left[ \frac{\partial J\left( Q, E\right)}{\partial Q} \left(-   R\left( h , p\right) \right.\right. \notag \\
	&\left.\left.\hspace{1cm} +\overline{\lambda} \right) + \frac{\partial J\left( Q,E\right)}{\partial E} \big(-p +\overline{\alpha}  \big)  \bigg| Q,E \right]\tau+ o(\tau)	\label{37weq}
\end{align}
Substituting (\ref{37weq}) into $T_{\boldsymbol{\chi}}(\theta, J, p)$, we obtain  $ T_{\boldsymbol{\chi}}(\theta, J, p)=T_{\boldsymbol{\chi}}^\dagger(\theta, J, p)+\nu  G_{\boldsymbol{\chi}}(J,p)$ for some smooth function $G_{\boldsymbol{\chi}}$ and $\nu=o(\tau)$.
\end{proof}\vspace{0.3cm}

\hspace{-0.5cm} \emph{Step 2, Growth Rate of $\mathbb{E}\left[T_{\boldsymbol{\chi}}  (0, J)\big|Q,E\right]$:}
\vspace{0.2cm}

Denote 
\begin{align}
	T_{\boldsymbol{\chi}}(\theta, J)=\min_{  p} T_{\boldsymbol{\chi}}(\theta, J, p), \qquad T_{\boldsymbol{\chi}}^\dagger(\theta, J)=\min_{  p} T_{\boldsymbol{\chi}}^\dagger(\theta, J, p)	\label{zerofuncsa}		
\end{align}
Suppose $(\theta^\ast,V^\ast)$ satisfies the Bellman equation in (\ref{OrgBel}) and $(0,J)$ satisfies  (\ref{bellman2}), we have for any $\boldsymbol{\chi}$,
\begin{align}
	\mathbb{E}\big[T_{\boldsymbol{\chi}}(\theta^\ast, V^\ast)\big|Q,E\big]=0, \quad \mathbb{E}\big[T_{\boldsymbol{\chi}}^\dagger(0, J)\big|Q,E\big]=0	\label{zerofunc}
\end{align}
Then, we establish the following lemma.
\begin{Lemma}	\label{applemma}
$\mathbb{E}\left[T_{\boldsymbol{\chi}}  (0, J)\big|Q,E\right]=o(\tau)$,  $\forall {Q, E}$.
\end{Lemma}

\begin{proof}	[Proof of Lemma \ref{applemma}]
For any $\boldsymbol{\chi}$, we have $T_{\boldsymbol{\chi}} (0, J)=\min_{ p}\left[ T_{\boldsymbol{\chi}}^\dagger(0, J, p)+\nu  G_{\boldsymbol{\chi}}(J,p) \right] \geq \min_{ p} T_{\boldsymbol{\chi}}^\dagger(0, J, p) \\ + \nu \min_{ p} G_{\boldsymbol{\chi}}(J,p)$. On the other hand, $T_{\boldsymbol{\chi}} (0, J) \leq \min_{ p} T_{\boldsymbol{\chi}}^\dagger(0, J, p) + \nu G_{\boldsymbol{\chi}}(J,p^\dagger)$, where $p^\dagger= \arg \min_{p} T_{\boldsymbol{\chi}}^\dagger(0, J, p) $.

From  (\ref{zerofunc}), $\mathbb{E}\left[\min_{ p} T_{\boldsymbol{\chi}}^\dagger(0, J, p)\big|Q,E\right]=\mathbb{E}\left[T_{\boldsymbol{\chi}}^\dagger(0, J)\big|Q,E\right]=0$. Since $T_{\boldsymbol{\chi}}^\dagger(0, J, p) $ and $G_{\boldsymbol{\chi}}(J,p^\dagger) $  are all smooth and bounded functions, we have $\mathbb{E}\left[T_{\boldsymbol{\chi}}  (0, J)\big|Q,E\right] = \mathcal{O}(\nu)=o(\tau)$ for any $Q, E$.
\end{proof}

\vspace{0.3cm}
\hspace{-0.5cm} \emph{Step 3, Difference between $V^\ast(Q,E)$ and $J(Q,E)$:}
\vspace{0.2cm}

\begin{Lemma}		\label{tenlemma}
	Suppose $\mathbb{E}[T_{\boldsymbol{\chi}}(\theta^\ast, V^\ast)|Q,E] = 0$ for all $Q,E$ together with the transversality condition in (\ref{transodts})  has a unique solution $(\theta^*, V^\ast)$. If $J$ satisfies  (\ref{bellman2}) and  the transversality condition in (\ref{transodts}),  then $V^\ast(Q,E)-J(Q,E)=o(\tau)$.
\end{Lemma}
\begin{proof}	[Proof of Lemma \ref{tenlemma}]
	Suppose for some $(Q',E')$, we have $J\left(Q',E' \right)=V^\ast\left(Q',E' \right)+\alpha$ for some $\alpha \neq 0$ as $\tau \rightarrow 0$. Now let $\tau \rightarrow 0$. From Lemma \ref{applemma}, we have $\mathbb{E}\left[T_{\boldsymbol{\chi}}  (0, J)\big|Q,E\right]= 0$ for all $Q,E$ and also $J$ satisfies  the transversality condition in (\ref{transodts}). However, $J\left(Q',E' \right) \neq V^\ast\left(Q',E' \right)$ because of the assumption that $J\left(Q',E' \right)=V^\ast\left(Q',E' \right)+\alpha$. This contradicts  the condition that $(\theta^*, V^\ast)$ is a unique solution of  $\mathbb{E}[T_{\boldsymbol{\chi}}(\theta^\ast, V^\ast)|Q,E] = 0$ for all $Q,E$  and the transversality condition in (\ref{transodts}). Hence, we must have $V^\ast(Q,E)-J(Q,E)=o(\tau)$ for all  $Q,E $.\end{proof}
\end{proof}

\hspace{-0.3cm} \emph{B. Relationship between the Discrete Time  Optimality Equation and the HJB Equation}
\vspace{0.2cm}

First, if $V(Q,E)$ that is of class  $\mathcal{C}^1(\mathbb{R}_+^2)$ satisfies the optimality conditions of the total cost problem in VCTS (as shown in Theorem \ref{HJB11}), then it also satisfies (\ref{bellman2}) in Corollary \ref{cor1}. Second,  since $V\left(Q,E \right)=\mathcal{O}(Q^2)$, we have $\lim_{n \rightarrow \infty}\mathbb{E}^{\Omega}\left[V\left(Q(n),E (n)\right) \right]< \infty$ for any admissible policy $\Omega$ of the discrete time system according to Definition \ref{adddtdomain}.  Hence, $V\left(Q,E  \right)$ satisfies the transversality condition in (\ref{transodts}). Using Corollary \ref{cor1}, we have $V^\ast\left(Q,E  \right)=V\left(Q,E \right)+o(\tau)$.

\section*{Appendix D: Proof of Theorem  \ref{nonempty}}
First, we simplify the PDE in (\ref{cenHJB}).  The optimal control policy that minimizes the L.H.S. of (\ref{cenHJB}) is  $p^\ast =\min\big\{ \big(-{\frac{\partial V\left( q, e\right)}{\partial q}}\big/{\frac{\partial V\left( q, e\right)}{\partial e}}-\frac{1}{|h|^2}\big)^+, \frac{e}{\tau}\big\}$. Substituting it to the PDE in (\ref{cenHJB}), we have
\begin{align}		
	\mathbb{E}\left[\frac{q}{\overline{\lambda}\tau}+   \frac{\partial V\left( q, e\right)}{\partial q} \left(-   R\left( h , p^\ast\right) +\overline{\lambda} \ \right)\right. \notag \\
	\left. + \frac{\partial V\left( q, e\right)}{\partial e} \big(-p^\ast +\overline{\alpha} \ \big)  \bigg| q,e \right]=0\label{pdewe}
\end{align}
For  convenience, denote $V_q \triangleq \frac{\partial V\big( q, e\big)}{\partial q}$ and $V_e \triangleq \frac{\partial V\big( q, e\big)}{\partial e}$. We then calculate the expectations in (\ref{pdewe}): $\mathbb{E}\big[p^\ast\big] = \Big[\int_{\frac{-V_e}{V_q}}^{\frac{- \tau V_e }{  V_e e+ V_q \tau}} \big(\frac{V_q}{-V_e}-\frac{1}{x}\big) \exp\big({-x}\big) \mathrm{d}x + \int_{{\frac{- \tau V_e }{  V_e e+ V_q \tau}}}^\infty \frac{e}{\tau} \exp(x) \mathrm{d}x \Big]\mathbf{1}\big(\frac{V_q}{-V_e}>\frac{e}{\tau} \big)+ \Big[\int_{\frac{-V_e}{V_q}}^\infty \big(\frac{V_q}{-V_e}-\frac{1}{x}\big)    \exp\big({-x}\big) \mathrm{d}x  \Big]\mathbf{1}  \big(\frac{V_q}{-V_e}<\frac{e}{\tau} \big) = \big[\frac{V_q}{-V_e} \exp\big({\frac{V_e}{V_q}}\big) - E_1\big( \frac{-V_e}{V_q}\big) + \frac{V_e e + V_q \tau }{\tau V_e }\exp \big({ \frac{\tau V_e }{V_e e + V_q \tau }}\big)+E_1\big(\frac{- \tau V_e }{V_e e + V_q \tau } \big)\big]  \mathbf{1}\big(\frac{V_q}{-V_e}>\frac{e}{\tau} \big) +\big[\frac{V_q}{-V_e}\exp\big(\frac{V_e}{V_q} \big)- E_1\big( \frac{-V_e}{V_q}\big)\big]  \mathbf{1}\big(\frac{V_q}{-V_e}<\frac{e}{\tau} \big) \triangleq G\big( \frac{V_q}{-V_e}, \frac{e}{\tau}\big)$. Similarly, using the  integration by parts, we have  $\mathbb{E}\big[R(h,p^\ast)\big]= \big[\exp\big({\frac{\tau}{e}}\big) E_1\big(\frac{\tau^2 V_q }{e^2 V_e  + e\tau V_q }  \big)+E_1\big( \frac{-V_e}{V_q}\big)- E_1\big(\frac{- \tau V_e }{V_e e + V_q \tau } \big)\big]\mathbf{1}\big(\frac{V_q}{-V_e}>\frac{e}{\tau} \big) +E_1\big( \frac{-V_e}{V_q}\big)  \mathbf{1}\big(\frac{V_q}{-V_e}<\frac{e}{\tau} \big) \triangleq F\big( \frac{V_q}{-V_e}, \frac{e}{\tau}\big)$. Therefore, the PDE in (\ref{pdewe}) becomes:
\begin{align}
	\frac{q}{\overline{\lambda}\tau} &+   V_q\left[\overline{\lambda}-   F\left( \frac{V_q}{-V_e}, \frac{e}{\tau}\right)   \right] + V_e\left[\overline{\alpha}-G\left( \frac{V_q}{-V_e}, \frac{e}{\tau}\right) \right] =0	\label{pdenow11}
\end{align}

We then discuss the properties of $F$ and $G$ in (\ref{pdenow11}) as follows:
\begin{itemize}
	\item	If $\frac{V_q}{-V_e}\leq\frac{e}{\tau}$, $F$ is increasing w.r.t. $\frac{V_q}{-V_e}$ and $F\in [0, E_1 \left(\frac{\tau}{e} \right)]$. If $\frac{V_q}{-V_e}>\frac{e}{\tau}$, $F$ is a function of $\frac{V_q}{-V_e}$ and $\frac{e}{\tau}$, and is  increasing w.r.t.  $\frac{V_q}{-V_e}$ and $F\in (E_1 \left(\frac{\tau}{e} \right), \exp\left(\frac{\tau}{e} \right) E_1 \left( \frac{\tau}{e} \right))$.
	\item	If $\frac{V_q}{-V_e}\leq\frac{e}{\tau}$, $G$ is increasing w.r.t. $\frac{V_q}{-V_e}$ and $G\in [0, \frac{e}{\tau} \exp\left(-\frac{\tau}{e} \right)-E_1\left(\frac{\tau}{e} \right)]$. If $\frac{V_q}{-V_e}>\frac{e}{\tau}$, $G$ is a function of $\frac{V_q}{-V_e}$ and $\frac{e}{\tau}$, and is increasing w.r.t.  $\frac{V_q}{-V_e}$ and $G\in (\frac{e}{\tau} \exp\left(-\frac{\tau}{e} \right)-E_1\left(\frac{\tau}{e} \right), \frac{e}{\tau})$.
\end{itemize}

For the continuous time queueing system in (\ref{queue1}) and (\ref{queue2}), there exists a steady data queue states $q_s=0$ and $e_s\in[0,N_E]$, i.e., $\lim_{t \rightarrow \infty}q(t)=q_s$ and $\lim_{t \rightarrow \infty}e(t)=e_s$. At steady state, we require
\begin{align}
	&\overline{\lambda} \leq  F\left( \frac{V_q}{-V_e}, \frac{e_s}{\tau}\right),\qquad  \overline{\alpha} \geq G\left( \frac{V_q}{-V_e}, \frac{e_s}{\tau}\right)  \label{ssd2}
\end{align}
The existence of solution  for the HJB equation in  Theorem \ref{HJB11} is equivalent to the existence of solution of (\ref{ssd2}). We shall discuss the solution of (\ref{ssd2}) in the following two cases:

\emph{Case 1:} if the equalities are achieved in (\ref{ssd2}), i.e., 
\begin{align}
	&\overline{\lambda} =  F\left( \frac{V_q}{-V_e}, \frac{e_s}{\tau}\right),\qquad  \overline{\alpha} = G\left( \frac{V_q}{-V_e}, \frac{e_s}{\tau}\right)  \label{ssd2asd}
\end{align}
there exists a $\tilde{e}\in [0, N_E]$ such that
\begin{align}
	&\frac{\tilde{e}}{\tau} \exp\left(-\frac{\tau}{\tilde{e}} \right)-E_1\left(\frac{\tau}{\tilde{e}} \right) < \overline{\alpha} < \frac{\tilde{e}}{\tau}\notag \\
	&E_1 \left(\frac{\tau}{\tilde{e}} \right) < \overline{\lambda} < \exp\left(\frac{\tau}{\tilde{e}} \right) E_1 \left( \frac{\tau}{\tilde{e}} \right)	\label{satisfy2}
\end{align}
From the first equation above,  we have $\overline{\alpha}  \tau <\tilde{e} < x \tau$ where $x$ satisfies  $x\exp\left(-\frac{1}{x} \right)  - E_1 \left(\frac{1}{x} \right) = \overline{\alpha}$. For given $\tilde{e}$, we can obtain the range for $\overline{\lambda}$ according to the second equation above: $E_1\left(\frac{1}{\overline{\alpha}} \right) <  \overline{\lambda}  < \exp\left( \frac{1}{x}\right)E_1 \left(\frac{1}{x} \right)$. Furthermore,  we denote the solution of (\ref{ssd2asd}) w.r.t. $e$ to be $e^\ast$.  Then, it is sufficient that $N_E \geq e^\ast$  so that the  solution of (\ref{ssd2asd})  is meaningful. 

\emph{Case 2:} if $\overline{\lambda}\leq  E_1\left(\frac{1}{\overline{\alpha}} \right)$, we will show in Appendix I that the optimal control for this case achieves
\begin{align}
	\overline{\lambda} <  F\left( \frac{V_q}{-V_e}, \frac{e_s}{\tau}\right),\qquad  \overline{\alpha} = G\left( \frac{V_q}{-V_e}, \frac{e_s}{\tau}\right) 
\end{align}
where  the steady states are $q_s=0$, $e_s=\overline{\alpha}\tau$. In this case, we require that $N_E\geq e_s = \overline{\alpha}\tau$. Combining both cases, we obtain  the conditions in (\ref{energyrequir}) and (\ref{energyrequir1}). This completes the proof.

\section*{Appendix E: Proof of Theorem \ref{sym111}}
The PDE in (\ref{pdenow11}) has different structures when $\frac{V_q}{-V_e}<\frac{e}{\tau}$ and $\frac{V_q}{-V_e}>\frac{e}{\tau}$. Specifically, when  $\frac{V_q}{-V_e}>\frac{e}{\tau}$,
\begin{align}
	&\frac{q}{\overline{\lambda}\tau} +   V_q\left[\overline{\lambda}-   \exp\left({\frac{\tau}{e}}\right) E_1\left(\frac{\tau^2 V_q }{e^2 V_e  + e\tau V_q }  \right)-E_1\left( \frac{-V_e}{V_q}\right) \right. \notag \\
	&\left. +E_1\left(\frac{- \tau V_e }{V_e e + V_q \tau } \right)  \ \right] + V_e\left[\overline{\alpha}+\frac{V_q}{V_e} \exp\left({\frac{V_e}{V_q}}\right) +E_1\left( \frac{-V_e}{V_q}\right) \right. \notag \\
	&\left. -\frac{V_e e + V_q \tau }{\tau V_e }\exp \left({ \frac{\tau V_e }{V_e e + V_q \tau }}\right)-E_1\left(\frac{- \tau V_e }{V_e e + V_q \tau } \right)  \right] =0	\label{pdenow}
\end{align} when  $\frac{V_q}{-V_e}<\frac{e}{\tau}$,
\begin{align}
	&\frac{q}{\overline{\lambda}\tau}+   V_q\left[\overline{\lambda}-E_1\left( \frac{-V_e}{V_q}\right) \right]+ V_e\left[\overline{\alpha}+\frac{V_q}{V_e} \exp\left({\frac{V_e}{V_q}}\right)\right.\notag \\
	&\left. \hspace{5cm}+E_1\left( \frac{-V_e}{V_q}\right)  \right]=0 	\label{pdenow21}
\end{align}

\vspace{0.3cm}
\hspace{-0.3cm} \emph{A. Relationship among  $\frac{V_q}{-V_e}$, $\frac{e}{\tau}$, $\overline{\lambda}$ and $\overline{\alpha}$}

Dividing $-V_e$ on both sizes of (\ref{pdenow11}), we have
\begin{align}
	 &J\left( \frac{V_q}{-V_e}\right) \triangleq  \label{pdenow1} \\
	 &\frac{V_q}{-V_e}\left[\overline{\lambda}-    F\left( \frac{V_q}{-V_e}, \frac{e}{\tau}\right)   \ \right] - \left[\overline{\alpha}- G\left( \frac{V_q}{-V_e}, \frac{e}{\tau}\right)   \right] =- \frac{q}{-V_e\overline{\lambda}\tau}	\notag 
\end{align}
We first have the following lemma:
\begin{Lemma}	\label{lamdapro}
	From (\ref{pdenow1}), we have $-V_e=\Theta\left( \frac{1}{\overline{\lambda} \exp({\overline{\lambda}})}\right)$.~\hfill~\IEEEQED
\end{Lemma}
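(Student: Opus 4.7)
The plan is to read off the asymptotic scaling of $-V_e$ in the large-$\overline{\lambda}$ regime directly from the PDE (\ref{pdenow1}), using the standard asymptotic expansion of the exponential integral $E_1$. Let $r := V_q/(-V_e)$ denote the ``water level.'' First I would identify a leading-order balance: in the large-data-arrival regime the term $r\,\overline{\lambda}$ would dominate the left-hand side of (\ref{pdenow1}) unless $F(r,e/\tau) \approx \overline{\lambda}$. Invoking $E_1(x) = -\gamma_{eu} - \log x + O(x)$ as $x \to 0^+$, together with the form of $F$ from Appendix D, this forces $\log r \approx \overline{\lambda}$, i.e., $r = \Theta(e^{\overline{\lambda}})$.

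Next I would match the subleading terms to isolate the magnitude of $-V_e$. The right-hand side of (\ref{pdenow1}) is $-\frac{q}{-V_e\,\overline{\lambda}\,\tau}$. After the leading cancellation $r\overline{\lambda} - rF \to 0$, the residual on the left is driven by the finite quantity $\overline{\alpha}-G = \Theta(1)$ plus the first correction $r(\overline{\lambda}-F)$, both of which are $\Theta(1)$. Equating this $\Theta(1)$ residual with the right-hand side $\Theta\bigl(1/(\overline{\lambda}(-V_e))\bigr)$ yields
\[
-V_e \cdot \overline{\lambda} \;=\; \Theta\!\left(\frac{1}{r}\right) \;=\; \Theta\!\left(e^{-\overline{\lambda}}\right),
\]
and therefore $-V_e = \Theta\bigl(1/(\overline{\lambda}\exp(\overline{\lambda}))\bigr)$, as claimed.

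The main obstacle will be making the asymptotic matching rigorous. The function $F$ has two analytic branches depending on whether $r \lessgtr e/\tau$, so I would first verify that in the large-$\overline{\lambda}$ regime we fall into the branch where $E_1(-V_e/V_q)$ supplies the $\log r$ asymptotic (essentially because $e^{\overline{\lambda}}$ eventually dwarfs $e/\tau$). I would then bound the correction terms in the expansions of $F$ and $G$, which are $O(1/r) = O(e^{-\overline{\lambda}})$, to upgrade a one-sided $O$-bound into a two-sided $\Theta$-bound. A minor subtlety is to keep the implicit $(q,e)$-dependence uniform so the resulting $\Theta$ is uniform over the relevant region of state space used later in Appendix E.
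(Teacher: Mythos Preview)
Your subleading-balance step is where the argument breaks. You assert that after the cancellation $r(\overline{\lambda}-F)\to 0$ the remaining left-hand side is $\Theta(1)$, driven by $\overline{\alpha}-G=\Theta(1)$. But in the large-data-arrival-energy-sufficient regime the constraint $E_1(1/\overline{\alpha})<\overline{\lambda}$ forces $\overline{\alpha}$ itself to grow like $e^{\overline{\lambda}}$ (this is exactly the content of (\ref{satisfy2}) and the surrounding discussion). Since $G\le e/\tau$ and $e$ ranges over $(0,e^{th})$, the difference $\overline{\alpha}-G$ is in general $\Theta(e^{\overline{\lambda}})$, not $\Theta(1)$. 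Indeed the paper later records that $J(V_q/(-V_e))=-\Theta(e^{\overline{\lambda}})$, directly contradicting your $\Theta(1)$ claim. Moreover, even granting your residual estimate, the displayed implication is a non-sequitur: equating a $\Theta(1)$ left side with $\Theta\!\big(1/(\overline{\lambda}(-V_e))\big)$ yields $-V_e=\Theta(1/\overline{\lambda})$, not $-V_e\,\overline{\lambda}=\Theta(1/r)$. The factor $1/r$ appears out of nowhere.

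The paper's route is different and avoids this trap. It works directly with (\ref{pdenow11}) rather than the normalized form (\ref{pdenow1}), posits that $V$, $V_q$, and $V_e$ share a common $\overline{\lambda}$-order $\Theta(f(\overline{\lambda}))$, and then reads off the dominant balance: the term $V_e[\overline{\alpha}-G]$ carries order $f(\overline{\lambda})\cdot e^{\overline{\lambda}}$ (using $\overline{\alpha}=\Theta(e^{\overline{\lambda}})$ from the regime definition), which must match $q/(\overline{\lambda}\tau)=\Theta(1/\overline{\lambda})$. This immediately gives $f(\overline{\lambda})=\Theta\big(1/(\overline{\lambda}\,e^{\overline{\lambda}})\big)$ without ever needing to pin down $r$ first. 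Your attempt to first determine $r$ and then back out $-V_e$ is circular here, because the size of the right-hand side of (\ref{pdenow1}) already depends on the unknown $-V_e$; you cannot use it to constrain $r$ at leading order without an independent hypothesis on $-V_e$. If you want to salvage your approach, replace the $\Theta(1)$ residual claim by the correct $\Theta(e^{\overline{\lambda}})$ and redo the matching---you will recover the paper's answer, but by that point you are essentially running the paper's dominant-balance argument in disguise.
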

\begin{proof}	[Proof of Lemma \ref{lamdapro}]
	We assume that $V(q,e)=o(g(\overline{\lambda}))$ and  $V(q,e)=\mathcal{O}(f(\overline{\lambda}))$ for some functions $f$ and $g$. Therefore,  $V_q=o(g(\lambda))=\mathcal{O}(f(\lambda))$, and $V_e=o(g(\overline{\lambda}))=\mathcal{O}(f(\overline{\lambda}))$. According to (\ref{satisfy2}), we have  $\overline{\alpha}=o(\exp({\overline{\lambda}}))$.  Combining (\ref{pdenow11}), we have
	\begin{align}
		o(g(\overline{\lambda}))\Theta \left({\overline{\lambda}} \right)+o(g(\overline{\lambda}))\Theta(\exp({\overline{\lambda}}))=-\Theta \left(\frac{1}{\overline{\lambda}}\right)	\label{order1}	 \\
		\mathcal{O}(f(\overline{\lambda}))\Theta \left({\overline{\lambda}} \right)+\mathcal{O}(f(\overline{\lambda}))\Theta(\exp({\overline{\lambda}}))=-\Theta \left(\frac{1}{\overline{\lambda}}\right)			\label{order2}
	\end{align}
	 where (\ref{order1}) implies  $g(\lambda)=-o\left(\frac{1}{\overline{\lambda} \exp({\overline{\lambda}})}\right)$, and  (\ref{order2}) implies $f(\overline{\lambda})=-\mathcal{O}\left(\frac{1}{\overline{\lambda} \exp({\overline{\lambda}})}\right)$. Hence, $V(q,e)=-\Theta\left( \frac{1}{\overline{\lambda} \exp({\overline{\lambda}})}\right)$, which induces $-V_e=\Theta\left( \frac{1}{\overline{\lambda} \exp({\overline{\lambda}})}\right)$.\end{proof}

Based on Lemma \ref{lamdapro}, (\ref{pdenow1}) implies  $J\left( \frac{V_q}{-V_e}\right)=-\Theta(\exp(\overline{\lambda}))$.  Let $e^{th}$ satisfy $E_1 \left( \frac{\tau}{e^{th}} \right)=\overline{\lambda}$. We have the following discussions on the property of $J\left( \frac{V_q}{-V_e}\right)$:
\begin{itemize}
	\item  $e<e^{th}$:  if $\exp\left( \frac{\tau}{e^{th}}\right) E_1 \left(\frac{\tau}{e^{th} }\right) < \overline{\lambda}$, $J\left( \frac{V_q}{-V_e}\right)$ is an increasing function w.r.t. $\frac{V_q}{-V_e}$. Specifically, when $ 0<\frac{V_q}{-V_e}<x_0(e)$, where $F\left( x_0(e), \frac{e}{\tau}\right) =0$, $J\left( \frac{V_q}{-V_e}\right)$ is negative. When $ \frac{V_q}{-V_e}>x_0(e)$, $J\left( \frac{V_q}{-V_e}\right)$ is positive. On the other hand, if $\exp\left( \frac{\tau}{e^{th}}\right) E_1 \left(\frac{\tau}{e^{th} }\right) > \overline{\lambda}$, let $x_1(e)$ satisfy $F\left( x_1(e), \frac{e}{\tau}\right) =\overline{\lambda}$. When $ 0<\frac{V_q}{-V_e}<x_1(e)$,   $J\left( \frac{V_q}{-V_e}\right)$ is  increasing  w.r.t. $\frac{V_q}{-V_e}$, and when $\frac{V_q}{-V_e}>x_1(e)$,   $J\left( \frac{V_q}{-V_e}\right)$ is  decreasing  w.r.t. $\frac{V_q}{-V_e}$.
	
	\item  $e\geq e^{th}$: let $x_2(e)$ satisfy $ F\left( x_2(e), \frac{e}{\tau}\right) =\overline{\lambda}$. When $ 0<\frac{V_q}{-V_e}<x_2(e)$, $J\left( \frac{V_q}{-V_e}\right)$ is negative and increasing. When $ \frac{V_q}{-V_e}>x_2(e)$, $J\left( \frac{V_q}{-V_e}\right)$ is negative and decreasing. Furthermore, we have $x_0(e)<\frac{e}{\tau}$ for given $e$. 
\end{itemize}

Therefore, we  have the following results on the relationship among  $\frac{V_q}{-V_e}$, $\frac{e}{\tau}$, $\overline{\lambda}$ and $\overline{\alpha}$:
\begin{Classification}[Relationship among  $\frac{V_q}{-V_e}$, $\frac{e}{\tau}$, $\overline{\lambda}$ and $\overline{\alpha}$]	\label{classfic}	\
\begin{enumerate}	[1)]
	\item $e<e^{th}$:
		\begin{itemize}
			\item small $\overline{\lambda}$, small $\overline{\alpha}$ and $E_1\left(\frac{1}{\overline{\alpha}} \right)< \overline{\lambda} <\exp\left({\frac{1}{x}} \right) E_1\left(\frac{1}{x} \right)$:  in this case,  we have $\frac{V_q}{-V_e}=\Theta\left(\frac{\exp(\overline{\lambda})}{\overline{\lambda}} \right)$ which is large for sufficiently small $\overline{\lambda}$. Furthermore, since $e<e^{th}$, we have $\frac{V_q}{-V_e}>\frac{e}{\tau}$. Therefore,  the PDE in (\ref{pdenow1}) becomes (\ref{pdenow}) with large $ \frac{V_q}{-V_e}$ and small $e$.
			\item large $\overline{\lambda}$, large $\overline{\alpha}$ and $E_1\left(\frac{1}{\overline{\alpha}} \right)< \overline{\lambda} <\exp\left({\frac{1}{x}} \right) E_1\left(\frac{1}{x} \right)$:  similar to the previous case, we have $\frac{V_q}{-V_e}>\frac{e}{\tau}$. Since $e<e^{th}$ and we consider large $\overline{\lambda}$,  $e$ is relatively  large compared with $\frac{V_q}{-V_e}$. Therefore,  the PDE in (\ref{pdenow1}) becomes (\ref{pdenow}) with large $ \frac{V_q}{-V_e}$ and large $e$.
		\end{itemize}
	\item $e\geq e^{th}$: in this case, since $0<\frac{V_q}{-V_e}<x_0(e)$ and $x_0(e)<\frac{e}{\tau}$. Therefore, $\frac{V_q}{-V_e}<\frac{e}{\tau}$, which means that the PDE in (\ref{pdenow1}) becomes (\ref{pdenow21}).
\end{enumerate}
\end{Classification}

\vspace{0.3cm}
\hspace{-0.3cm} \emph{B. Solving the HJB Equation under the Large-Data Arrival-Energy-Sufficient  Regime}

According to Classification \ref{classfic}, when $e<e^{th}$, we have  the PDE in (\ref{pdenow}) with large $ \frac{V_q}{-V_e}$ and large $e$, and when $e\geq e^{th}$, we have the PDE in  (\ref{pdenow21}). We first solve the PDE in (\ref{pdenow}) with large $ \frac{V_q}{-V_e}$ and large $e$. We have the following approximations for $\frac{V_q}{-V_e}\mathbb{E}\left[R(h,p^\ast)\right] $ in (\ref{pdenow1}): $\frac{V_q}{-V_e}E_1\left(\frac{\tau^2 V_q }{e^2 V_e  + e\tau V_q }  \right)= \frac{V_q}{-V_e}E_1\left(\frac{\tau}{e} \right)+o(1)$, $E_1\left( \frac{-V_e}{V_q}\right)=-\gamma_{eu} \frac{V_q}{-V_e} +  \frac{V_q}{-V_e}\log\left(\frac{V_q}{-V_e} \right)+1 +o(1)$, $E_1\left(\frac{- \tau V_e }{V_e e + V_q \tau } \right)=-\gamma_{eu} \frac{V_q}{-V_e}+ \frac{V_q}{-V_e}\log\left(\frac{V_q}{-V_e}-\frac{e}{\tau} \right)+1 +o(1)$. Hence, we have
\begin{align}
	 \frac{V_q}{-V_e}\mathbb{E}\left[R(h,p^\ast)\right] &= \frac{V_q}{-V_e}\exp\left({\frac{\tau}{e}}\right) E_1\left(\frac{\tau}{e} \right)+ \frac{e}{\tau}+o(1)	\label{fieeq1}
\end{align}
Similarly, for $\mathbb{E}\left[p^\ast\right] $, we have  $\frac{V_q}{V_e} \exp\left({\frac{V_e}{V_q}}\right)=\frac{V_q}{V_e} + 1+o(1)$, $\frac{V_e e + V_q \tau }{\tau V_e }\exp \left({ \frac{\tau V_e }{V_e e + V_q \tau }}\right)=\frac{e}{\tau}+\frac{V_q}{V_e}+1+o(1)$, Hence, we have
\begin{align}
	\mathbb{E}\left[p^\ast\right]  = \frac{e}{\tau}	\label{fieeq2}
\end{align}
Substituting (\ref{fieeq1}) and (\ref{fieeq2}) into (\ref{pdenow11}) and for large $\overline{\lambda}$ and  $\overline{\alpha}$, we obtain the following simplified PDE:
\begin{align}
	\frac{q}{\overline{\lambda}\tau}+V_q \left(\overline{\lambda}-\exp\left({\frac{\tau}{e}}\right) E_1\left(\frac{\tau}{e} \right) \right) + V_e \overline{\alpha}=0	\label{extequasda}
\end{align}
For large $e$, we approximate $\exp\left({\frac{\tau}{e}}\right) E_1\left(\frac{\tau}{e} \right)$ as  $\exp\left({\frac{\tau}{e}}\right) E_1\left(\frac{\tau}{e} \right) = - \gamma_{eu}+\log \frac{e}{\tau}+o(1)$. Substituting  it  into (\ref{extequasda}) and using \emph{3.8.2.3} of \cite{pdebook}, we obtain $V(q,e)= \frac{e^2}{4 \overline{\lambda}\overline{\alpha}^2 \tau} \left(1+2\gamma_{eu}+2 \overline{\lambda} - 2 \log \frac{e}{\tau}\right)-\frac{eq}{\overline{\lambda} \overline{\alpha} \tau}+C$. We then determine the addend constant $C$.

For the steady state requirement in (\ref{ssd2}), using  (\ref{fieeq1}) and (\ref{fieeq2}), we have  $ -\gamma_{eu}+\log \frac{e}{\tau}+\frac{e}{\tau}\frac{-V_e}{V_q}=\overline{\lambda}$, $\frac{e}{\tau}=\overline{\alpha}$. We then obtain that $q_s=0$ $e_s=\overline{\alpha}\tau$, $\frac{V_q}{-V_e}\big|_{q=q_s, e=e_s}=\frac{\overline{\alpha}}{\overline{\lambda}+\gamma_{eu}-\log \overline{\alpha}}$. Under (\ref{finalsol1}), the  steady state requirement  is satisfied, and therefore the first condition in (\ref{trankern}) is satisfied. In addition, for any admissible $\Omega^v$, we have $\lim_{t\rightarrow \infty}q(t)=0$. Choosing $C=C_1$ as in (\ref{finalsol1}), the third condition in (\ref{trankern}) is satisfied  if  $N_E$ satisfies (\ref{energyrequir1}).  

We then solve the PDE in (\ref{pdenow21}) when $e\geq e^{th}$. To satisfy the second condition in (\ref{trankern}), it requires  $\frac{\partial V\left( q, N_E\right)}{\partial e}=0,  \forall q$. Using \emph{14.5.3.2} of \cite{pdebook}, we  obtain the solution  in the following form:
\begin{align}	\label{sol231sda}
	V(q,e)=c_1 e + \phi (q,c_1)+c_2
\end{align}
where  $\phi=\mathcal{O}(q^2)$. Then,  $\frac{\partial V\left( q, N_E\right)}{\partial e}=0$ induces  that $c_1=0$, which means that $V(q,e)$ is a function of $q$ only. Hence, $\frac{V_q}{-V_e}$ is infinite which means that $p^\ast=\frac{e}{\tau}$ according to Corollary \ref{corooptk}. Combing  (\ref{finalsol1}) ($e<e^{th}$) and (\ref{sol231sda}) ($e\geq e^{th}$), we obtain the full solution in this regime. 

\vspace{0.3cm}
\hspace{-0.3cm} \emph{C. Verification of the  Admissibility of $\Omega^{v \ast}$ under the Large-Data-Arrival-Energy-Sufficient  Regime}

We first calculate the water level when $e<e^{th}$ under (\ref{finalsol1}):
\begin{align}	\label{69ersa11}
	 \frac{V_q}{-V_e}=\frac{\overline{\alpha}e}{e\left(\gamma_{eu}+\overline{\lambda} - \log(\frac{e}{\tau}) \right)-\overline{\alpha}q}
\end{align}

Therefore, for sufficiently large $q$, we have $\frac{V_q}{-V_e}<0$,  which means that there is no data transmission, and the energy buffer will harvest energy until $e\geq e^{th}$ when the  policy is $p^\ast = \frac{e}{\tau}$ (we refer to it as the \emph{greedy policy}). Specifically, we can calculate  the trajectory of $e(t)$ as: $e(t) = (e(\bar{t})-\overline{\alpha} \tau) \exp(-t) + \overline{\alpha} \tau$, where $\bar{t}$ is the time stamp when  $e\geq e^{th}$ is first satisfied. Note that $\bar{t}=0$ if $e(0)\geq e^{th}$. This trajectory  implies that $\lim_{t \rightarrow \infty}e(t)=\overline{\alpha} \tau$.  For any $\epsilon>0$, there exists  $t_0>0$. When $t\geq t_0$, we have 
\begin{align}
	|e(t)-\overline{\alpha} \tau|\leq \epsilon, \quad t\geq t_0>\bar{t}	\label{epslondallan}
\end{align}

We  then can calculate the trajectory of $q(t)$ under $p^\ast=\frac{e}{\tau}$: $q(t) = q(0)+ q(\bar{t})-q(0) - \int_{\bar{t}}^t\big[ \exp\big(\frac{\tau}{e(t') } \big) E_1\big(\frac{\tau}{e(t') } \big)\\ -\overline{\lambda} \big] \tau t'+ \int_{\bar{t}}^t L(t)\mathrm{d}t=q(0)+ q(\bar{t})-q(0) - \int_{\bar{t}}^{t_0} \left[ \exp\big(\frac{\tau}{e(t') } \big) E_1\big(\frac{\tau}{e(t') } \big) -\overline{\lambda} \right] \tau t' +  \int_{\bar{t}}^{t_0} L(t)\mathrm{d}t 
 - \int_{t_0}^t \left[ \exp\big(\frac{\tau}{e(t') } \big) E_1\big(\frac{\tau}{e(t') } \big) \right. \\ \left. -\overline{\lambda} \right] \tau t'  + \int_{t_0}^{t} L(t)\mathrm{d}t$.	 Let $q(t_0)\triangleq q(\bar{t}) -q(0)- \int_{\bar{t}}^{t_0} \left[ \exp\left(\frac{\tau}{e(t') } \right) E_1\left(\frac{\tau}{e(t') } \right)-\overline{\lambda} \right] \tau t' +  \int_{\bar{t}}^{t_0} L(t)\mathrm{d}t $. Therefore, 
\begin{align}
	q(t)= & q(0)+ q(t_0) - \int_{t_0}^t \left[ \exp\left(\frac{\tau}{e(t') } \right) E_1\left(\frac{\tau}{e(t') } \right)-\overline{\lambda} \right] \tau t'\notag \\
	& + \int_{t_0}^{t} L(t')\mathrm{d}t' 	\notag \\
	      \overset{(a)}{\leq}& q(0)+ q(t_0) - \int_{t_0}^t \bigg[ \exp\left(\frac{1}{\overline{\alpha}  + \epsilon/\tau } \right) E_1\left(\frac{1}{\overline{\alpha} +\epsilon/\tau } \right)  \notag \\
	      & -\overline{\lambda} \bigg] \tau t' + \int_{t_0}^{t} L(t')\mathrm{d}t' 	\label{epsilondelads}
\end{align}
where $(a)$ is due to $e(t)<\overline{\alpha} \tau + \epsilon$ when $t\geq t_0$ according to (\ref{epslondallan}). Since $E_1\left(\frac{1}{\overline{\alpha}} \right) <  \overline{\lambda}$, there exists a  $\delta>0$ such that  $\overline{\lambda}< \exp\left(\frac{1}{\overline{\alpha}  + \delta } \right) E_1\left(\frac{1}{\overline{\alpha} + \delta } \right)$.	 Choosing $\epsilon=\delta \tau$ in (\ref{epsilondelads}), we obtain
\begin{align}
	q(t)-q(0)<0, \quad \text{if }	t\geq \frac{q(t_0) }{\exp\left(\frac{1}{\overline{\alpha}  + \delta } \right) E_1\left(\frac{1}{\overline{\alpha} + \delta } \right)-\overline{\lambda}}
\end{align}
Therefore, we obtain  the negative queue drift, which means that the greedy policy is a stabilizing policy \cite{stable1}, \cite{stable2}.

\section*{Appendix F: Proof of Corollary \ref{cor1sd111}}
Since $V(q,e)$ is a function of $q$ only when $e\geq e^{th}$ and hence, $\frac{V_q}{-V_e}$ is infinite, which means that $p^\ast=\frac{e}{\tau}$ according to Corollary \ref{corooptk}. For $e<e^{th}$, the water level (WL) is given in (\ref{69ersa11}). When $q>\frac{e}{\overline{\alpha}}\left(\gamma_{eu}+\overline{\lambda} - \log(\frac{e}{\tau}) \right)$, the WL is negative, which results in $p^\ast=0$.  On the other hand, when $q<\frac{e}{\overline{\alpha}}\left(\gamma_{eu}+\overline{\lambda} - \log(\frac{e}{\tau}) \right)$, the WL is positive and  increasing w.r.t.  $q$. Moreover,  $\text{derivative of (\ref{69ersa11}) w.r.t. $e$}=\frac{\overline{\alpha} (e - \overline{\alpha}  q)}{(-e(\overline{\lambda}+\gamma_{eu})+\overline{\alpha} q + e \log(\frac{e}{\tau}))^2}$. When $e<\overline{\alpha}  q$, the WL is  decreasing  w.r.t. $e$, and when $e<\overline{\alpha}  q$,  the WL is  increasing  w.r.t.  $e$.

\section*{Appendix G: Proof of Theorem \ref{sym1}}
\vspace{0.3cm}
\hspace{-0.3cm} \emph{A. Solving the HJB Equation under the Small-Data-Arrival-Energy-Limited  Regime}

According to Classification \ref{classfic}, when $e<e^{th}$, we have  the PDE in (\ref{pdenow}) with large $ \frac{V_q}{-V_e}$ and small $e$ and when $e\geq e^{th}$, we have  the PDE in (\ref{pdenow21}). Following part B in Appendix E, we can obtain the simplified PDE as in (\ref{extequasda}). For small $e$, we approximate $\exp\left({\frac{\tau}{e}}\right) E_1\left(\frac{\tau}{e} \right)$ as $\exp\left({\frac{\tau}{e}}\right) E_1\left(\frac{\tau}{e} \right) = \frac{e}{\tau}+o(1)$. Substituting it to (\ref{extequasda}) and using \emph{3.8.2.3} of \cite{pdebook}, we obtain the  solution for this case as in (\ref{finalsol2}). Furthermore, the solution for $e\geq e^{th}$ is the same as (\ref{sol231sda}). Following the same procedure in Appendix E,  it can be verified that the three conditions in (\ref{trankern}) are satisfied.

\vspace{0.3cm}
\hspace{-0.3cm} \emph{B. Verification of the  Admissibility of $\Omega^{v \ast}$ under the Small-Data-Arrival-Energy-Limited  Regime}

We first calculate the WL when $e<e^{th}$ under (\ref{finalsol2}):
\begin{align}	\label{69ersa1}
	 \frac{V_q}{-V_e}=\frac{\overline{\alpha}\tau e}{-e^2+\overline{\lambda}\tau e - \overline{\alpha}\tau q}
\end{align}
Therefore, for sufficiently large $q$, we have $\frac{V_q}{-V_e}<0$,  which means that there is no data transmission, and the energy buffer will harvest energy until $e\geq e^{th}$ when the data queue will adopt the  policy $p^\ast = \frac{e}{\tau}$.  Following the same proof as in part C in Appendix E, we can  prove  the negative data queue drift.

\section*{Appendix H: Proof of Corollary \ref{cor1wewe111}}
Since $V(q,e)$ is a function of $q$ only when $e\geq e^{th}$, we have $p^\ast=\frac{e}{\tau}$. For $e<e^{th}$, the WL is given in (\ref{69ersa1}). When $q>\frac{-e^2 + \overline{\lambda} \tau e}{\overline{\alpha}\tau}$, the WL is negative, which results in $p^\ast=0$.  On the other hand, when $q<\frac{\overline{\lambda} e+\overline{\alpha}e }{\overline{\alpha} }$, the WL is positive, which is  increasing w.r.t.  $q$. Moreover, $\text{derivative of (\ref{69ersa1}) w.r.t. $e$}=\frac{\overline{\alpha}\tau (e^2 - \overline{\alpha} \tau q)}{(-e^2+\overline{\lambda}\tau e - \overline{\alpha}\tau q)^2}$. When $e<\sqrt{\overline{\alpha} \tau q}$, the WL is  decreasing  w.r.t. $e$, and when $e>\sqrt{\overline{\alpha} \tau q}$,  the WL is  increasing  w.r.t.  $e$.

\section*{Appendix I: Proof of Theorem \ref{sym33}}

\vspace{0.3cm}
\hspace{-0.3cm} \emph{A. Solving the HJB Equation under the Small-Data-Arrival-Energy-Sufficient  Regime}

Following the same analysis as in part A in Appendix E, when $e<e^{th}$, we have the PDE  in (\ref{pdenow}), and when $e\geq e^{th}$, we have  the  PDE in (\ref{pdenow21}).  For the PDE in (\ref{pdenow}), we require
\begin{align}
	\frac{\partial V\left( 0, e\right)}{\partial q}=0	\label{bound2}
\end{align}
because  the equalities in (\ref{ssd2})  cannot be achieved and $L(t)\neq 0$ after the virtual queueing system enters the steady state. Under this regime, the system operates at the region with small  ${V_q}$. We have the following approximations for $\mathbb{E}\left[R(h,p^\ast)\right] $ in (\ref{pdenow}): $\exp\left({\frac{\tau}{e}}\right)E_1\left(\frac{\tau^2 V_q }{e^2 V_e  + e\tau V_q }  \right)= \frac{e}{\tau} \left(1-\frac{e}{\tau}\frac{-V_e}{V_q}\right)  \exp\left({- \frac{1}{\frac{V_q}{-V_e}-\frac{e}{\tau}}} \right)+ o(1)$, $E_1\left( \frac{-V_e}{V_q}\right)=\frac{V_q}{-V_e} \exp\left({\frac{V_e}{V_q}}\right)+o(1)$, $E_1\left(\frac{- \tau V_e }{V_e e + V_q \tau } \right)=\left(\frac{V_q}{-V_e}-\frac{e}{\tau}\right)  \exp\left({- \frac{1}{\frac{V_q}{-V_e}-\frac{e}{\tau}}} \right)+ o(1)$. Hence, we have
\begin{align}
	 \mathbb{E}\left[R(h,p^\ast)\right] &= \mathcal{O}\left(\frac{V_q}{-V_e}\exp\left({\frac{V_e}{V_q}}\right) \right)+o(1)=o(1)	\label{fieeq12}
\end{align}
Similarly, for $\mathbb{E}\left[p^\ast\right] $, we have
\begin{align}
	\mathbb{E}\left[p^\ast\right]  = o(1)	\label{fieeq22}
\end{align}
Substituting (\ref{fieeq12}) and (\ref{fieeq22}) into (\ref{pdenow11}), we obtain the  simplified PDE: $\frac{q}{\overline{\lambda}\tau}+V_q \overline{\lambda}+ V_e \overline{\alpha}=0$. Using \emph{3.8.2.3} of \cite{pdebook}, we obtain the following solution for this case:
\begin{align}	\label{sol221}
	V(q,e)=\frac{1}{2 \overline{\alpha}^2 \tau} e^2 - \frac{e q}{\overline{\lambda}\overline{\alpha}\tau}+ \phi\left({q}-\frac{\overline{\lambda}}{\overline{\alpha}}e\right)
\end{align}
From (\ref{bound2}), we require that $\phi'\left( -\frac{\overline{\lambda}}{\overline{\alpha}}e\right)=\left( -\frac{\overline{\lambda}}{\overline{\alpha}}e \right)\left(-\frac{1}{\lambda^2\tau}\right)$, $\forall e$. We choose $\phi(x)=x^2\left(-\frac{1}{2\lambda^2\tau}\right)$. Therefore, the final solution is given in (\ref{finalsol3}).  Furthermore, the solution for $e\geq e^{th}$ is the same as (\ref{sol231sda}). Following the same procedure  in Appendix E,  it can be verified that the three conditions in (\ref{trankern}) are satisfied.

\vspace{0.3cm}
\hspace{-0.3cm} \emph{B. Verification of the  Admissibility of $\Omega^{v \ast}$ under the Small-Data-Arrival-Energy-Sufficient  Regime}

Note that when $e<e^{th}$, under the solution in (\ref{finalsol3}), we have that $\frac{\partial V(q,e)}{\partial e}=0$ for all $q, e$, which results in $p^\ast=\frac{e}{\tau}$.  Following the same proof as in part C in Appendix E, we can  prove  the negative data queue drift.

\section*{Appendix J: Proof of Theorem \ref{optimawlcontrolreg11}}
We  prove that for sufficiently large queue $Q(0)$, for the following case 1 ($E(0)>e^{th}$) and case 2 ($E(0)<e^{th}$), we have negative data queue drift.

\emph{Case 1, $E(0)>e^{th}$:} In this case, the greedy  policy $p^\ast(n)=\frac{E(n)}{\tau}$ is adopted for all different asymptotic scenarios. Based on the energy queue dynamics in (\ref{energyQ}), we have $p^\ast(n)=\min\{\alpha(n-1), \frac{N_E}{\tau}\}$ for $n \geq 1$. We then calculate the one step queue drift as follows: for sufficiently large $Q$,
\begin{align}
	&\mathbb{E}\left[Q(n+1)-Q(n)\big|Q(n)=Q, E(n)=E \right]	\notag \\
	=&\mathbb{E}\left[\left[ Q -  \log\left(1+|h|^2 E/\tau \right) \tau  \right]^+ + \lambda  \tau - Q\right]\notag\\
	\overset{(a)}{=}&\mathbb{E}\left[-  \log\left(1+|h|^2 E/\tau \right) \tau   + \lambda  \tau \right]	\label{mmsssg}
\end{align}
where (a) is due to the fact that for  given $E$ and sufficiently large $Q$, we have $\Pr\big[ Q >  \log\big(1+|h|^2 E \big) \tau \big]=\Pr\big[|h|^2<\frac{\exp(Q)-1}{e/\tau} \big]>1-\delta$ ($\forall \delta>0$). In (\ref{mmsssg}),  if $\alpha(n-1)<\frac{N_E}{\tau}$, we have $\mathbb{E}\big[-  \log\big(1+|h|^2 E/\tau \big) \tau   + \lambda  \tau \big]=\mathbb{E}\big[-  \log\big(1+|h|^2 \alpha \big) \tau   + \lambda  \tau \big]=\big(\overline{\lambda}-\mathbb{E}\big[\exp\big({\frac{1}{{\alpha}}} \big)E_1\big(\frac{1}{{\alpha}} \big)\big]\big)\tau\overset{(b)}{<}0$, where (b) is due to (\ref{energyrequqweir}). If $\alpha(n-1)>N_E$, we have  $\mathbb{E}\big[-  \log\big(1+|h|^2 E/\tau \big) \tau   + \lambda  \tau \big]	 =\mathbb{E}\big[-  \log\big(1+|h|^2 N_E/\tau \big) \tau   + \lambda  \tau \big] \overset{(c)}{\leq} \mathbb{E}\big[-  \log\big(1+|h|^2 \overline{\alpha} \big) \tau   + \lambda  \tau \big] =\big(\overline{\lambda}-\exp\big({\frac{1}{\overline{\alpha}}} \big)E_1\big(\frac{1}{\overline{\alpha}} \big)\big)\tau\overset{(d)}{<}0$, where (c) is due to $N_E\geq N\overline{\alpha}\tau>\overline{\alpha}\tau$ and (d) is due to $\overline{\alpha}\leq \mathbb{E}\big[\exp\big(\frac{1}{\alpha} \big)E_1\big(\frac{1}{\alpha} \big)\big]<\exp\big({\frac{1}{\overline{\alpha}}} \big)E_1\big(\frac{1}{\overline{\alpha}} \big)$. Hence,  we have negative drift.

\emph{Case 2, $E(0)<e^{th}$:} In this case, we  show that there exists some positive integer $n<N$ such that the $n$-step queue drift in the discrete time queueing system is negative.  Since $E(0)<e^{th}$ and $Q(0)$ is sufficiently large, the data queue will not transmit  in the beginning. For given $\alpha$, after $\lceil \frac{e^{th}-E(0)}{\alpha} \rceil$ number of time slots where $\lceil x \rceil$ is the ceiling function, the data queue will adopt the greedy policy to transmit. To prove the existence of $n$, it is sufficient to prove that 
\begin{align}
	\mathbb{E}\left[ \left(N - \lceil \frac{e^{th}-E(0)}{\alpha} \rceil \right) \left(\exp\left(\frac{1}{\alpha}\right)E_1\left(\frac{1}{\alpha} \right)  \right)\right]>\mathbb{E}\left[ \lambda N\right]	\label{lhsrhsexp}
\end{align}	
where the L.H.S. (R.H.S.) means the departure   bits (arrival bits) before the end of the next change event of the energy arrival rate. From (\ref{lhsrhsexp}), we have $(\ref{lhsrhsexp}) \Leftarrow \mathbb{E}\left[ \left(1 - \frac{1}{N}\lceil \frac{e^{th}-E(0)}{ \alpha} \rceil \right) \left(\exp\left(\frac{1}{\alpha}\right)E_1\left(\frac{1}{\alpha} \right)  \right)\right]>\overline{ \lambda}	\overset{(e)}\Leftarrow \mathbb{E}  \left(\exp\left(\frac{1}{\alpha}\right)E_1\left(\frac{1}{\alpha} \right)  \right)	\overset{(f)}  >\overline{ \lambda}$, where $(e)$ holds for  large $N$ and $(f)$ holds due to (\ref{energyrequqweir}). Therefore,  we have negative drift for this case. Based on the Lyapunov theory \cite{stable1}, \cite{stable2}, negative state drift for both cases  leads to the stability of $Q(n)$, i.e., $\lim_{n \rightarrow \infty} \mathbb{E} \big[  Q^2(n)  \big] < \infty$.

\end{document}